\DeclareMathOperator{\sgn}{sgn}
\newtheorem{theorem}{Theorem}
\newtheorem{corollary}{Corollary}%
\newtheorem{lemma}{Lemma}%
\renewenvironment{proof}{\paragraph*{Proof:}}{\hfill$\square$}
\begin{document}

\preprint{APS/123-QED}

\title{Analytical results for the Quantum Alternating Operator Ansatz with Grover Mixer}

\author{Guilherme Adamatti Bridi}
\email{gabridi@cos.ufrj.br}
\affiliation{%
 Department of Computer Science and Systems Engineering, Federal University of Rio de Janeiro, Rio de Janeiro 22290-240, Brazil
}%

\author{Franklin de Lima Marquezino}%
\email{franklin@cos.ufrj.br, franklin.marquezino@lu.lv}
\affiliation{%
 Department of Computer Science and Systems Engineering, Federal University of Rio de Janeiro, Rio de Janeiro 22290-240, Brazil
}%
\affiliation{%
 Duque de Caxias Campus, Federal University of Rio de Janeiro, Duque de Caxias 25240-005, Brazil
}%
\affiliation{%
 Center for Quantum Computer Science, University of Latvia, Riga LV-1586, Latvia
}%

\date{\today}

\begin{abstract}
The Grover mixer operator is a variational version of Grover's diffusion operator, introduced as a mixing operator for the Quantum Alternating Operator Ansatz (QAOA) and used in a variant known as Grover Mixer QAOA (GM-QAOA). An important property of QAOA with Grover mixer is that its expectation value is invariant over any permutation of states. As a consequence, the algorithm is independent of the structure of the problem. If, on the one hand, this characteristic raises serious doubts about the capacity of the algorithm to overcome the bound of the unstructured search problem, on the other hand, it can pave the way to its analytical study. In this sense, a prior work introduced a statistical approach to analyze GM-QAOA that results in an analytical expression for the expectation value depending on the probability distribution associated with the problem Hamiltonian spectrum. Although the method provides surprising simplifications in calculations, the expression depends exponentially on the number of layers, which makes direct analytical treatment unfeasible. In this work, we extend the analysis to the more simple context of Grover Mixer Threshold QAOA (GM-Th-QAOA), a variant that replaces the phase separation operator of GM-QAOA to encode a threshold function. As a result, we obtain an expression for the expectation value independent of the number of layers and, with it, we provide bounds for different performance metrics. Furthermore, we extend the analysis to a more general context of QAOA with Grover mixer, which we called Grover-based QAOA. In that framework, which allows the phase separation operator to encode any compilation of the cost function, we generalize all the bounds by using an argument by contradiction with the optimality of Grover's algorithm on the unstructured search problem. As a result, we get the main contribution of this work, an asymptotic lower bound on the quantile achieved by the expectation value that formalizes the notion that the Grover mixer, at most, reflects a quadratic Grover-style speed-up over classical brute force. We apply that bound on the Max-Cut problem to the particular class of complete bipartite graphs and argue that the number of rounds required to achieve guarantees for any approximation ratio must grow exponentially with the number of vertices/edges, a severe limitation on the performance of the algorithm.
\end{abstract}

\maketitle


\section{Introduction} \label{sec1}

In the current paradigm of quantum computing technology, known as the Noisy Intermediate-Scale Quantum (NISQ) era~\cite{nisq}, we only have at our disposal a few noisy qubits and no error correction. In the context of quantum optimization~\cite{quantum_opt}, the Variational Quantum Algorithms (VQA)~\cite{vqa}, a class of hybrid quantum-classical algorithms, have gained prominence in recent years for the NISQ scenario. These algorithms work with parameterized quantum circuits of limited depth and number of qubits, using classical procedures to optimize the optimization parameters.

One of the most prominent cases of VQA is the Quantum Approximate Optimization Algorithm (QAOA)~\cite{qaoa}, which can be generalized to the Quantum Alternating Operator Ansatz (QAOA)~\cite{qaoaAnsatz}. QAOA is a class of algorithms derived from the Quantum Adiabatic Algorithm~\cite{qaa1, qaa2} and used heuristically to find solutions to combinatorial optimization problems. The algorithm consists of a given number of rounds of alternating application of two parametrized operators in an initial state. The first is the phase separation operator, which changes the relative phases between states by introducing bias according to the cost function (the objective function) to be optimized. The last one is the mixing operator, responsible for generating interference between the states, changing its amplitudes with the goal of amplifying the states corresponding to high-quality solutions. 

The original mixing operator of QAOA uses the transverse field mixer Hamiltonian~\cite{qaoa}, which is given by a sum of Pauli-$X$ operators. Since then, many other variations with different types of mixers have already been introduced in the literature~\cite{qaoaAnsatz, GM-QAOA, qwoaRestriction, qwoaCombinatorial, constraintMixer}. There is numerical evidence that the choice of mixing operator significantly affects the performance of QAOA~\cite{reachability, numericalEvidence1, numericalEvidence2} and therefore choosing the ideal mixer for a given optimization problem is an important research topic. One variant of particular interest is the Grover Mixer Quantum Alternating Operator Ansatz (GM-QAOA), introduced for both unconstrained~\cite{reachability, GM_unconstrained, slightlyGrover} and constrained~\cite{GM-QAOA} optimization context. The mixing operator of GM-QAOA is a variational version of Grover's diffusion operator~\cite{grover1, grover2} called Grover mixer operator. In GM-QAOA formulation, a necessary condition to the construction of the mixing operator is the existence of an efficient preparation of uniform superposition over the feasible states---which covers constrained problems like the Traveling Salesman Problem, the Max $k$-Vertex Cover, and the Discrete Portfolio Rebalancing. Alternatively, the operator of Grover mixer can be constructed with the formulation of Quantum Walk-based Optimization Algorithm (QWOA)~\cite{qwoaRestriction, qwoaCombinatorial}, a generalization of QAOA that interprets the mixing operator as a continuous-time quantum walk operator~\cite{CTQW1, CTQW2, QW_renato}. In that case, the Grover mixer operator is equivalent to QWOA on the complete graph up to a change of scale on the operator parameter. Problems such as the Capacitated Vehicle Routing~\cite{vehicle} and the Portfolio Optimization~\cite{portfolio} have already been numerically studied within the QWOA framework.

Another variant of Quantum Alternating Operator Ansatz using the Grover mixer is the so-called Grover Mixer Threshold QAOA (GM-Th-QAOA)~\cite{GM-Th-QAOA}, an algorithm combing such mixing operator with the more general Threshold QAOA, which in turn changes the original phase separation to encode a compilation of the cost function into a threshold function splitting the solution space from a value. That variant is closely related to Grover's algorithm. In the general context of QAOA, Jiang, Rieffel, and Wang~\cite{grover_tranverseField} show that the algorithm with original transverse field mixer can achieve the same asymptotic quadratic speed-up over classical brute force as the Grover's algorithm for the unstructured search problem. In the particular case of GM-Th-QAOA, the relationship is much more direct since the choice of all angles as being equal to $\pi$ reduces the algorithm to a Grover's search for marked states above (considering the original definition) a given threshold. An advantage of this variant is admitting an efficient procedure to find optimal parameters---the angles and the threshold value---that eliminates the costly outer loop parameter finding of the usual QAOA. Furthermore, it has been numerically observed that the performance of GM-Th-QAOA consistently overcomes GM-QAOA in all instances considered~\cite{GM-Th-QAOA, numericalEvidence1, numericalEvidence2}.

The performance of QAOA with the Grover mixer, individually or compared with other mixers, has already been considered in the literature. The initial thought, corroborated by numerical experiments on small instances, was that the Grover mixer would overcome transverse field mixer due to its ability to mix quickly and its global symmetry among states~\cite{reachability, qwoaCombinatorial, portfolio}. However, later experiments on larger instances indicated that this advantage soon disappeared with Grover mixer losing to transverse field mixer on unconstrained context~\cite{numericalEvidence1} and performing even exponentially worse than the clique mixer~\cite{XY1, XY2} on constrained problems~\cite{numericalEvidence2}. One can argue that the worst performance of the Grover mixer may be due to the fact that it depends only on the distribution of the solution space---by the global symmetry, the Grover mixer is invariant under any permutation of states~\cite{headley_paper}---so the algorithm does not see the structure of the optimization problem and is possibly limited to the bound of the unstructured search problem~\cite{asymptBoundGrover} (i.e., a Grover-style quadratic speed-up), drastically compromising algorithm performance on large instances. On the other hand, other mixers, such as transverse field and clique, could, in principle, overcome that limit by exploiting the underlying problem structure.

Despite the performance limitation, the Grover mixer provides a unique opportunity to get analytical studies for QAOA. Historically, analytical results are rare and sparse in QAOA literature due to the high complexity of quantum operators (see some examples on Ref.~\cite{qaoa, maxcut_ar1, maxcut_ar2, sherrington-kirkpatrick, XY1, Vijendran}). However, the independence of the structure of the Grover mixer can greatly simplify the analysis. That has been noticed by Bennett and Wang~\cite{maoa}, who used degeneracy in solution space to make edge contractions on the complete graph of QWOA. Headley and Wilhelm~\cite{headley_paper} went further and introduced a statistical approach using random variables to model the problem that led to an analytical expression of the expectation value of GM-QAOA depending only on the probability distribution associated with the problem Hamiltonian spectrum---the solution space of the optimization problem. The prominent statistical quantity of the resulting expression is the characteristic function, i.e., the Fourier transform of the probability distribution. Although the complexity of that expression scales exponentially with $\mathcal{O}(4^r)$, where $r$ is the number of layers, it does not depend on the size of the problems, which allows computing the optimal parameter (or near-optimal) in size limit for problems with instances that converge asymptotically towards a fixed distribution, such as the Number Partition Problem with i.i.d. choice of the numbers. The statistical approach was later generalized by Headley~\cite{headley_thesis} to the structure-dependent transverse field mixer and the so-called line mixer, in a work that establishes how QAOA can actually exploit the underlying structure of the problem.

\subsection{Our contributions} \label{sub1A}

Although the Headley and Wilhelm~\cite{headley_paper} method provides surprising simplifications in GM-QAOA expectation value calculations, the expression is still too complicated to obtain formal bounds on the algorithm's performance through direct analytical treatment. In the present work, with the motivation of understanding the theoretical potential of the Grover mixer on combinatorial optimization---mainly to investigate the issue concerning the quadratic Grover-like speed-up---we extended the analysis to the more simple case of GM-Th-QAOA. Using the well-known formula of the probability of Grover's algorithm and its optimality on average probability for the unstructured search~\cite{singleOptimalGrover, generalOptimalGrover}, we provide an expression for the expectation value in Theorem~\ref{thm1} with complexity independent of the number of layers, which allows us to study the asymptotic behavior of the algorithm. Rather than the characteristic function, the prominent statistical quantity here is the conditional expected value. With a closed-form expression, in Theorem~\ref{thm2}, we prove the conjecture on which the efficient method of Golden et al.~\cite{GM-Th-QAOA} of finding the optimization parameters is based (see details on Subsec.~\ref{sub3A}). Furthermore, we provide bounds on the performance of the expectation value of GM-Th-QAOA with the statistical quantities of quantile and the standard score. On the first, we get in Theorem~\ref{thm3} an asymptotic tight bound that implies the expected quadratic Grover-style speed-up of GM-Th-QAOA over classical brute force, the most relevant contribution of this work to the specific context of GM-Th-QAOA. On the second, from Lemma~\ref{lm1}, we conclude that the maximum standard score achieved by the expectation value of GM-Th-QAOA is hit by binary functions with specific ratios, and investigating the performance of these distributions, we state in Theorem~\ref{thm4} that the standard score scales at most linearly with the number of rounds. As an immediate consequence, we bound in Corollary~\ref{crlr3} the minimum number of layers to achieve guarantees for a fixed approximation ratio. Finally, we combine both bounds to argue that the algorithm's performance is closely related to the asymptotic behavior of the probability distribution on the limit of its support.

To get stronger results about the Grover mixer, we also consider a more general version of QAOA with the Grover mixer that we called \textit{Grover-based QAOA}. In that framework, which includes both GM-QAOA and GM-Th-QAOA, is allowed that the phase separation operator codifies any compilation of the cost function in a real-valued function. We generalize all bounds of GM-Th-QAOA to Grover-based QAOA with a technique used in Lemma~\ref{lm2} that consists of bounding the maximum amplification of the probability of measuring a set of degenerate states. The argument used for this consists of showing that if there is an amplification of states greater than the one provided by Grover's search, i.e., a quadratic amplification on the probability of sampled states, we contradict the optimality of average provability on unstructured search~\cite{singleOptimalGrover, generalOptimalGrover}, building an explicit algorithm that performance better than Grover's algorithm. With that limitation on the probability amplification of states, we get in Theorem~\ref{thm5} a bound on the expectation value of any Grover-based QAOA, and from it, we conclude that all bounds on Grover-based QAOA follow the same asymptotic behavior as its correspondents on GM-Th-QAOA. Specifically, Corollary~\ref{crlr4} generalize Theorem~\ref{thm3} and Theorem~\ref{thm6} is the analogous for Grover-based QAOA of both Theorem~\ref{thm4} and Corollary~\ref{crlr3}. In particular, Corollary~\ref{crlr4} gives the main general contribution of this work, the formalization of the notion that the Grover mixer is limited to the quadratic speed-up over classical brute force with an asymptotic performance, for instance, analog to the Grover Adaptive Search (GAS)~\cite{gas1, gas2, gas3, gas4}. We apply the bounds in the context of the Max-Cut problem. That way, by using the knowledge of the asymptotic behavior of the probability distribution associated with the particular case of complete bipartite graphs, we argue that for this class of graphs, the number of rounds required to achieve a fixed approximation ratio must grow exponentially with the number of vertices/edges, a severe limitation on the performance of the Grover mixer. More than that, the construction suggests that it is likely that the same happens with other classes of graphs and even with other combinatorial optimization problems.

The structure of the paper is as follows. In Sec.~\ref{sec2}, we formally define the Grover-based QAOA and introduce random variables with statistical concepts necessary to the analysis. In Sec.~\ref{sec3}, we present the analytical results of GM-Th-QAOA. In Sec.~\ref{sec4}, we provide numerical experiments involving probability distributions to emphasize and illustrate important aspects of the results of Sec.~\ref{sec3}. In Sec.~\ref{sec5}, we generalize the analytical bounds of GM-Th-QAOA to Grover-based QAOA and apply it to the Max-Cut problem. In Sec.~\ref{sec6}, we present our conclusions.

\section{Definitions} \label{sec2}

\subsection{The Grover-based QAOA} \label{sub2A}

We define Grover-based QAOA as follows. Consider an instance of a combinatorial optimization problem defined on a domain $S$ with a cost function $c(k): S \rightarrow \mathbb{R}$ to be minimized. The goal of Grover-based QAOA is to minimize the expectation value of the Hamiltonian problem $H_C$ that encodes the cost function $c(k)$ on the parameterized state ${\displaystyle | \psi^{(r)} \rangle}$. The algorithm acts in some feasible subspace of a Hilbert space of a qubit system spanned by $M = |S|$ computational basis states codifying the solutions of $S$ and ${\displaystyle | \psi^{(r)} \rangle}$ is given by
\begin{equation}
    \label{eqBa01}
    {\displaystyle | \psi^{(r)} \rangle} = U_M(\beta_r) U_P(\gamma_r) \ldots U_M(\beta_1) U_P(\gamma_1) {\displaystyle | s \rangle}.
\end{equation}
Here,
\begin{itemize}
    \item $r$ is the number of rounds/layers or the depth of QAOA.
    \item ${\displaystyle | s \rangle}$ is a uniform superposition over all states of $S$;
    \item $U_M(\beta) = e^{i \beta H_M}$ is the mixing operator, where $H_M$ is Grover mixer Hamiltonian, given by $H_M =  {\displaystyle | s \rangle} {\displaystyle \langle s |}$;
    \item $U_P(\gamma) = e^{i \gamma H_Q}$ is the phase separation operator, where $H_Q$ is a diagonal Hamiltonian that encodes a real-valued function $q(k)$ compiled from the cost function such that $H_Q {\displaystyle | k \rangle} = q(k) {\displaystyle | k \rangle}$;
    \item sets $\bm{b} = (\beta_1, \ldots , \beta_r)$ and $\bm{\gamma} = (\gamma_1, \ldots , \gamma_r)$ are the optimization parameters (or angles).
\end{itemize}

For GM-QAOA, the function $q(k)$ is precisely the cost function, and for GM-Th-QAOA, we compile $q(k)$ from $c(k)$ by a threshold function given by
\begin{equation}
    \label{eqBa02}
    T_h(k) = 
    \begin{cases}
        -1, &  c(k) \leq t\\
        0,              & \text{otherwise},
    \end{cases}
\end{equation}
for a threshold value $t$ that must be optimized. The minus sign on Eq.~\eqref{eqBa02} is because in this work we consider QAOA for minimization problems. The expectation value of GM-QAOA and GM-Th-QAOA of depth $r$ is denoted here by $E_r(\bm{\beta}, \bm{\gamma})$ and $E_r(t)$~\footnote{Despite the dependency of the expectation value concerning $\bm{\beta}$ and $\bm{\gamma}$, since we get the optimal angles at the beginning of the Sec.~\ref{sec3}, we simplified the notation from the outset.}, respectively, while $E_r$ denotes the expectation value of a generic Grover-based QAOA. 

Furthermore, although QAOA primarily must be proper to NISQ devices in such a way that the number of layers must be low, in that work, we consider the asymptotic limit of $r$ to several analytical results of the Sections~\ref{sec3} and~\ref{sec5}, as well as we simulate large values of $r$ for the numerical experiments of Sec.~\ref{sec4}. The main reason for this study is to address the issue of the asymptotic quadratic of Grover-like speed-up on the variants of QAOA with the Grover mixer.

\subsection{Random variables} \label{sub2B}

In this work, we apply the Headley and Wilhelm~\cite{headley_paper} approach, introduced for GM-QAOA, to the GM-Th-QAOA. That approach consists of using random variables as a mathematical model, writing the analytical expression of the expectation value in terms of the probability distribution associated with the solution space. If the probability distribution of a combinatorial optimization problem is known or can even be approximated, we can analytically optimize the angles $\bm{\beta}$ and $\bm{\gamma}$. That way, the expectation value can be computed directly from sampling the quantum circuit, avoiding the costly outer loop optimization procedure. For particular combinatorial optimation problems, Headley and Wilhelm~\cite{headley_paper} analytically conclude that the probability distribution of the Number Partition Problem with i.i.d. choice of the numbers follows a chi-squared distribution. On the other hand, in the empirical sense, it was observed that the solution space of the Capacitated Vehicle Routing and Portfolio Optimization problems seems to be normally distributed~\cite{vehicle, maoa}.

In particular, Headley and Wilhelm~\cite{headley_paper} use continuous random variables to model the solution space as an asymptotic approximation on the large size limit. We, in contrast, use the exact case of discrete random variables. Formally, let $X$ be the random variable of uniformly sampling an element on the set $S$ and calculating the cost function. The function 
\begin{equation}
    \label{eqBb01}
  f_X(x) = \frac{| \{ k \in S: c(k) = x \} |}{M}
\end{equation}
is the probability mass function (pmf) of $X$, and the support $R_X$ of $X$ is a countable subset of real numbers. We denote the mean, standard deviation, minimum value, and maximum value of $X$ by $\mu = \operatorname{E}[X]$, $\sigma = \sqrt{\operatorname{E}[X - \mu]^2}$, $R_X^{min}$, and $R_X^{max}$, respectively, and assume $0 < \sigma < \infty$ (ignoring the degenerate distribution). Provided that $R_X^{min} \neq 0$ and $|R_X^{min}| < \infty$, the approximation ratio is defined by
\begin{equation}
    \label{eqBb02}
  \lambda = \frac{E}{R_X^{min}}, \ E \in \{E_r(\bm{\beta}, \bm{\gamma}), E_r(t), E_r \},
\end{equation}
that is, $E$ is the expectation value of the considered QAOA variant in each context. That metric, extensively used in the context of combinatorial optimization, measures the proximity of the solution provided by the algorithm to the optimal solution. We also define the random variable $Y$ and the standard random variable $Z$ as
\begin{equation}
    \label{eqBb03}
  Y = X - \mu, \ Z = \frac{X - \mu}{\sigma}.
\end{equation}

For GM-QAOA, the summations of the original expectation value expression are replaced by the characteristic function of $X$ and its derivative (see Eq.~\ref{eqCc03} for the resulting analytical expressions). In contrast, for GM-Th-QAOA, due to the binary nature of the function $q(k)$ on that variant, the main statistical quantity of the analysis of GM-Th-QAOA is the conditional expectation. The expectation of $X$ given $X \leq x$ and the expectation of $X$ given $X > x$ are
\begin{equation}
    \label{eqBb04}
   \operatorname{E}[X | X \leq x] = \frac{G_X(x)}{F_X(x)}, \ \operatorname{E}[X | X > x] = \frac{\mu - G_X(x)}{1 - F_X(x)},
\end{equation}
respectively, where 
\begin{equation}
    \label{eqBb05}
   G_X(x) = \sum_{k \in R_X: k \leq x} k f_X(k), \ F_X(x) = \sum_{k \in R_X: k \leq x} f_X(k).
\end{equation}
The function $F_X(x)$ is the cumulative distribution function (cdf). Note that with these definitions, we can consider any $x \in \mathbb{R}$ as argument. We replace, on the original expectation value expressions of GM-Th-QAOA, the statistical quantities $F_X(x)$ and $G_X(x)$, where the argument $x$ is the threshold value $t$, getting an expression that is dependent on the ratio between the number of states below or equal to the threshold and the total number of states.

Furthermore, in part of the present work, $F_X(x)$ and $G_X(x)$ must be differentiable. However, $X$ was defined to be discrete. To bypass that obstacle, note that we can write $F_X(x)$ and $G_X(x)$ as $F_X(x) = \sum_{k \in R_X} f_X(k) \theta(x - k)$ and $G_X(x) = \sum_{k \in R_X} k f_X(k) \theta(x - k)$, where $\theta(x)$ is the Heaviside step function. The derivatives of $F_X(x)$ and $G_X(x)$ are then given by $f_X^G(x)$ and $x f_X^G(x)$ respectively, where $f_X^G(x) = \sum_{k \in R_X} f_X(k) \delta(x - k)$, with $\delta(x)$ as the Dirac delta function. The distribution $f_X^G(x)$ is an extension of the concept of probability density function for non-continuous random variables, usually called generalized probability density function~\cite{deltaFunction}.

In other situations, on the other hand, it is convenient for $F_X(x)$ to be continuous, which does not occur for generalized density functions. That way, we consider $X$ as a continuous random variable with an asymptotic approximation, in the same way as Headley and Wilhelm~\cite{headley_paper}. Consequently, all summations are replaced by integrals, and the probability mass function becomes a probability density function (pdf). However, beyond mere convenience, this assumption is justifiable since the main QAOA target is NP-Hard optimization problems with sufficient scalability to make it a good approximation.

\section{Grover Mixer Threshold QAOA analysis} \label{sec3}

An important component of our GM-Th-QAOA analysis is the result of GM-QAOA for a binary function defined here, taking the value $-1$ for elements belonging to a subset of marked elements and $0$ otherwise. We denote by $\rho$ the ratio of marked elements to the entire domain of the function. According to the statistical interpretation of GM-QAOA by Headley and Wilhelm~\cite{headley_paper}, the distribution of the binary function follows a reflected Bernoulli distribution.

The expectation value of $r$ rounds of GM-QAOA for the binary function is the negative of the probability of measuring a marked state. It is important to note that GM-QAOA, applied to this particular input problem, is equivalent to the unstructured search problem with an arbitrary number of marked elements. This equivalence arises from the fact that there are $r$ calls to an oracle for the binary function, and the objective of minimizing the expectation value aligns to maximize the probability of measuring a marked state.

Notice also that minimizing the expectation value GM-Th-QAOA with fixed threshold $t$ is equivalent to minimizing GM-QAOA with the binary function as input, and consequently to maximize the probability of measuring an element smaller or equal to the threshold. Not by coincidence, for $r = 1$, we know by Golden et al.~\cite{GM-Th-QAOA} that when $\rho \leq 0.25$, the optimal angles $\beta = \gamma = \pi$ reduces a GM-QAOA round to a Grover's iteration. Combining it with the initial condition of uniform superposition over all states, we have an emulation of Grover's algorithm. Indeed, $\rho = 0.25$ is the ratio in which Grover's algorithm with a single round reaches the probability $1$ on measuring a marked element, so that if $\rho \leq 0.25$, Grover's operators are optimal, and if $\rho > 0.25$, the angles $\beta = -\gamma = \arctan{(-\sqrt{4\rho - 1}, 2\rho - 1)}$ make the fine-tuning not to exceed the point of probability $1$.

The aforementioned ratio that reaches probability $1$, named here \textit{threshold ratio} and denoted as $\rho_{Th}(r)$ for arbitrary $r$, can be seen as the point of $\pi/2$ radians angle of the geometric interpretation of Grover's algorithm. The value of the threshold ratio is $\rho_{Th}(r) = \sin^2{(\pi/(4r + 2))}$. Up to this point, for any number of iterations, Grover's algorithm gives the maximal average probability for measuring a marked state on the unstructured search problem. The proof was done by Zalka~\cite{singleOptimalGrover} for a single marked element and generalized for an arbitrary number of marked elements by Hamann, Dunjko, and Wölk~\cite{generalOptimalGrover}. Note that, for instance, the variational approach done by Morales, Tlyachev, and Biamonte~\cite{slightlyGrover} performs slightly better than Grover's algorithm because the marked element ratio of the instances surpassed $\rho_{Th}(r)$.

Denoting by $P(\rho, r)$ the optimal probability of measuring a marked element, we can generalize GM-QAOA performance on binary function for an arbitrary number of layers to
\begin{equation}
    \label{eqC01}
P(\rho, r) = 
    \begin{cases}
        \sin^2{((2r + 1) \arcsin{(\sqrt{\rho}}))}, &  \rho \leq \rho_{Th}(r)\\
        1,              & \text{otherwise}.
    \end{cases}
\end{equation}

The first interval is the closed-form expression for Grover's algorithm probability derived due to its optimality on average probability, applicable to GM-QAOA since the expectation value of QAOA with Grover mixer operator is invariant under any permutation of states---i.e., the positions of marked elements~\cite{headley_paper}. To establish the other interval, we split into $\rho \leq \rho_{Th}(r - 1)$ and $\rho > \rho_{Th}(r - 1)$ cases. To the first, we set $\beta_j = \gamma_j = \pi$ for all $j < r$ and
\begin{equation}
    \label{eqC02}
    \begin{split}
    & \beta_r = \arctan{\left(-\sqrt{\Delta} |c_{0, \pi}^{(r - 1)}| ,  \frac{2\rho}{M} - (c_{0, \pi}^{(r - 1)})^2 \right)},
    \\ & \gamma_r = -\arctan{\left(-\frac{\sqrt{\Delta}}{c_{1, \pi}^{(r - 1)} \sgn{(c_{0, \pi}^{(r - 1)})}} , \frac{c_{0, \pi}^{(r - 1)}(2r - 1)}{c_{1, \pi}^{(r - 1)}} \right)},
    \end{split}
\end{equation}
where $\Delta = 4\rho/M - (c_{0, \pi}^{(r - 1)})^2$. Here, $c_{0, \pi}^{(r - 1)}$ and $c_{1, \pi}^{(r - 1)}$ are the amplitudes of non-market and marked states, respectively, on our $\pi$ attribution of parameters after $r - 1$ rounds. By Golden et al.~\cite{GM-Th-QAOA} analysis, $P(\rho, r) = 1$ is achieved for $\Delta > 0$. Our analysis range corresponds to the proved interval of Eq.~\eqref{eqC01} on $P(\rho, r - 1)$. Therefore, writing $(c_{0, \pi}^{r - 1})^2$ as a function of $P(\rho, r - 1)$ on the expression of $\Delta$ gives
\begin{equation}
    \label{eqC03}
    \Delta = \frac{1}{M}\left(4\rho - \frac{1 - P(\rho, r - 1)}{1 - \rho} \right).
\end{equation}
The value of $\Delta$ is increasing on our range and therefore is enough to prove that $\Delta = 0$ at the point $\rho = \rho_{Th}(r)$. Thus, applying trigonometric identities $2\sin^2{(x)} = 1 - \cos{(2x)}$ and $2\cos^2{(x)} = 1 + \cos{(2x)}$ on the Eq.~\eqref{eqC03} at that point results
\begin{equation}
    \label{eqC04}
\Delta = \frac{1}{2M(1-\rho)} \left(\cos{\left(\frac{2 \pi(2r - 1)}{4r + 2}\right)} + \cos{\left(\frac{4 \pi}{4r + 2}\right)}\right).
\end{equation}
The equality of Eq.~\eqref{eqC04} with $0$ follows from the fact that the sum of the arguments of both cosines is $\pi$ for any $r$. To finish, in the case where $\rho > \rho_{Th}(r - 1)$, there exists $k$ such that $1 \leq k < r$ on which $\rho_{Th}(k - 1) \geq \rho > \rho_{Th}(k)$. Probability $1$ can be reached with the earlier attribution on the $k$th first layers and $\beta_j = \gamma_j = 0$ to the remainder parameters to make the operators trivial. 

Note that by proving the optimality of the choice of all the angles on $\bm{\beta}$ and $\bm{\gamma}$ being equal to $\pi$ on $\rho \leq \rho_{Th}(r)$ interval, we prove that the efficient method of parameter finding of Golden et al.~\cite{GM-Th-QAOA} indeed finds the optimal angles for a fixed $t$. Established Eq.~\eqref{eqC01}, Theorem~\ref{thm1} explicitly expresses the expectation value of GM-Th-QAOA, assuming the discussed optimal angles on binary function, in terms of $P(\rho, r)$, in addition to statistical quantities of the random variables $X$ and $Y$.

\begin{theorem} \label{thm1}
For any number $r$ of layers in GM-Th-QAOA with optimal angles, the expectation value is given by
\begin{equation}
    \label{eqC05}
E_r(t) = \mu - G_Y(T) \frac{1 - P(\rho, r)/F_Y(T) } {1 - F_Y(T)},
\end{equation}
where $T = t - \mu$ and $P(\rho, r)$ is the optimal probability of measuring a marked solution in the binary function of ratio $\rho = F_Y(T)$ for GM-QAOA, given by Eq.~\eqref{eqC01}. For $F_Y(T) = 0$ and $F_Y(T) = 1$, we consider the respective limits on Eq.~\eqref{eqC05}.
\end{theorem}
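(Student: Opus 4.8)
The plan is to exploit the permutation invariance of the Grover mixer to reduce the final state to a two-valued amplitude profile, and then evaluate the expectation of the cost Hamiltonian by splitting the solution space at the threshold. First I would invoke the equivalence, established in the discussion preceding the theorem, between GM-Th-QAOA at a fixed threshold $t$ and GM-QAOA on the associated binary function with marked ratio $\rho = F_Y(T) = F_X(t)$. Because the Grover mixer is invariant under any permutation of states, after $r$ rounds every marked state (with $c(k) \le t$) carries a common amplitude and every unmarked state (with $c(k) > t$) carries a second, distinct common amplitude. Writing $M\rho$ and $M(1-\rho)$ for the respective counts and using that the total marked probability equals $P(\rho,r)$ from Eq.~\eqref{eqC01}, each marked state is measured with probability $P(\rho,r)/(M\rho)$ and each unmarked state with probability $(1-P(\rho,r))/(M(1-\rho))$.

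Next I would write the expectation value explicitly as $E_r(t) = \langle \psi^{(r)} | H_C | \psi^{(r)} \rangle = \sum_{k} c(k)\,|\langle k | \psi^{(r)} \rangle|^2$ and split the sum into the marked and unmarked blocks. The key observation is that the partial sums of the cost over each block are exactly the statistical quantities already introduced: by the definition of $G_X$ in Eq.~\eqref{eqBb05} and of $\mu$, one has $\sum_{k:\,c(k)\le t} c(k) = M\,G_X(t)$ and $\sum_{k:\,c(k) > t} c(k) = M\,(\mu - G_X(t))$. Substituting the two constant per-state probabilities then yields
\[
E_r(t) = \frac{G_X(t)\,P(\rho,r)}{\rho} + \frac{(\mu - G_X(t))\,(1 - P(\rho,r))}{1-\rho}.
\]

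It remains to convert from the $X$-statistics to the $Y$-statistics in the statement. Using the shift $Y = X - \mu$ one gets $F_Y(T) = F_X(t) = \rho$ and $G_Y(T) = G_X(t) - \mu F_X(t)$; substituting $G_X(t) = G_Y(T) + \mu\rho$ into the display above and collecting terms reduces it to $E_r(t) = \mu + G_Y(T)\,(P(\rho,r) - \rho)/(\rho(1-\rho))$, which is algebraically identical to the claimed Eq.~\eqref{eqC05}. Finally I would treat the boundary cases $F_Y(T)\in\{0,1\}$, where $1/\rho$ or $1/(1-\rho)$ is singular, by taking the indicated limits: here one checks that $G_Y(T)$ (respectively $\mu - G_X(t)$) vanishes against the singular denominator so that the limit is finite and coincides with the expectation taken over the single non-empty block.

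The main obstacle is not the algebra, which is routine bookkeeping of two blocks plus a change of variable, but the justification that the expectation of the \emph{original} cost Hamiltonian $H_C$---rather than the threshold Hamiltonian $H_Q$ that actually drives the dynamics---depends only on the two-valued probability profile. This hinges entirely on the permutation invariance of the Grover mixer, which forces all states within each block to be equiprobable, so that the only problem-dependent data entering $E_r(t)$ are the block sums $G_X(t)$ and $\mu - G_X(t)$ together with the single scalar $P(\rho,r)$. Once this equidistribution is granted, the remaining steps follow mechanically.
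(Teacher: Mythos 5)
Your proposal is correct and follows essentially the same route as the paper's proof: write the final state with one common amplitude per block (marked/unmarked relative to the threshold), split the cost expectation into the two block sums $M G_X(t)$ and $M(\mu - G_X(t))$, convert to $Y$-statistics via $G_X(t) = G_Y(T) + \mu F_Y(T)$, and handle $F_Y(T) \in \{0,1\}$ by limits. The only cosmetic difference is that you make explicit the permutation-invariance justification for the two-valued amplitude profile, which the paper leaves implicit in writing Eq.~\eqref{eqC06}.
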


\begin{proof}
The final state of GM-Th-QAOA for arbitrary $r$ can be expressed using arbitrary amplitudes $c_1^{(r)}$ for the set of states below or equal the threshold, and $c_0^{(r)}$ for the set of states above the threshold,
\begin{equation}
    \label{eqC06}
{\displaystyle | \psi^{(r)} \rangle} = c_1^{(r)} \sum_{k \in S: c(k) \leq t} {\displaystyle | k \rangle} + c_0^{(r)} \sum_{k \in S: c(k) > t} {\displaystyle | k \rangle},
\end{equation}
and the expectation value calculated by 
\begin{equation}
    \label{eqC07}
E_r(t) = |c_1^{(r)}|^2 \sum_{k \in S: c(k) \leq t} c(k) + |c_0^{(r)}|^2 \sum_{k \in S: c(k) > t} c(k).
\end{equation}
These summations can be performed equivalently using the pmf of $X$. For each possible cost, $x \in R_X$, we count the number of solutions $k$ such that $c(k) = x$, i.e., $M f_X(x)$. Then,
\begin{equation}
    \label{eqC08}
    \begin{split}
E_r(t) & = M |c_1^{(r)}|^2 \sum_{x \in R_X: \ x \leq t} x f_X(x) \\ & + M |c_0^{(r)}|^2 \sum_{x \in R_X: \ x > t} x f_X(x).
\end{split}
\end{equation}
Let $m$ be the number of states smaller or equal to $t$. We assume $0 < m < M$. Using $F_X(t) = m/M$, the definition of $G_X(t)$ and noting that the probability $P(\rho, r)$ of measuring a state smaller or equal to $t$ is $m |c_1^{(r)}|^2$, we can rewrite the expression as
\begin{equation}
    \label{eqC09}
E_r(t) = \frac{G_X(t)} {F_X(t)} P(\rho, r) + \frac{\mu - G_X(t)}{1 - F_X(t)} (1 - P(\rho, r)).
\end{equation}
The random variable $Y$ is introduced on our expression by using the properties $F_X(t) = F_Y(T)$ and $G_X(t) = \mu F_Y(T) + G_Y(T)$. Thus,
\begin{equation}
    \label{eqC10}
E_r(t) = \mu + \frac{G_Y(T)} {F_Y(T)} P(\rho, r) - \frac{G_Y(T)}{1 - F_Y(T)} (1 - P(\rho, r)),
\end{equation}
and with some algebraic manipulations, Eq.~\eqref{eqC05} follows. Now, note that the limit of $E_r(t)$ on $F_Y(T) = 0$ or $F_Y(T) = 1$ is $\mu$, the desired value since both cases represent a uniform superposition on the final GM-Th-QAOA state. 
\end{proof}

Some aspects and consequences of the above theorem are worth commenting on. Firstly, for a fixed $t$, we have an expression for the expectation value $E_r(t)$ with complexity independent of $r$. That allows us to analyze distributions with an arbitrary number of layers, far beyond the Headley and Wilhelm~\cite{headley_paper} approach of GM-QAOA, and look at the asymptotic behavior on the number of layers.

Proceeding, by Eq.~\eqref{eqC09} and the definition of conditional expectation,
\begin{equation} 
\label{eqC11}
E_r(t) = \operatorname{E}[X | X \leq t] P(\rho, r) + \operatorname{E}[X | X > t] (1 - P(\rho, r)).
\end{equation}
The above equation gives an important intuition on the operation of GM-Th-QAOA. The expectation value $E_r(t)$ is a weighted sum by $P(\rho, r)$ of the expected value of the two sets split by the threshold value $t$. The intuitive notion that $E_r(t) < \mu$ on $0 < F_Y(T) < 1$, holds from $P(\rho, r)/F_Y(T) > 1$, $G_Y(T) < 0$. The ratio $P(\rho, r)/F_Y(T)$, denoted by $\eta$, represents the amplification of the probability of measuring a marked state after the application of the operators.

Moreover, $E_r(t)$ admits to be written as a polynomial function in terms of $F_Y(T)$ on $F_Y(T) \leq \rho_{Th}(r)$ interval. Using the trigonometric identities
\begin{equation}
    \label{eqC12}
    \sin{(nx)} = \sum_{k = 0}^{\lfloor \frac{n - 1}{2} \rfloor} (-1)^k \binom{n}{2k + 1} \sin^{2k + 1}{(x)} \cos^{n - 2k - 1}{(x)},
\end{equation}
and $\cos{(\arcsin{(x)})} = \sqrt{1 - x^2}$, we can express $P(\rho, r)$ as
\begin{equation} 
    \label{eqC13}
    P(\rho, r) = 
     \rho \left(\sum_{k = 0}^{r} (-1)^k \binom{2r + 1}{2k + 1} \rho^k (1 - \rho)^{r - k}  \right)^2.
\end{equation}
Plugging it into Eq.~\eqref{eqC05} gives for $E_r(t)$ a polynomial expression on $F_Y(T)$ with order $2r - 1$. In particular, for $r = 1$, $P(\rho, 1) = \rho(4 \rho - 3)^2$ and then $E_1(t) = \mu + 8 G_Y(T)(1 - 2 F_Y(T))$.

Now, if $F_Y(T) \geq \rho_{Th}(r)$ then $P(\rho, r) = 1$, and immediately
\begin{equation}
    \label{eqC14}
    \begin{split}
E_r(t) & = \mu + \frac{G_Y(T)}{F_Y(T)} = \mu + E\left[Y| Y \leq T\right] \\ & = E\left[X| X \leq t \right],
    \end{split}
\end{equation}
which induces a tight lower bound on GM-Th-QAOA performance and an upper bound on $t_{opt}$, both given  by Corollary~\ref{crlr1}.

\begin{corollary} \label{crlr1}
For any number $r$ of layers in GM-Th-QAOA, the optimal threshold value is bounded by $t_{opt} \leq \tau$, where $\tau$ is the minimum $t$ in which $P(\rho, r) = 1$, and we have the tight bound in the optimal expectation value given by
\begin{equation}
    \label{eqC15}
E_r(t)_{opt} \leq E\left[X| X \leq \tau \right] \leq \tau.
\end{equation}
\end{corollary}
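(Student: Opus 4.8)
The plan is to exploit the closed form of the expectation value in the saturated regime, Eq.~\eqref{eqC14}, together with monotonicity of the conditional mean. The starting observation is that for every threshold $t$ with $F_Y(T) \geq \rho_{Th}(r)$---equivalently $t \geq \tau$, since $\tau$ is by definition the smallest $t$ at which $P(\rho, r) = 1$---Eq.~\eqref{eqC14} collapses the expectation value to $E_r(t) = \operatorname{E}[X \mid X \leq t]$. The whole argument then reduces to understanding how this conditional expectation behaves as a function of $t$ on the region $t \geq \tau$.

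First I would establish the auxiliary fact that $\operatorname{E}[X \mid X \leq t]$ is non-decreasing in $t$. Given $t_1 < t_2$, I split the event $\{X \leq t_2\}$ into $\{X \leq t_1\}$ and $\{t_1 < X \leq t_2\}$, so that $\operatorname{E}[X \mid X \leq t_2]$ becomes a convex combination of $\operatorname{E}[X \mid X \leq t_1]$ and $\operatorname{E}[X \mid t_1 < X \leq t_2]$, with weights proportional to the respective probabilities. Since every value contributing to the second conditional mean strictly exceeds $t_1$, while $\operatorname{E}[X \mid X \leq t_1] \leq t_1$ (a mean of values all at most $t_1$), the second mean dominates the first; hence mixing it in can only raise the average, giving $\operatorname{E}[X \mid X \leq t_1] \leq \operatorname{E}[X \mid X \leq t_2]$.

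With monotonicity in hand the corollary follows immediately. On the region $t \geq \tau$ the expectation value $E_r(t) = \operatorname{E}[X \mid X \leq t]$ is non-decreasing, so any choice $t > \tau$ satisfies $E_r(t) \geq E_r(\tau)$; thus no threshold strictly larger than $\tau$ can improve the objective, and an optimal threshold obeys $t_{opt} \leq \tau$. Because $\tau$ is itself a feasible threshold, the optimal value is bounded by $E_r(t)_{opt} \leq E_r(\tau) = \operatorname{E}[X \mid X \leq \tau]$, and the final inequality $\operatorname{E}[X \mid X \leq \tau] \leq \tau$ is again the statement that a conditional mean of values all at most $\tau$ cannot exceed $\tau$. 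The bound is tight in the sense that it reproduces the exact expectation value at $t = \tau$, and it is saturated whenever the global optimum is attained there, i.e.\ when no $t < \tau$ does strictly better.

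The main obstacle---or rather the only nontrivial ingredient---is the monotonicity of $\operatorname{E}[X \mid X \leq t]$. Although a standard probabilistic fact, in the discrete setting one should phrase it as a statement about weighted averages over $R_X$ rather than differentiate blindly, since $F_X$ and $G_X$ are step functions; the convex-combination argument above sidesteps any differentiability concern and handles ties at the threshold cleanly.
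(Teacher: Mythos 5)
Your core argument---restricting to $t \geq \tau$, where Eq.~\eqref{eqC14} gives $E_r(t) = \operatorname{E}[X \mid X \leq t]$, and then invoking monotonicity of the conditional mean to conclude $t_{opt} \leq \tau$ and $E_r(t)_{opt} \leq \operatorname{E}[X\mid X\leq\tau] \leq \tau$---is correct and is essentially the paper's own route. What the paper compresses into ``by the definition of conditional expectation, the minimum $t$ gives the best expectation value among the candidates of threshold in which $P(\rho,r)=1$'', you make explicit with the convex-combination splitting of $\{X \leq t_2\}$ into $\{X \leq t_1\}$ and $\{t_1 < X \leq t_2\}$; that is a clean way to justify the monotonicity in the discrete setting without differentiating step functions. (One very minor omission: you implicitly assume $\tau$ exists; the paper notes that taking $t = R_X^{max}$ gives $F_X(t) = 1 \geq \rho_{Th}(r)$, so the set of thresholds with $P(\rho,r)=1$ is nonempty.)

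There is, however, a genuine gap on the tightness claim, which is part of the corollary's statement. Your closing remark---that the bound ``reproduces the exact expectation value at $t=\tau$'' and ``is saturated whenever the global optimum is attained there''---reinterprets tightness rather than proving it: it leaves open whether any distribution actually attains equality in \emph{both} inequalities of Eq.~\eqref{eqC15}. The paper closes this by exhibiting one: take the two-point (binary) function with marked fraction $\rho = F_X(\tau)$ chosen so that $P(\rho,r)=1$. For that instance all probability mass below the threshold is concentrated at the single value $\tau = -1$, so $E_r(t)_{opt} = \operatorname{E}[X\mid X\leq \tau] = \tau = -1$ and both inequalities hold with equality. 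Without such an instance, the chain in Eq.~\eqref{eqC15} is merely an upper bound; note in particular that the second inequality $\operatorname{E}[X\mid X\leq\tau]\leq\tau$ is strict whenever the conditional distribution below $\tau$ is non-degenerate, so saturation genuinely requires a special distribution, not just a special threshold.
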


\begin{proof}
First, note that there always exists a $t$ for which $P(\rho, r) = 1$ because taking $t$ as the maximum solution gives $F_X(t) = 1$. Then, by the definition of conditional expectation, the minimum $t$ gives the best expectation value among the candidates of threshold in which $P(\rho, r) = 1$. To conclude the tightness of the bound, we consider the binary function with a parameter $\rho$ such that $P(\rho, r) = 1$. In that case, $\rho = F_X(\tau)$ and then $E_r(t) = E\left[X| X \leq \tau \right] = \tau = -1$. Finally, if $t_{opt} \neq \tau$, $F_X(t_{opt})$ is smaller than $\rho_{Th}(r)$ and therefore $t_{opt} < \tau$.  
\end{proof}

\subsection{Threshold curve problem} \label{sub3A}

An open question in GM-Th-QAOA is about the behavior of the expectation value as a function of the threshold, fixing optimal $\bm{\beta}$ and $\bm{\gamma}$ for a given choice of $t$. In the present work, we call this function a \textit{threshold curve}. It has been numerically observed that the threshold curve decreases monotonically up to a valley value and then increases monotonically~\cite {GM-Th-QAOA}---recall that we are considering minimization problems. In the numerical experiments of Sec.~\ref{sec4}, we show in Fig.~\ref{curves} an example of that behavior using quantities for abscissa and ordinate in which the threshold curve is equivalent. If that property holds in general, a modified binary search is applicable to find the optimal threshold, representing an exponential improvement over the required linear search otherwise.

Since Theorem~\ref{thm1} gives a closed-form expression for the expectation value, we can directly tackle the problem of the threshold curve. To include the possibility of the threshold curve being constant for a consecutive pair of points, the considered behaviors in our proof are non-increasing and non-decreasing monotonicity instead of strictly decreasing and strictly increasing, respectively. We proved that the threshold curve must change its monotonicity only one time by establishing the derivative change of the sign one time. Using the step function form of $F_Y(T)$ and $G_Y(T)$, we can extend results about monotonicity for the original discrete random variable since it preserves the monotone behavior between any pair of consecutive points of the support. That was done in Theorem~\ref{thm2}, proved in Appendix~\ref{ap1}.

\begin{theorem} \label{thm2}
For any number of layers in GM-Th-QAOA, the threshold curve is monotonically non-increasing up to a valley value and monotonically non-decreasing from there.
\end{theorem}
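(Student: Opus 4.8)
The plan is to differentiate the closed-form expression of Theorem~\ref{thm1} and show that the derivative of the threshold curve changes sign exactly once, from nonpositive to nonnegative. It is convenient to first collapse Eq.~\eqref{eqC10} into the single fraction
\[
E_r(t) = \mu + G_Y(T)\,\frac{P(\rho,r) - F_Y(T)}{F_Y(T)\,(1 - F_Y(T))}, \qquad \rho = F_Y(T),
\]
which also makes transparent why $E_r(t) \le \mu$ on $0 < F_Y(T) < 1$, since $G_Y(T) \le 0$ while $P(\rho,r) \ge \rho$. I would then treat $X$ (hence $Y$) through the generalized-density representation, so that $F_Y$ and $G_Y$ are differentiable with $F_Y'(T) = f_Y^G(T) \ge 0$ and $G_Y'(T) = T f_Y^G(T)$. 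Differentiating and factoring out the nonnegative generalized density gives $\frac{dE_r}{dT} = f_Y^G(T)\,\Phi(T)$ for a cofactor $\Phi$ depending on $F_Y(T)$, $G_Y(T)$, $T$, and $dP/d\rho$. The whole statement thus reduces to proving that $\Phi$ has a single sign change, from negative to positive.

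I would split the analysis at the threshold ratio $\rho_{Th}(r)$. On the interval $F_Y(T) \ge \rho_{Th}(r)$ we have $P(\rho,r) = 1$, and by Eq.~\eqref{eqC14} the curve equals $E_r(t) = \operatorname{E}[X \mid X \le t]$. This conditional mean is manifestly non-decreasing in $t$: enlarging the threshold appends only values exceeding the current conditional mean, which can only raise it; equivalently $\Phi(T) \ge 0$ because $T \ge \operatorname{E}[Y \mid Y \le T]$. Consequently the right-hand branch of the curve is non-decreasing, and any change of monotonicity must occur at or before the point where $F_Y(T) = \rho_{Th}(r)$.

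The crux is the complementary interval $F_Y(T) \le \rho_{Th}(r)$, where $P(\rho,r) = \sin^2((2r+1)\arcsin\sqrt{\rho})$. Here I would substitute $F_Y(T) = \sin^2\theta$ with $\theta \in [0, \pi/(4r+2)]$, turning $P$ into $\sin^2((2r+1)\theta)$ and reducing $\Phi$ to a trigonometric/polynomial expression in $\theta$ coupled to $G_Y$ and $T$. The goal is to establish two facts: that $\Phi < 0$ near $\theta = 0$ — which holds because there $G_Y(T)$ is small while $T<0$ and the amplification factor $W=(P-\rho)/(\rho(1-\rho))$ is bounded and positive — and that $\Phi$ vanishes at most once on the interval. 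I expect this single-crossing property to be the \emph{main obstacle}, since it requires controlling the derivative $dP/d\rho$ of the stretched sine and its interplay with $G_Y(T)$; the natural route is to show that the relevant bracketed factor is monotone in $\theta$ (or has a unique root), so that its sign flips only once before the curve joins the non-decreasing branch of the previous paragraph.

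Finally, I would transfer the conclusion from the continuous/generalized setting back to the original discrete random variable. Because $f_Y^G(T)$ is supported on the points of $R_X$ shifted by $\mu$, the curve is constant between consecutive support points, and the sign of $\Phi$ controls the direction of each jump; the single sign change of $\Phi$ therefore yields a discrete threshold curve that is monotonically non-increasing up to a valley and monotonically non-decreasing afterwards. The boundary cases $F_Y(T) = 0$ and $F_Y(T) = 1$ are handled by the limits established in Theorem~\ref{thm1}, where $E_r(t) \to \mu$, consistent with the endpoints of the valley-shaped curve.
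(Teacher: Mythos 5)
Your outline reproduces the paper's strategy almost step for step: differentiate the closed form of Theorem~\ref{thm1} through the generalized density $f_X^G$, factor out that nonnegative density, dispose of the branch $F_Y(T) \geq \rho_{Th}(r)$ by noting the curve there equals the conditional mean $\operatorname{E}[X \mid X \leq t]$ (non-decreasing in $t$), and then argue a single sign change of the remaining cofactor on the Grover branch $F_Y(T) \leq \rho_{Th}(r)$, finally transferring to the discrete variable via the step-function structure. All of those reductions are sound, and your single-fraction rewriting of Eq.~\eqref{eqC10} is correct. The problem is that you stop exactly where the theorem's actual difficulty begins: you name the single-crossing property on the Grover branch as ``the main obstacle'' and only express the hope that ``the relevant bracketed factor is monotone in $\theta$ (or has a unique root),'' without proving it. That claim is the entire content of the theorem, and it is not routine.

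Concretely, the cofactor $\Phi$ is not monotone in any obvious way. The paper has to multiply it by a carefully chosen positive factor $(1-\rho)D_1/\eta$ with $D_1 = \sqrt{\rho(1-\rho)}/\bigl(R\arcsin(\sqrt{\rho})\bigr)$, $R = 2r+1$, so that it decomposes as
\begin{equation*}
T\left(1 - \frac{1}{\eta}\right) D_1 \;-\; G_Y(T)\,\frac{D_2}{\eta} \;+\; G_Y(T)\, D,
\end{equation*}
and then, under the substitution $u = R\arcsin(\sqrt{\rho})$, proves four separate monotonicity facts: $\eta$ and $D_1$ strictly decreasing, $D_2$ strictly increasing, and---the hardest step---$D = \bigl(R\cot(u) - \cot(u/R) + \tan(u/R)\bigr)/u$ strictly decreasing and negative. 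That last fact requires two rounds of differentiation and a term-by-term Taylor-series comparison of $\tan^2(u) + 2\sin^2(u)$ against $\tfrac{R^2}{4}\bigl(\tan^2(2u/R) + 2\sin^2(2u/R)\bigr)$, using the Bernoulli-number expansion of $\tan^2$. Only after these facts, combined with $G_Y(T) \leq 0$ non-increasing, $\eta > 1$, and a case split on the sign of $T$, does each term become non-decreasing in $T$, which (starting from a negative value as $T \to -\infty$) yields the single crossing. Until you supply this analysis or a genuine substitute for it, your proposal is a correct plan rather than a proof.
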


\subsection{Asymptotic tight bound on quantile} \label{sub3B}

An alternative metric on the performance of GM-Th-QAOA over the expectation value and approximation ratio is the quantity $F_X(E_r(t))$, which corresponds to the quantile in which the expectation value of GM-Th-QAOA is associated. That metric has as a strong point the possibility of comparing the obtained result with the spectrum of the distribution itself, which allows a fair comparison between different distributions and optimization problems. The immediate upper bound $F_X(E_r(t)_{opt}) \leq F_X(\tau)$ can be obtained by applying the cdf to both sides of the inequality in Corollary~\ref{crlr1}. If we assume a continuous distribution, there is a $t$ in which $F_X(t) = \rho_{Th}(r)$ for all $r$ and then $F_X(\tau) = \rho_{Th}(r)$ for any $r$. That way,
\begin{equation}
    \label{eqCb01}
F_X(E_r(t)_{opt})  \leq \sin^2{\left( \frac{\pi}{4r + 2} \right)} = \mathcal{O} \left(\frac{1}{r^2}\right).
\end{equation}
The assumption $X$ as a continuous random variable is convenient since discussing quantiles is more naturally suited for such distributions. We demonstrate in Theorem~\ref{thm3} that the asymptotic bound of Eq.~\eqref{eqCb01} is tight. To do so, we rely on the supposition that $R_X^{min}$ has a finite and non-zero value in pdf. That assumption is also quite reasonable since all target problems of QAOA have a finite optimal value, and the limits $f_X(R_X^{min}) \rightarrow 0$ or $f_X(R_X^{min}) \rightarrow \infty$ are just convenient mathematical abstractions in some situations.

\begin{theorem} \label{thm3}
For GM-Th-QAOA, if $X$ is a continuous distribution and $f_X(R_X^{min}) = a$, where $0 < a < \infty$, then the quantile achieved by the optimal expectation value is asymptotically given by
\begin{equation}
    \label{eqCb02}
F_X(E_r(t)_{opt}) = \Theta \left(\frac{1}{r^2}\right).
\end{equation}
\end{theorem}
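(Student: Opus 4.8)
The plan is to complement the upper bound $F_X(E_r(t)_{opt}) = \mathcal{O}(1/r^2)$, already obtained in Eq.~\eqref{eqCb01}, with a matching lower bound $F_X(E_r(t)_{opt}) = \Omega(1/r^2)$. First I would reduce the quantile to the gap above the optimum. Since $f_X$ is continuous at $R_X^{min}$ with $f_X(R_X^{min}) = a \in (0,\infty)$, for any small $\epsilon>0$ there is a right-neighborhood of $R_X^{min}$ on which $a-\epsilon \le f_X \le a+\epsilon$; integrating gives $F_X(x)=\Theta(x - R_X^{min})$ as $x\downarrow R_X^{min}$. Hence the quantile and the gap $E_r(t)_{opt} - R_X^{min}$ are of the same asymptotic order, and it suffices to prove $E_r(t)_{opt} - R_X^{min} = \Omega(1/r^2)$.

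By Corollary~\ref{crlr1} the optimum is attained at some $t_{opt}\le\tau$, so I may restrict attention to thresholds $t \le \tau$, i.e.\ to $\rho = F_X(t) \le \rho_{Th}(r)$, on which $P(\rho,r)$ increases monotonically from $0$ to $1$. Starting from the weighted-sum form Eq.~\eqref{eqC11} and subtracting the minimum, I would write
\[ E_r(t) - R_X^{min} = \big(\operatorname{E}[X\mid X\le t]-R_X^{min}\big)P + \big(\operatorname{E}[X\mid X> t]-R_X^{min}\big)(1-P), \]
where both brackets are nonnegative. The second bracket is controlled for free: because $\mu$ is a convex combination of the two conditional means and $\operatorname{E}[X\mid X\le t]\le\operatorname{E}[X\mid X>t]$, we have $\operatorname{E}[X\mid X>t]\ge\mu$, so the bracket is at least the positive constant $D := \mu - R_X^{min}$.

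The core of the argument is a dichotomy on $P$. If $P \le 1/2$, the second term alone yields $E_r(t)-R_X^{min}\ge D(1-P)\ge D/2$, a constant that dominates $1/r^2$. If $P > 1/2$, then on the admissible interval $P = \sin^2\!\big((2r+1)\arcsin\sqrt{\rho}\big) > 1/2$ forces $\rho > \sin^2\!\big(\pi/(4(2r+1))\big) = \Omega(1/r^2)$; meanwhile the local density estimate gives $\operatorname{E}[X\mid X\le t] - R_X^{min} \ge \tfrac{(a-\epsilon)}{2(a+\epsilon)^2}\,\rho$, whence the first term alone yields $E_r(t)-R_X^{min}\ge \tfrac{1}{2}\big(\operatorname{E}[X\mid X\le t]-R_X^{min}\big) = \Omega(\rho) = \Omega(1/r^2)$. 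Since the two cases exhaust $\rho\in[0,\rho_{Th}(r)]$, the whole threshold curve is bounded below by $\Omega(1/r^2)$, hence so is its minimum $E_r(t)_{opt} - R_X^{min}$; translating back through $F_X(x)=\Theta(x-R_X^{min})$ closes the argument and, with Eq.~\eqref{eqCb01}, gives $\Theta(1/r^2)$.

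The main obstacle is making these local approximations uniform and rigorous. One must verify that the thresholds that matter (those with $\rho \le \rho_{Th}(r)\to 0$) are genuinely driven into the neighborhood of $R_X^{min}$ where $a-\epsilon\le f_X\le a+\epsilon$, so that both the linearization $F_X(x)\approx a(x-R_X^{min})$ and the estimate $\operatorname{E}[X\mid X\le t]-R_X^{min}=\Theta(t-R_X^{min})$ hold with controlled $(1\pm\epsilon)$ factors as $r\to\infty$. Carrying this $\epsilon$-bookkeeping through the dichotomy, while exploiting the monotonicity of $P$ on the admissible interval, is the delicate part; by contrast, the trigonometric estimate bounding $\rho$ from below and the final translation from the gap to the quantile are routine.
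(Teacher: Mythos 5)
Your proposal is correct, and it rests on the same two pillars as the paper's own argument---(a) a threshold whose quantile is too small cannot help, because then $P(\rho,r)$ is small and $E_r(t)$ stays near $\mu$, and (b) near $R_X^{min}$ the conditional mean $\operatorname{E}[X \mid X \le t]$ sits at a constant fraction of the threshold's quantile---but it executes both pillars differently. The paper first pins down $F_X(t_{opt}) = \Theta(1/r^2)$ by a contradiction-with-optimality argument (if $F_X(t) \notin \Omega(1/r^2)$ then $P(\rho,r) \to 0$ and $E_r(t) \to \mu$), then uses $E_r(t)_{opt} \ge \operatorname{E}[X \mid X \le t]$ and evaluates the limit $L = \lim_{t \to R_X^{min}} F_X(\operatorname{E}[X \mid X \le t])/F_X(t) = 1/2$ by two applications of L'H\^opital's rule. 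You instead make pillar (a) quantitative via the dichotomy at $P = 1/2$ (small $P$ keeps the gap above $D/2$; large $P$ forces $\rho = \Omega(1/r^2)$), and you replace the L'H\^opital computation by an elementary $(a \pm \epsilon)$ sandwich on the density, working throughout with the gap $E_r(t) - R_X^{min}$ and linearizing the cdf only at the end. Your route is more elementary, produces explicit constants, and avoids the slightly informal ``which of course is not an optimal threshold'' step; the $\epsilon$-uniformity bookkeeping you flag is genuinely routine, since every admissible threshold satisfies $F_X(t) \le \rho_{Th}(r) \to 0$ and is therefore forced into the neighborhood of $R_X^{min}$ for large $r$. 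What the paper's route buys in exchange is that the limit $L$ is not a throwaway device: it is reused in Subsec.~\ref{sub3D}, in Sec.~\ref{sec4}, and in Eq.~\eqref{eqE07}, so computing it exactly (rather than only up to constants) pays off later. Note finally that both arguments share the same unstated regularity requirement that $f_X$ be continuous at $R_X^{min}$---the paper needs it for $f_X(\operatorname{E}[X \mid X \le t]) \to a$ and $f_X(t) \to a$ inside L'H\^opital, while you assume it outright---so this is not a gap relative to the paper.
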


\begin{proof}
The upper bound has already been established. For the lower bound, let $t$ be a fixed optimal threshold. We claim that $F_X(t) = F_Y(T) = \Theta(1/r^2)$. If $F_X(t) \notin \Omega(1/r^2)$, then
\begin{equation}
    \label{eqCb03}
    P(\rho, r) = \sin^2{((2r + 1)\arcsin{(\sqrt{\rho}})))} \rightarrow 0,
\end{equation}
on large $r$
\begin{equation}
    \label{eqCb04}
    E_r(t)_{opt} \rightarrow \mu + \operatorname{E}[Y| Y \leq T] P(\rho, r),
\end{equation}
and as $\operatorname{E}[Y| Y \leq T]$ must be bounded by assumption, $E_r(t)_{opt} \rightarrow \mu$, which of course is not an optimal threshold.

Now, consider the bound
\begin{equation}
    \label{eqCb05}
    \begin{split}
    E_r(t)_{opt} \geq \operatorname{E}[X| X \leq t],
    \end{split}
\end{equation} 
which follows from $G_X(t) \leq 0$ and $P(\rho, r) \leq 1$ on Eq.~\eqref{eqC10}. As $F_X(t) = \Theta(1/r^2)$, we just need to prove that the limit
\begin{equation}
    \label{eqCb06}
   L =  \lim_{t \rightarrow R_X^{min}} \frac{F_X(\operatorname{E}[X|X \leq t])}{F_X(t)} 
\end{equation}
is always non-zero finite. Note that the fraction on the limit calculates the cdf of the expected value of distribution $X$ given $X \leq t$. The limit is an indeterminate of $0/0$ type. Applying L'Hôpital's rule using derivatives $F'_X(t) = f_X(t)$ and $G'_X(t) = t f_X(t)$ gives
\begin{equation}
    \label{eqCb07}
L =  a \ {\lim_{t \rightarrow R_X^{min}}} \frac{t F_X(t) - G_X(t)}{F_X(t)^2},
\end{equation}
which is another $0/0$ indeterminate. Therefore,
\begin{equation}
    \label{eqCb08}
L = a \ \lim_{t \rightarrow R_X^{min}} \frac{t f_X(t) + F_X(t) - t f_X(t)}{2 F_X(t) f_X(t)} = \frac{1}{2},
\end{equation}
as desired.

\end{proof}

The theorem establishes a tight quadratic Grover-style speed-up of GM-Th-QAOA over classical brute force in the asymptotic limit, as it takes $r$ rounds to attain an expectation value at a quantile of order $1/r^2$. The result is expected since the optimal angles of GM-Th-QAOA reduce it to an execution of Grover's algorithm.

\subsection{Upper bounds on the standard score} \label{sub3C}

By introducing the auxiliary random variable $Y$, we neutralize the impact of the mean of $X$ with the trivial term $\mu$ in the expression of expectation value. We can do an analog procedure for the standard deviation using the random variable $Z$. With the properties $F_Y(T) = F_Z(T/\sigma) $ and $G_Y(T) = \sigma G_Z(T/\sigma)$ we immediately prove Corollary~\ref{crlr2}, another corollary of Theorem~\ref{thm1}.

\begin{corollary} \label{crlr2}
For any number $r$ of layers in GM-Th-QAOA with optimal angles, the expectation value is given by 
\begin{equation}
    \label{eqCc01}
E_r(t) = \mu - \sigma G_Z(T/\sigma) \frac{1 - P(\rho, r)/F_Z(T/\sigma) } {1 - F_Z(T/\sigma)},
\end{equation}
where $\rho = F_Z(T/\sigma)$.
\end{corollary}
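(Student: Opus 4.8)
The plan is to obtain Corollary~\ref{crlr2} as a direct rewriting of Theorem~\ref{thm1}, Eq.~\eqref{eqC05}, after re-expressing the statistical quantities of $Y$ in terms of those of the standardized variable $Z = Y/\sigma$ defined in Eq.~\eqref{eqBb03}. The only substantive ingredient is the pair of scaling identities $F_Y(T) = F_Z(T/\sigma)$ and $G_Y(T) = \sigma\, G_Z(T/\sigma)$, so I would establish those first and then substitute.

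First I would prove the cdf identity. Since $Z = Y/\sigma$ with $\sigma > 0$, the condition $Z \leq T/\sigma$ is equivalent to $Y \leq T$, so the two events have equal probability and $F_Z(T/\sigma) = F_Y(T)$. For the $G$ identity I would use the defining sum in Eq.~\eqref{eqBb05}, noting that the support of $Z$ consists of the values $k/\sigma$ for $k \in R_Y$, each carrying the same mass $f_Y(k)$, and that $k/\sigma \leq T/\sigma$ is again equivalent to $k \leq T$:
\begin{equation}
G_Z(T/\sigma) = \sum_{k \in R_Y:\, k \leq T} \frac{k}{\sigma}\, f_Y(k) = \frac{1}{\sigma} \sum_{k \in R_Y:\, k \leq T} k\, f_Y(k) = \frac{1}{\sigma}\, G_Y(T),
\end{equation}
which rearranges to $G_Y(T) = \sigma\, G_Z(T/\sigma)$. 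The same computation holds verbatim in the continuous regime, replacing the sum by an integral and $f_Y$ by the corresponding density.

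With these two identities in hand, I would substitute them into Eq.~\eqref{eqC05} together with the identification $\rho = F_Y(T) = F_Z(T/\sigma)$: this replaces $G_Y(T)$ by $\sigma\, G_Z(T/\sigma)$ and every occurrence of $F_Y(T)$ by $F_Z(T/\sigma)$, yielding Eq.~\eqref{eqCc01} directly. There is no genuine obstacle here; the content lies entirely in the elementary scaling behavior of the cdf and of the partial first moment $G$ under the affine change $Z = (X-\mu)/\sigma$. The only point requiring a little care is keeping the discrete definitions of $F$ and $G$ consistent under the rescaling of the support, which the change-of-variable argument above settles, so the corollary follows \emph{immediately} once the two identities are recorded.
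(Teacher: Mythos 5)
Your proposal is correct and matches the paper's own argument: the paper proves Corollary~\ref{crlr2} exactly by invoking the scaling identities $F_Y(T) = F_Z(T/\sigma)$ and $G_Y(T) = \sigma G_Z(T/\sigma)$ and substituting into Eq.~\eqref{eqC05} of Theorem~\ref{thm1}. Your only addition is the explicit change-of-variable verification of those two identities, which the paper states without proof.
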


The corollary implies that $E_r(t)$ deviates from the mean proportionally to $\sigma$. Therefore, for a given $X$, it is straightforward to consider the negative of the standard score, i.e., a $C_r(t)$ such that 
\begin{equation}
    \label{eqCc02}
E_r(t) = \mu - C_r(t) \sigma,
\end{equation}
as another performance metric. Note that the performance of GM-Th-QAOA, quantified by $C_r(t)$, is invariant over shifting the location of the distribution or changing its scale, being therefore dependent only on the statistical moments of higher order, such as skewness and kurtosis.

We can get the same conclusion about the expectation value of GM-QAOA with the analysis of Headley and Wilhelm~\cite{headley_paper}. From Equations (D15) and (D17) of Headley and Wilhelm~\cite{headley_paper} paper, denoting the characteristic function by 
\begin{equation}
    \label{eqCc002}
\varphi_X(\gamma) = \sum_{x \in R_X}  f_X(x) e^{i\gamma x}
\end{equation}
and setting $B(\beta) = -1 + e^{i\beta}$,
\begin{widetext}
\begin{equation}
    \label{eqCc03}
    \begin{split}
       & E_r(\bm{\beta}, \bm{\gamma}) = - i \sum_{k_\text{bra}, k_\text{ket} = 0}^{2^r - 1}  \prod_{P \in P_{bra}} \varphi_X \left( \sum_{j \in P} \gamma_j \right) \varphi'_X{\left(\sum_{j \in P_\text{central}} \gamma_j\right)}
        \prod_{P \in P_\text{ket}} \varphi_X \left( \sum_{j \in P} \gamma_j \right) \prod_{-j: k_\text{bra}^j = 1}  B(\beta_j)  \prod_{j: k_\text{ket}^j = 1}  B(\beta_j) \\ & = \mu + 2 \operatorname{Im} \left\{ \sum_{k_\text{bra} < k_\text{ket} = 0}^{2^r - 1}  \prod_{P \in P_{bra}} \varphi_Y \left( \sum_{j \in P} \gamma_j \right) \varphi'_Y{\left(\sum_{j \in P_\text{central}} \gamma_j\right)}
        \prod_{P \in P_\text{ket}} \varphi_Y \left( \sum_{j \in P} \gamma_j \right) \prod_{-j: k_\text{bra}^j = 1}  B(\beta_j)  \prod_{j: k_\text{ket}^j = 1}  B(\beta_j) \right\} \\ & = \mu + 2 \sigma \operatorname{Im} \left\{ \sum_{k_\text{bra} < k_\text{ket} = 0}^{2^r - 1}  \prod_{P \in P_{bra}} \varphi_Z \left(\sigma \sum_{j \in P} \gamma_j \right) \varphi'_Z{\left( \sigma \sum_{j \in P_\text{central}} \gamma_j\right)}
        \prod_{P \in P_\text{ket}} \varphi_Z \left(\sigma \sum_{j \in P} \gamma_j \right) \prod_{-j: k_\text{bra}^j = 1}  B(\beta_j)  \prod_{j: k_\text{ket}^j = 1}  B(\beta_j) \right\},
    \end{split}
\end{equation}
where $k_\text{bra}^j$ and $k_\text{ket}^j$ are the $j$th bit of the binary representation of $k_\text{bra}$ and $k_\text{ket}$, respectively, and
\begin{equation}
    \label{eqCc04}
    \begin{split}
      & P_\text{bra} = \{ P_{\text{bra}}^k: k \in [0, \text{weight}(k_\text{bra})] \}, \ P_\text{ket} = \{ P_{\text{ket}}^k: k \in [0, \text{weight}(k_\text{ket})] \}, \\ & P_\text{central} = \{ -j: j > \text{max} \{ S_\text{bra} \} \} + \{ j: j > \text{max} \{ S_\text{ket} \} \}, \\
    & S_\text{bra} = \{ 0 \} + \{k: k_\text{bra}^k = 1 \}, \ S_\text{ket} = \{ 0 \} + \{k: k_\text{ket}^k = 1 \},
     \\ & P_\text{bra}^k = \{-j: j > k_\text{bra}^k, j \leq k_\text{bra}^{k + 1} \}, \ P_\text{ket}^k = \{j: j > k_\text{ket}^k, j \leq k_\text{ket}^{k + 1} \},
      \end{split}
\end{equation}
with $\text{weight}(k)$ denoting the bit-weight of $k$.
\end{widetext}

The second equality of Eq.~\eqref{eqCc03} follows from the fact that the insertion of $Y$ gives just a global phase change on the final state of GM-QAOA, and the third from the properties $\varphi_Y(\gamma) = \varphi_Z(\sigma \gamma)$ and $\varphi'_Y(\gamma) = \sigma \varphi_Z(\sigma \gamma)$. Therefore the deviation from the mean is also linear on $\sigma$, and we can consider a $C_r(\bm{\beta}, \bm{\gamma})$ such that $E_r(\bm{\beta}, \bm{\gamma}) = \mu - C_r(\bm{\beta}, \bm{\gamma}) \sigma$. Additionally, we conclude that the parameter $\gamma_j$ is inversely proportional to $\sigma$.

A natural question concerns the general upper bounds for $C_r(t)$ and $C_r(\bm{\beta}, \bm{\gamma})$. We denote the maximum possible value for both by respectively $C^{Th}(r)$ and $C^{GM}(r)$. For $r = 1$, we have $C^{GM}(1) \leq 4$ and $C^{Th}(1) \leq 4$. The first bound follows by using the individual bounds $|\varphi_Y(\gamma)| \leq 1$, $|\varphi'_Y(\gamma)| \leq \operatorname{E}[|Y|] \leq \sigma$, $|B(\beta)| \leq 2$, while the second from the inequality
\begin{equation}
    \label{eqCc05}
    \begin{split}
 |G_Y(T)| \leq |G_Y(0)| = 0.5 \operatorname{E}[|Y|] \leq 0.5 \sigma,
    \end{split}
\end{equation}
applied on the polynomial expectation value expression of $r = 1$. The inequality $\operatorname{E}[|Y|] \leq \sigma$ is a consequence of Jensen's inequality. Note that it is unnecessary to check the interval $F_Y(T) > \rho_{Th}(r)$ since by the definition of conditional expectation, setting $F_Y(T) = \rho_{Th}(r)$, included on the other interval, gives the best bound on the range in which $P(\rho, r) = 1$ holds. The bound on GM-QAOA can be refined to $C^{GM}(1) \leq \frac{8 \sqrt{6}}{9} \approx 2.178$ by using calculus arguments based on the inequality of the second derivative of the characteristic function, $\varphi''_Y(\gamma) \leq \sigma^2$, to bound $|\varphi^*_Y(\gamma) \varphi'_Y(\gamma)|$.

For general $r$, applying inequalities on $|\varphi_Y(\gamma)|$, $|\varphi'_Y(\gamma)|$ and $|B(\beta)|$ is insufficient to obtain a satisfactory bound on GM-QAOA since it would grow exponentially. To be more precise, we can get $C^{GM}(r) \leq \frac{9^r - 1}{2} = \Theta(9^r)$ with combinatorial arguments on Eq.~\eqref{eqCc03}. In contrast, for GM-Th-QAOA, we can replace on Eq.~\eqref{eqC05} with $F_Y(T) \leq \rho_{Th}(r)$, in addition to the previous bound on $|G_Y(T)|$, the known result of Grover's algorithm that the maximum ratio $\eta = P(\rho, r)/F_Y(T)$ is hit on the low-convergence regime~\cite{maoa} with $(2r + 1)^2$, to conclude that $C^{Th}(r) = \mathcal{O}(r^2)$. 

Indeed, the above bound is not tight. The tight upper bound is established through the assistance of Lemma~\ref{lm1}, proved in Appendix~\ref{ap2}, which claims that $C^{Th}(r)$ is attained with a particular family of distribution: the two-point distributions. For such distributions, since the random variable $Z$ depends only on the ratio between the probability of the points, we can refer to our defined binary function without loss of generality.

\begin{lemma} \label{lm1}
For any number $r$ of layers in GM-Th-QAOA, the maximum standard score $C_r(t)$ achieved by GM-Th-QAOA, $C^{Th}(r)$, is hit by a two-point distribution.
\end{lemma}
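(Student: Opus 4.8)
The plan is to collapse the entire dependence of the standard score on the distribution down to two scalars and then solve a constrained moment problem. First I would standardize through Corollary~\ref{crlr2}: setting $u = T/\sigma$, $\rho = F_Z(u)$ and $m_1 = \operatorname{E}[Z \mid Z \le u] = G_Z(u)/\rho$, a direct simplification of the $Z$-form of Eq.~\eqref{eqC10} gives
\[
C_r(t) = G_Z(u)\,\frac{\rho - P(\rho,r)}{\rho(1-\rho)} = m_1\,\frac{\rho - P(\rho,r)}{1-\rho}.
\]
The key structural point is that, for fixed $r$, $C_r(t)$ sees the underlying law only through the pair $(\rho, m_1)$, because $P(\rho,r)$ is a function of $\rho$ alone. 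So the optimization over all $X$ and all thresholds reduces to: fix $\rho$, and ask how negative the conditional mean $m_1$ of the lower block can be made.

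Second I would settle the signs. Since the Grover amplification obeys $P(\rho,r) \ge \rho$, the factor $g(\rho) := (\rho - P(\rho,r))/(1-\rho)$ is non-positive; and $m_1 = G_Z(u)/\rho \le 0$ because $G_Z$ is non-positive, its derivative $u f_Z(u)$ changing sign only at the origin, where $G_Z$ attains its minimum. Hence $C_r(t) \ge 0$, and for fixed $\rho$ it is maximized precisely by pushing $m_1$ to its most negative admissible value.

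The core step is a moment inequality. Writing $m_2 = \operatorname{E}[Z \mid Z > u]$, the zero-mean constraint forces $\rho m_1 + (1-\rho)m_2 = 0$, while $\operatorname{E}[Z^2]=1$ reads $\rho\,\operatorname{E}[Z^2\mid Z\le u] + (1-\rho)\operatorname{E}[Z^2\mid Z>u] = 1$. Bounding each conditional second moment below by the square of the corresponding conditional mean (non-negativity of conditional variance) yields
\[
1 \ge \rho m_1^2 + (1-\rho)m_2^2 = \frac{\rho}{1-\rho}\,m_1^2 ,
\]
so $|m_1| \le \sqrt{(1-\rho)/\rho}$, with equality exactly when both conditional variances vanish, i.e. when each block collapses to a single atom. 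That is precisely a two-point distribution: for every $\rho \in (0,1)$ the law with atoms $-\sqrt{(1-\rho)/\rho}$ and $\sqrt{\rho/(1-\rho)}$ carrying masses $\rho$ and $1-\rho$, with the threshold placed between them, has zero mean, unit variance, realizes the prescribed $\rho$, and attains $m_1 = -\sqrt{(1-\rho)/\rho}$. It therefore maximizes $C_r(t)$ at that $\rho$; taking the supremum over $\rho$ never leaves this one-parameter family, so $C^{Th}(r)$ is hit by a two-point distribution.

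The hardest part will not be the inequality itself but two bookkeeping points around it: verifying the sign facts $P(\rho,r)\ge\rho$ and $G_Z \le 0$ uniformly in $r$ (including the saturated regime $\rho \ge \rho_{Th}(r)$, where $g(\rho)=-1$), and confirming that the outer supremum over $\rho \in (0,1)$ is actually attained, so that the maximizing law is a genuine two-point distribution rather than a limit of them. The latter follows from continuity of $\rho \mapsto -\sqrt{(1-\rho)/\rho}\,g(\rho)$ on $(0,1)$ together with its vanishing at the endpoints, which I would check explicitly.
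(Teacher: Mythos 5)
Your proof is correct and follows essentially the same route as the paper's: both reduce the problem, for fixed $\rho = F_Z(u)$, to making the lower-block conditional mean as negative as possible, and both establish the bound $|m_1| \le \sqrt{(1-\rho)/\rho}$ (equivalently, the paper's $|G_Y(T)|/\sigma \le \sqrt{\rho(1-\rho)}$) by applying Jensen's inequality, i.e.\ non-negativity of the conditional variance, to each of the two blocks under the zero-mean and unit-variance constraints, with equality exactly when both blocks collapse to single atoms. Your extra bookkeeping (the sign facts $P(\rho,r)\ge\rho$ and $G_Z\le 0$, and the attainment of the supremum over $\rho$) makes explicit points the paper leaves implicit, but does not change the substance of the argument.
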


For a given $r$, $C^{Th}(r)$ can be founded by systematically varying the parameter $\rho$ on binary function in the range $0 < \rho < 1$. The choice of threshold trivially is $t = -1$ for binary function, and in that way, $T_h(k)$ is precisely the original binary function. Therefore,
\begin{equation}
\label{eqCc06}
    C_r(t) = \frac{P(\rho, r) + \mu}{\sigma} = \frac{P(\rho, r) - \rho}{\sqrt{\rho(1-\rho)}}.
\end{equation}

\begin{figure}[H]
\centering
\includegraphics[width=1\linewidth]{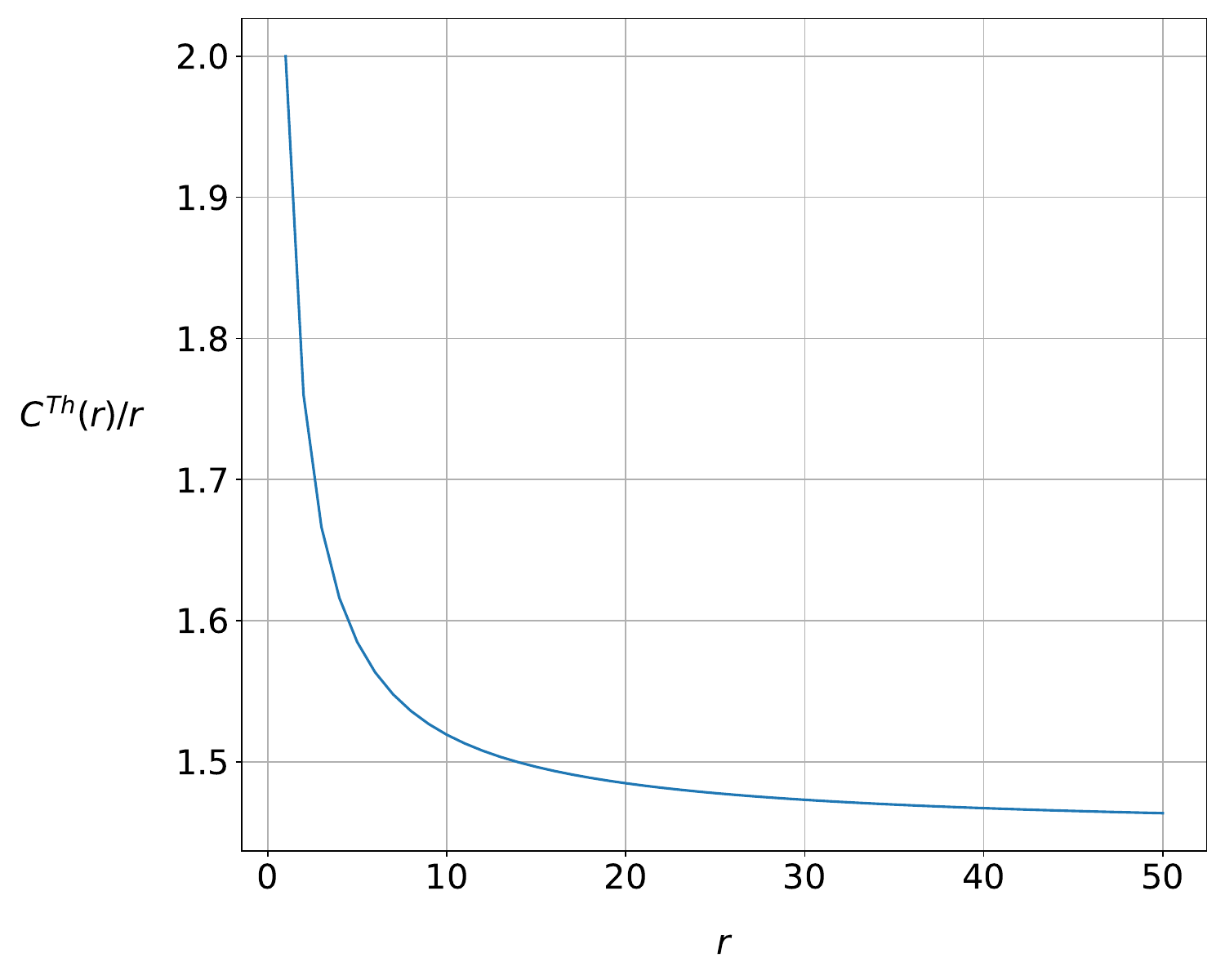}
\caption{The inclination of the curve $C^{Th}(r)$ versus $r$ asymptotically converges to a certain value.}
\label{C_Th_r}
\end{figure}

For $r = 1$, it can be concluded that $C^{Th}(1) = 2$ analytically computing the derivative on the polynomial formula of $P(\rho, 1)$. On the other hand, we solve numerically for $r > 1$. The growth observed is linear in $r$, as shown by Fig.~\ref{C_Th_r}, which plots the ratio $C^{Th}(r)/r$ versus $r$ up to $50$ layers. The inclination of the linear curve converges to a value called $\kappa$. In fact, we prove in Theorem~\ref{thm4} that $C^{Th}(r) = \Theta(r)$ and the value of $\kappa$ is approximately $1.4482$. 

\begin{figure*}[ht]
    \centering
    \subfigure[]{\includegraphics[width=0.49\textwidth]{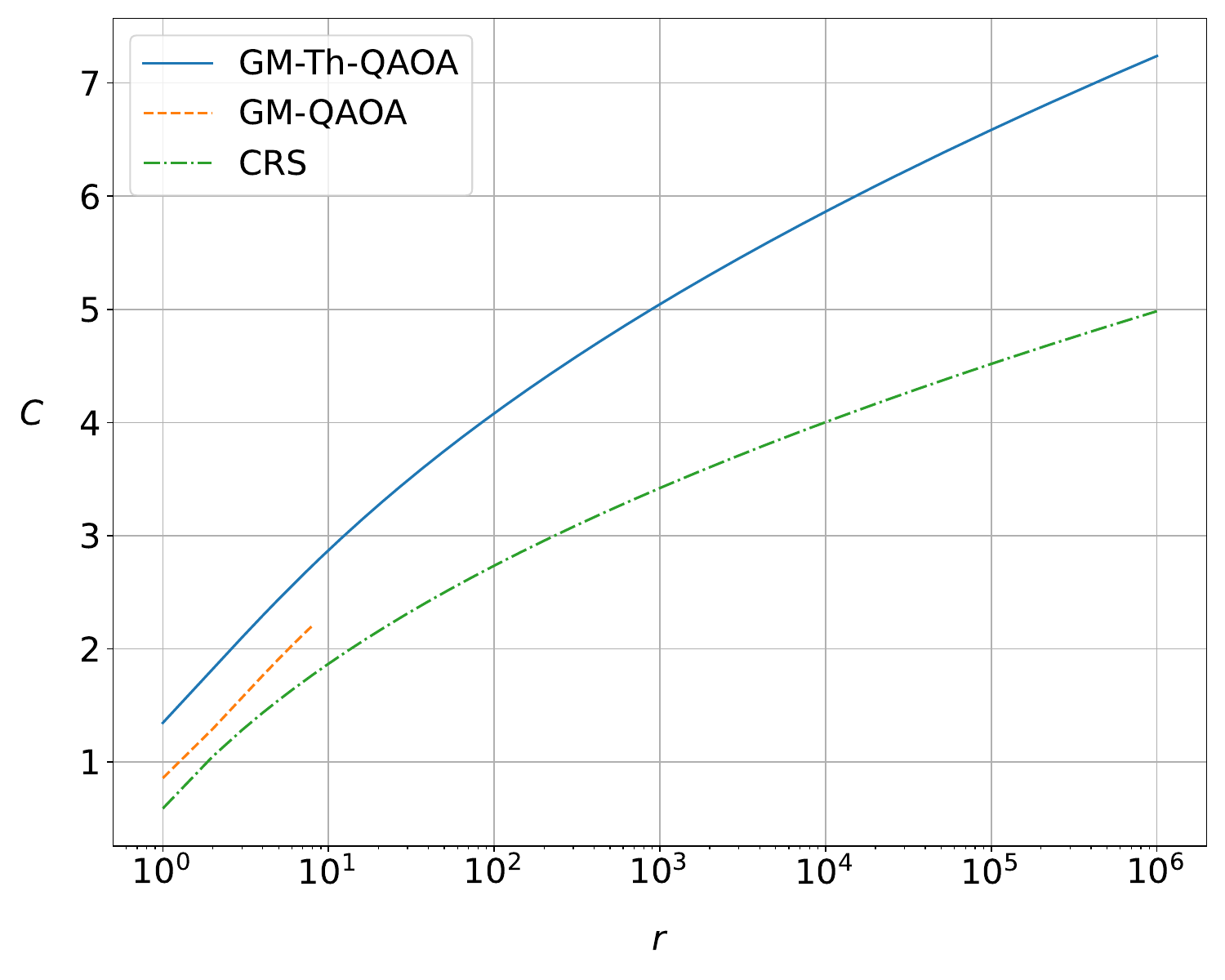}} 
    \subfigure[]{\includegraphics[width=0.49\textwidth]{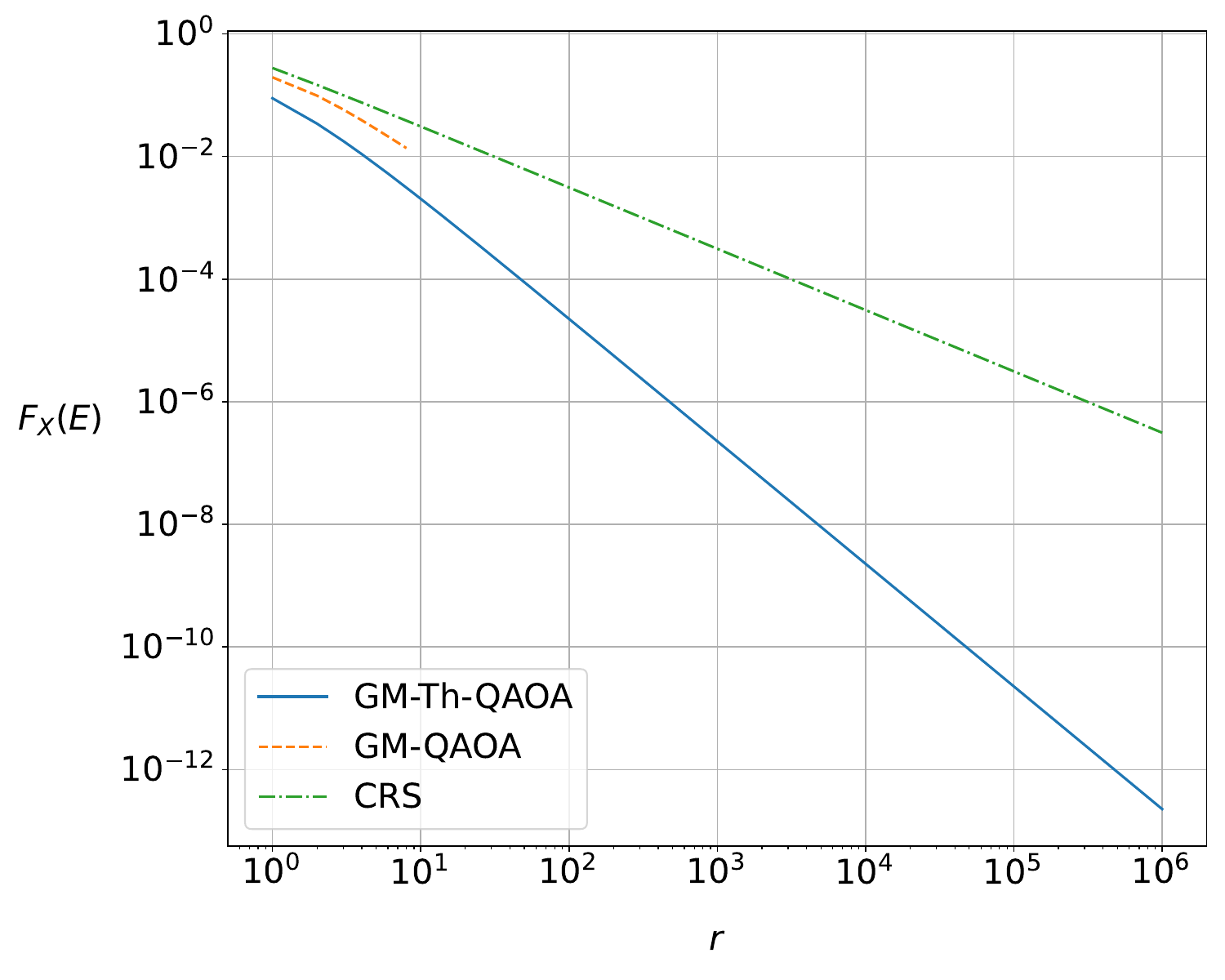}} 
    \caption{Simulation of distribution Normal($u, s^2$) for GM-Th-QAOA and CRS up to $10^{6}$ layers, and for GM-QAOA, due to its exponential complexity, up to $8$ layers. GM-Th-QAOA consistently overcomes GM-QAOA, as expected from the numerical results of the literature, and CRS, consistently with the quadratic gain. (a) Standard score, generically denoted by $C$, versus $r$ in a linear-log scale graphic. The asymptotic behavior of GM-Th-QAOA on $C_r(t)$ indicates a logarithmic growth, according to the expected from the exponential decay of the cdf on $x \rightarrow -\infty$. (b) Log-log graphic of the quantile achieved by the algorithms, generically denoted $F_X(\operatorname{E})$, as a function of $r$. The asymptotic behavior of the cdf illustrates the quadratic gain of GM-Th-QAOA over classical brute force with the quantum algorithm scaling on a $1/r^2$ rate and the classical on $1/r$.}
    \label{normal}
\end{figure*}

\begin{theorem}\label{thm4}
On the large limit of the number of layers $r$, the maximum standard score $C_r(t)$ achieved by GM-Th-QAOA is given by $C^{Th}(r) = \kappa r$, where $\kappa = 2\sin^2{(x_1)}/x_1$ for $x_1$ being the smallest positive solution of the equation $2x = \tan(x)$.
\end{theorem}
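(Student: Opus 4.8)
The plan is to begin from Lemma~\ref{lm1}, which reduces the evaluation of $C^{Th}(r)$ to the two-point (binary) distributions, so that by Eq.~\eqref{eqCc06} it suffices to maximize
\[
C_r = \frac{P(\rho, r) - \rho}{\sqrt{\rho(1-\rho)}}
\]
over $\rho \in (0,1)$. First I would dispose of the saturated regime $\rho > \rho_{Th}(r)$, where $P(\rho,r) = 1$ and hence $C_r = \sqrt{(1-\rho)/\rho}$ is strictly decreasing; its supremum over this regime is attained at the left endpoint $\rho = \rho_{Th}(r) = \sin^2(\pi/(4r+2))$, giving $\cot(\pi/(4r+2)) \sim (4/\pi)\,r$. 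Since I will show the interior maximum grows like $\kappa r$ with $\kappa > 4/\pi$, this regime is asymptotically subdominant and can be discarded, leaving only the low-convergence regime $\rho \leq \rho_{Th}(r)$ to analyze.

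In that regime I would change variables to $\theta = \arcsin\sqrt{\rho}$ and $x = (2r+1)\theta$, which ranges exactly over $(0, \pi/2]$ because $\theta \leq \pi/(4r+2)$. Using $P(\rho,r) = \sin^2((2r+1)\theta)$, the score becomes
\[
C_r = \frac{\sin^2 x - \sin^2\!\big(x/(2r+1)\big)}{\sin\!\big(x/(2r+1)\big)\,\cos\!\big(x/(2r+1)\big)}.
\]
As $r \to \infty$ with $x$ held fixed, $\theta = x/(2r+1) \to 0$, so $\sin\theta \sim x/(2r+1)$, $\cos\theta \to 1$ and $\sin^2\theta \to 0$; hence $C_r/r \to h(x) := 2\sin^2(x)/x$ pointwise. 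Maximizing the limit profile $h$ via $h'(x) = 0$ gives $2x\cos x = \sin x$, that is $\tan x = 2x$, whose smallest positive root is $x_1$. Because $x_1 < \pi/2$ it lies in the admissible range and is the global maximizer (as $h(0^+)=0$ and $h(\pi/2) = 4/\pi < h(x_1)$), yielding $\kappa = h(x_1) = 2\sin^2(x_1)/x_1 \approx 1.4482$, which also confirms $\kappa > 4/\pi$ as needed above.

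The crux, and the main obstacle, is to upgrade the pointwise limit $C_r/r \to h(x)$ into a statement about the maxima, i.e.\ to justify $\lim_{r\to\infty} C^{Th}(r)/r = \max_{x\in(0,\pi/2]} h(x) = \kappa$. I would handle this by establishing uniform convergence of $C_r/r$ to $h$ on a compact interval $[\varepsilon,\pi/2]$ bounded away from the origin, while separately bounding $C_r/r$ on the remaining sliver $(0,\varepsilon]$, where both $h$ and $C_r/r$ are $O(\varepsilon)$ since $\sin^2 x \leq x^2$ there. Because the limiting maximizer $x_1$ is strictly interior, choosing $\varepsilon < x_1$ forces the maximizing $x$ to stay in the compact regime for all large $r$, making the interchange of limit and maximum legitimate and delivering the claimed asymptotic $C^{Th}(r) = \kappa r$.
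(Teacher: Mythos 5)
Your proposal is correct and follows essentially the same route as the paper's proof: reduce to two-point distributions via Lemma~\ref{lm1} and Eq.~\eqref{eqCc06}, restrict to the low-convergence regime, rescale by $x \approx (2r+1)\arcsin\sqrt{\rho}$, and maximize the limiting profile $2\sin^2(x)/x$, which yields $\tan x = 2x$ and $\kappa = 2\sin^2(x_1)/x_1$. The only difference is that you make explicit two steps the paper leaves implicit---discarding the saturated regime $\rho > \rho_{Th}(r)$ and justifying the interchange of the large-$r$ limit with the maximization via uniform convergence on $[\varepsilon, \pi/2]$ plus a sliver bound near $0$---which is added rigor rather than a different method.
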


\begin{proof}

In the large limit of $r$, Eq.~\eqref{eqCc06} in the interval $\rho \leq \rho_{Th}(r)$ becomes $C_r(t) =\sin^2{\left(2r\sqrt{\rho}\right)}/\sqrt{\rho}$. Taking the derivative equal to $0$ results in $4r \sqrt{\rho} = \tan{\left(2r \sqrt{\rho}\right)}$. The substitution $x = 2r \sqrt{\rho}$ gives the transcendental equation $2x = \tan{(x)}$. The unique positive solution in which $\rho$ is not above $\rho_{Th}(r)$ gives $C^{Th}(r)$.
\end{proof}

Since the binary function is the same as GM-Th-QAOA in GM-QAOA, follows the lower bound $C^{GM}(r) \geq \kappa r$ on large $r$.  In particular, for $r = 1$, $2 \leq C^{GM}(1) \leq \frac{8 \sqrt{6}}{9}$. Furthermore, the upper bound $C^{Th}(r)$ provides an explicit lower bound on the number of round $r$ to reach a fixed approximation ratio $\lambda$, given by Corollary~\ref{crlr3}, that follows from the definitions of $C_r(t)$ and $\lambda$.

\begin{corollary}\label{crlr3}
    For any number $r$ of layers in GM-Th-QAOA, provided that $R_X^{min} \neq 0$ and $|R_X^{min}| < \infty$,
\begin{equation}
    \label{eqCc07}
    r \geq \frac{\mu - \lambda R_X^{min}}{(C^{Th}(r)/r) \sigma}.
\end{equation}
In particular, on the large limit of $r$, $C^{Th}(r)/r = \kappa$.
\end{corollary}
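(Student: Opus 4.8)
The plan is to obtain Corollary~\ref{crlr3} directly from the standard-score representation of the expectation value together with the maximality of $C^{Th}(r)$, and to read off the ``in particular'' clause from Theorem~\ref{thm4}. First I would start from Eq.~\eqref{eqCc02}, $E_r(t) = \mu - C_r(t)\sigma$, which isolates the dependence of the expectation value on the standard score. By the definition of $C^{Th}(r)$ as the maximum value attained by $C_r(t)$, one has $C_r(t) \leq C^{Th}(r)$, and therefore $E_r(t) \geq \mu - C^{Th}(r)\sigma$; this is the smallest (best, for a minimization problem) value the expectation value can possibly reach with $r$ layers.

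Next I would translate the target approximation ratio into a condition on $E_r(t)$. By the definition of $\lambda$ in Eq.~\eqref{eqBb02}, guaranteeing an approximation ratio $\lambda$ amounts to requiring $E_r(t) \leq \lambda R_X^{min}$. Chaining this with the previous bound gives $\mu - C^{Th}(r)\sigma \leq \lambda R_X^{min}$, hence $C^{Th}(r)\,\sigma \geq \mu - \lambda R_X^{min}$. Writing $C^{Th}(r) = r\,(C^{Th}(r)/r)$ and dividing through by the positive quantity $(C^{Th}(r)/r)\,\sigma$ isolates $r$ and yields exactly Eq.~\eqref{eqCc07}. The divisions are all legitimate because the hypotheses $R_X^{min} \neq 0$, $|R_X^{min}| < \infty$ and the standing assumption $0 < \sigma < \infty$ keep every denominator finite and nonzero.

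The final clause then follows immediately by substituting the asymptotic slope $C^{Th}(r)/r \to \kappa$ from Theorem~\ref{thm4}, with $\kappa = 2\sin^2(x_1)/x_1$ and $x_1$ the smallest positive root of $2x = \tan(x)$; in this regime the factor $C^{Th}(r)/r$ on the right-hand side becomes the genuine constant $\kappa$, so the bound ceases to be self-referential and gives a clean explicit lower bound on $r$.

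The computation is routine, so the only point that genuinely requires care is the bookkeeping of inequality directions and signs. Because the problem is a minimization and $R_X^{min}$ may be negative, I must check that ``achieving approximation ratio $\lambda$'' really corresponds to $E_r(t) \leq \lambda R_X^{min}$ (not the reverse) and that each subsequent division preserves the sense of the inequality, so that the outcome is a lower bound on the number of rounds rather than an upper bound. This sign analysis, rather than any nontrivial estimate, is the main obstacle, and it is fully controlled by the stated hypotheses.
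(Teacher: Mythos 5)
Your proposal is correct and follows essentially the same route as the paper, which states that Corollary~\ref{crlr3} ``follows from the definitions of $C_r(t)$ and $\lambda$'' together with Theorem~\ref{thm4} for the asymptotic slope; you simply make explicit the chain $E_r(t)=\mu-C_r(t)\sigma \geq \mu - C^{Th}(r)\sigma$, the translation of the approximation-ratio guarantee into $E_r(t)\leq \lambda R_X^{min}$, and the rearrangement into Eq.~\eqref{eqCc07}. Your sign bookkeeping (the inequality $E_r(t)\leq\lambda R_X^{min}$ is the right direction whether $R_X^{min}$ is positive or negative, and all divisors are positive) is exactly the care the paper leaves implicit.
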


Note that the approximation ratio in our definition can assume negative values if $R_X^{min} < 0$ and the cost function admits positive values. To finish this subsection, we show another bound on the minimum rounds required to achieve an objective. Specifically, we get the minimum number of rounds for the algorithm finding the optimal with probability $1$ (exact optimization). In that case, the optimal threshold must be $t_{opt} = R_X^{min}$ and we must satisfy $F_X(t_{opt}) = f_X(R_X^{min}) \geq \rho_{Th}(r)$. Therefore, 
\begin{equation}
    \label{eqCc08}
    \begin{split}
    r \geq \frac{1}{4} & \left( \frac{\pi}{\arcsin{(\sqrt{f_X(R_X^{min})})}} - 2 \right) \\ & = \Omega\left(1/\sqrt{f_X(R_X^{min})}\right),
    \end{split}
\end{equation}
as $f_X(R_X^{min}) \rightarrow 0$, a quadratic Grover-like speed-up.

\begin{figure*}[ht]
    \centering
    \subfigure[]{\includegraphics[width=0.49\textwidth]{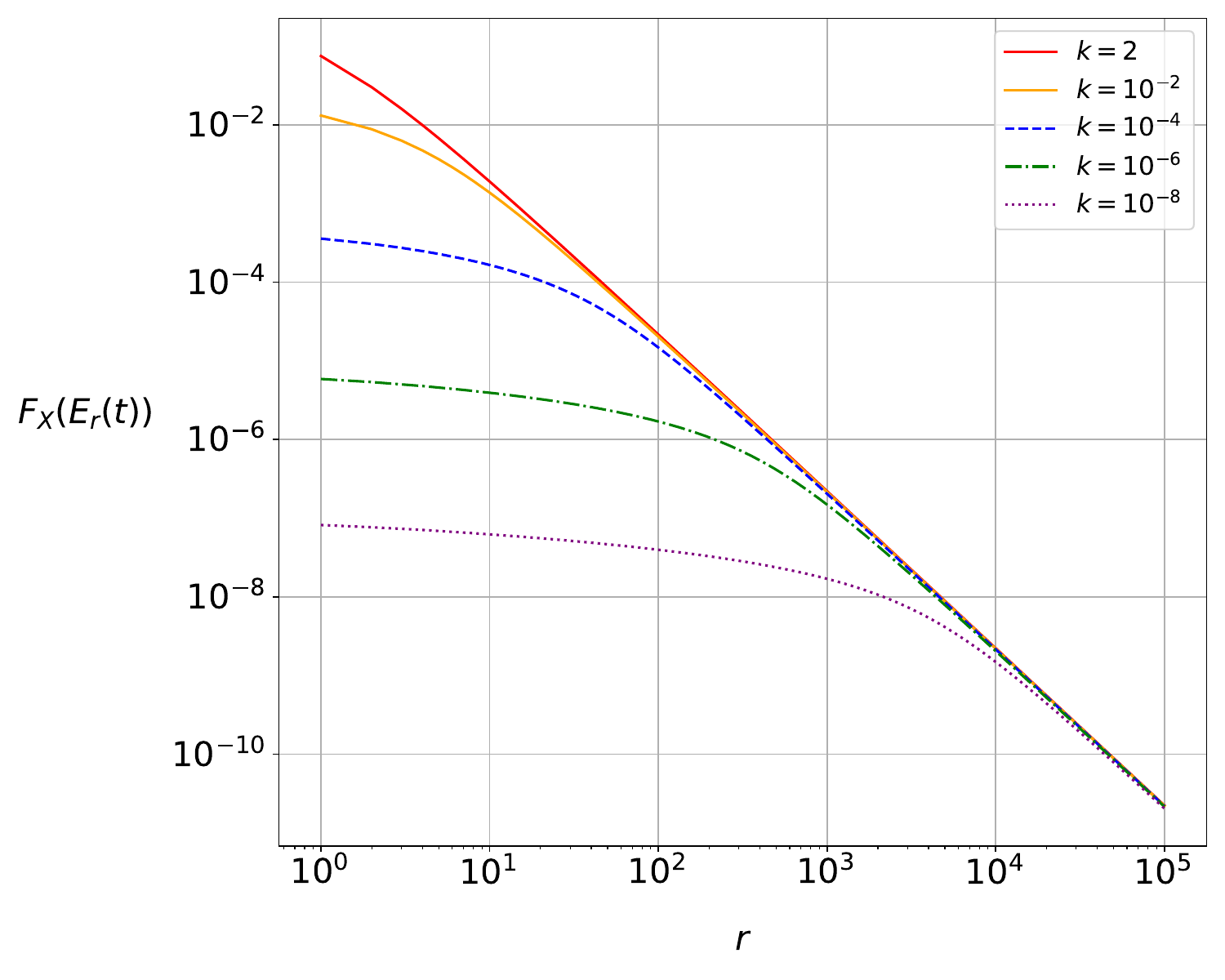}} 
    \subfigure[]{\includegraphics[width=0.49\textwidth]{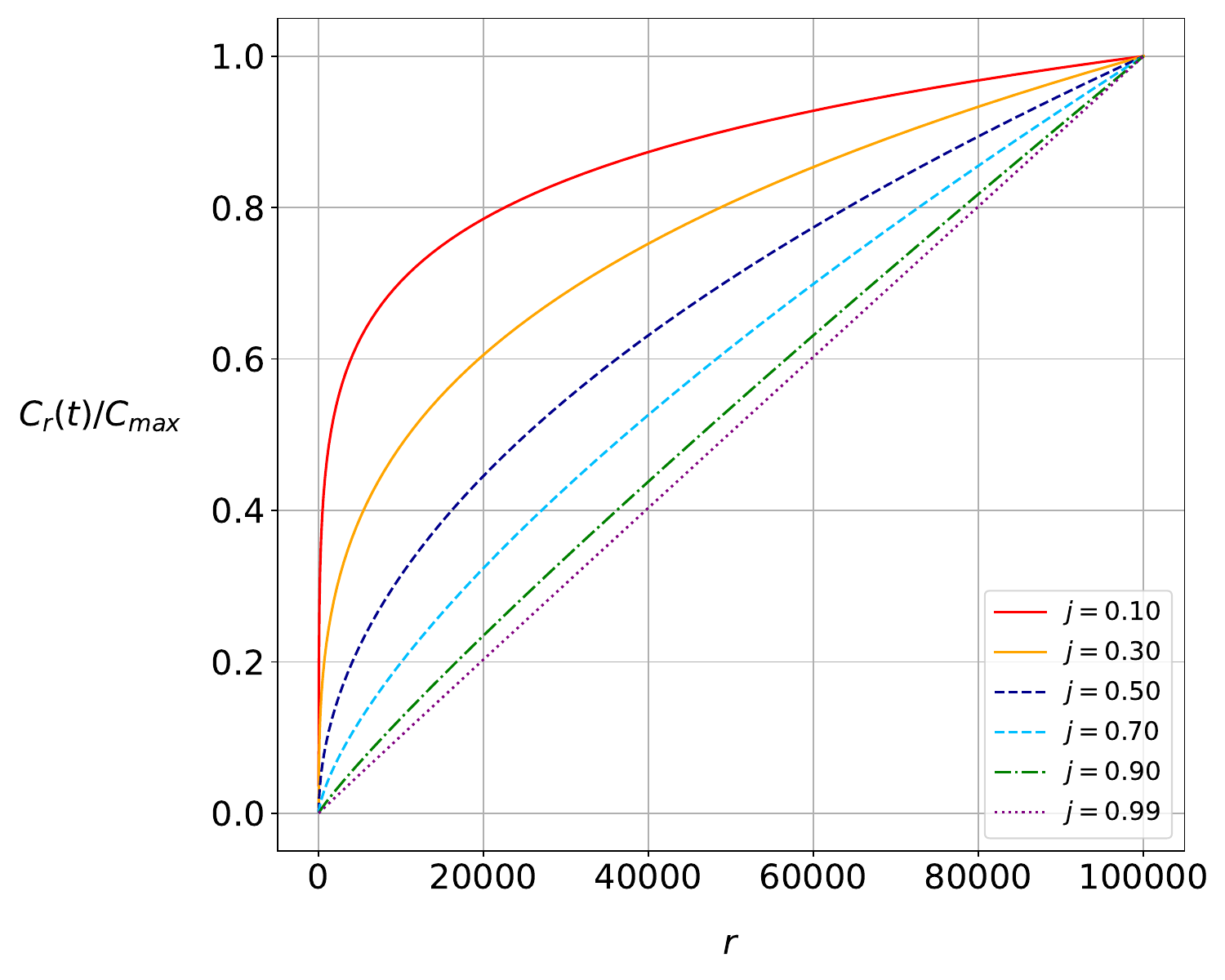}} 
    \caption{(a) Log-log graphic of $F_X(E_r(t))$ versus $r$ up to $10^5$ rounds for distribution Gamma($a, b$) with $a = k/2$ and $b = 1/2$ for values of $k$ decreasing with powers of $10$ (in particular, if $k$ were a positive integer, we would have the chi-squared distribution of $k$ degrees of freedom; and $k = 2$ gives an instance of the reflected exponential distribution). Despite the left-skew of low values of $k$, given a sufficient number of rounds, the asymptotic scale of $1/r^2$ arises, since the cdf of the expected value of the distribution $X$ given $X \leq t$ approaches the limit $L$. (b) Standard score achieved by GM-Th-QAOA up to $10^{5}$ rounds for Pareto($\epsilon, x_m$) with different values of $j$. For viewing purposes, we normalize $C_r(t)$ by $C_{max}$, where $C_{max}$ is the value of $C_{10^5}(t)$. Fitting all curves with a power-law, we found the exponents $0.99, 0.9, 0.7703, 0.5023, 0.3136, 0.1570$ for the respective values of $j$ in descending order. Although the behavior is more precise with the theoretical results on higher values of $j$, the confluence is just a matter of simulating sufficient numbers of layers. For instance, for $j = 0.1$, fitting on $r = 1$ up to $r = x$ for the range $x = 10, 10^2, 10^3, 10^4, 10^5$ gives the progressive improvement of respectively $0.5087, 0.3222, 0.2301, 0.1834, 0.1570$ on the coefficients.}
    \label{gamma_pareto}
\end{figure*}

\subsection{Combining the bounds on the standard score and quantile}
\label{sub3D} 

The explicitly tight bound on the standard score was built using different distributions for each $r$. In particular, the ratio $\rho$ of the two-point distribution that hits $C^{Th}(r)$ changes with $r$. One can ask if a particular distribution gives an asymptotic optimal $C_r(t)$ of order $\Theta(r)$. If this were not the case, we would have the possibility of improving the bound of Corollary~\ref{crlr3} for particular distribution on the asymptotic limit of $r$. However, we can get a family of distributions in which $C_r(t)$ scales arbitrarily close to $\Theta(r)$. The technique to obtain it consists of combining the bound of the quantile of Theorem~\ref{thm3} with the standard score $C_r(t)$. 

To analyze the asymptotic behavior of $C_r(t)$ in terms of $r$, we must assume that $R_X^{min} \rightarrow -\infty$. However, since Theorem~\ref{thm3} has the supposition that $f_X(R_X^{min}) = a$, where $0 < a < \infty$, is necessary the reasonable assumption that the limit $L$ of Eq.~\eqref{eqCb06} is non-zero finite. With the assumption on $L$, so that the result of Theorem~\ref{thm3} be applicable, remains to prove that $F_X(t) = \Theta(1/r^2)$ on $R_X^{min} \rightarrow -\infty$ case. 

To get that, note that if $f_X(-x) = \Omega(1/x^3)$, $\sigma \rightarrow \infty$, since $\operatorname{E}[X^2]$ diverges. Therefore, we must have $f_X(-x) = \mathcal{O}(1/x^3)$. By definition of asymptotic notation, $F_X(-x) = \mathcal{O}(1/x^{2})$ and then $|F^{-1}_X(y)| = \mathcal{O}(1/\sqrt{y})$ as $y \rightarrow 0$. In this way, as $F_X(\operatorname{E}[X| X \leq x])$ scales like $F_X(x)$ by the assumption on $L$, then the conditional expectation has growth rate limit by 
\begin{equation}
    \label{eqCd01}
    |\operatorname{E}[X| X \leq x]| = \mathcal{O}\left(1/\sqrt{F_X(x)}\right)
\end{equation}
as $F_X(x) \rightarrow 0$. Let $t$ be a fixed optimal angle for $r$ rounds of GM-Th-QAOA. As $F_X(t) \leq \rho_{Th}(r)$, $P(\rho, r)$ depends on $F_X(t)$ like $\Theta(F_X(t))$ as $F_X(t) \rightarrow 0$. Note that if we decrease $F_X(t)$, $P(\rho, r)$ also decreases at the same time that $|\operatorname{E}[Y| Y \leq T]|$ increases. However, from Eq.~\eqref{eqCb04}, the growth of $|\operatorname{E}[Y| Y \leq T]|$, bounded by Eq.~\eqref{eqCd01}, cannot compensate the decay of $P(\rho, r)$ and then $|E_r(t)_{opt}|$ is maximized assuming the slowest decay of $F_X(t)$. Therefore, $F_X(t) = \Theta(1/r^2)$, as desired.

Now, consider a $\epsilon > 0$ such that $f_Z(-x) = \Theta(1/x^{3 + \epsilon})$. Repeating the previous argument, $|F^{-1}_Z(y)| = \Theta(1/\sqrt[2 + \epsilon]{y})$ as $y \rightarrow 0$. So, for a fixed optimal threshold $t$, $F_Z(-C_r(t)) = F_X(E_r(t)) = \Theta(1/r^2)$ makes us conclude that $C_r(t) = \Theta(r^{2/(2 + \epsilon)})$. An explicit distribution is a reflected version of Pareto distribution, denoted here Pareto($\epsilon, x_m$), where
\begin{equation}
    \label{eqCd02}
    f_X(x) = \frac{(\epsilon + 2) x_m^{\epsilon + 2}}{(-x)^{\epsilon + 3}}, \ x \in (-\infty, -x_m],
\end{equation}
with parameter $\epsilon > 0$. The limit $L$ given by
\begin{equation}
    \label{eqCd03}
    L = {\left(1 + \frac{1}{1 + \epsilon}\right)^{-(2 + \epsilon)}},
\end{equation}
which lies between $0.25$ and $1/e$, with $L = 0.25$ in $\epsilon \rightarrow 0$ and $L = 1/e$ in $\epsilon \rightarrow \infty$.

In general, the optimal $C_r(t)$ asymptotically depends on $r$ as $C_r(t) = \Theta(|F^{-1}_Z(1/r^2)|)$. If a distribution presents a cdf $F_Z(x)$ with exponential decay on $x \rightarrow -\infty$, the growth of $C_r(t)$ must be logarithmic. It is the case of important distributions of literature, such as the normal, Laplace, gamma (reflected), and exponential (reflected) distributions. Therefore, for an optimization problem with a probability distribution that exhibits a tendency of exponential decay, the number of rounds to achieve a fixed approximation ratio must be exponentially larger than the tight bound of Corollary~\ref{crlr3}.

\section{Numerical experiments} \label{sec4}

\begin{figure*}[ht]
    \centering
    \subfigure[]{\includegraphics[width=0.49\textwidth]{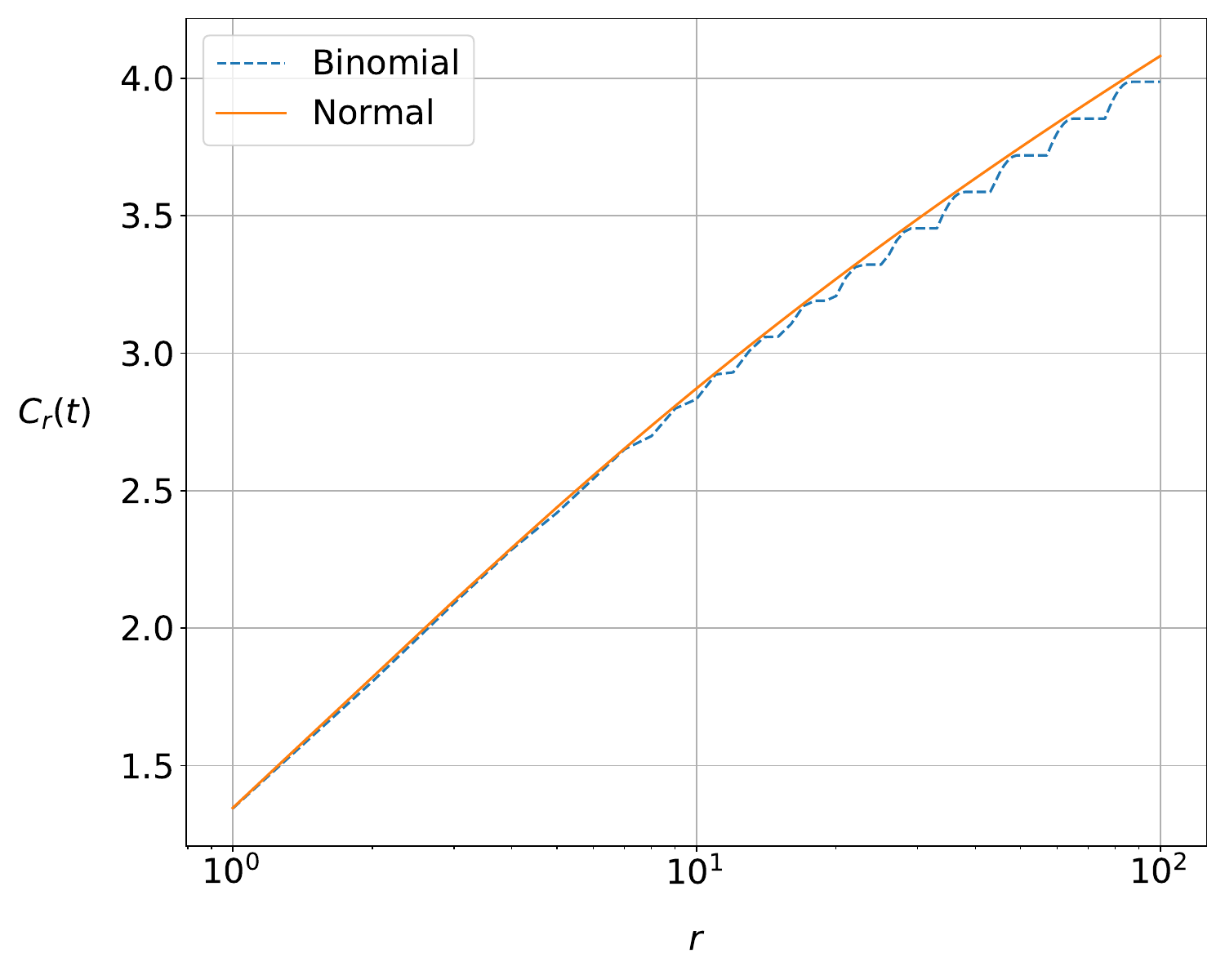}} 
    \subfigure[]{\includegraphics[width=0.49\textwidth]{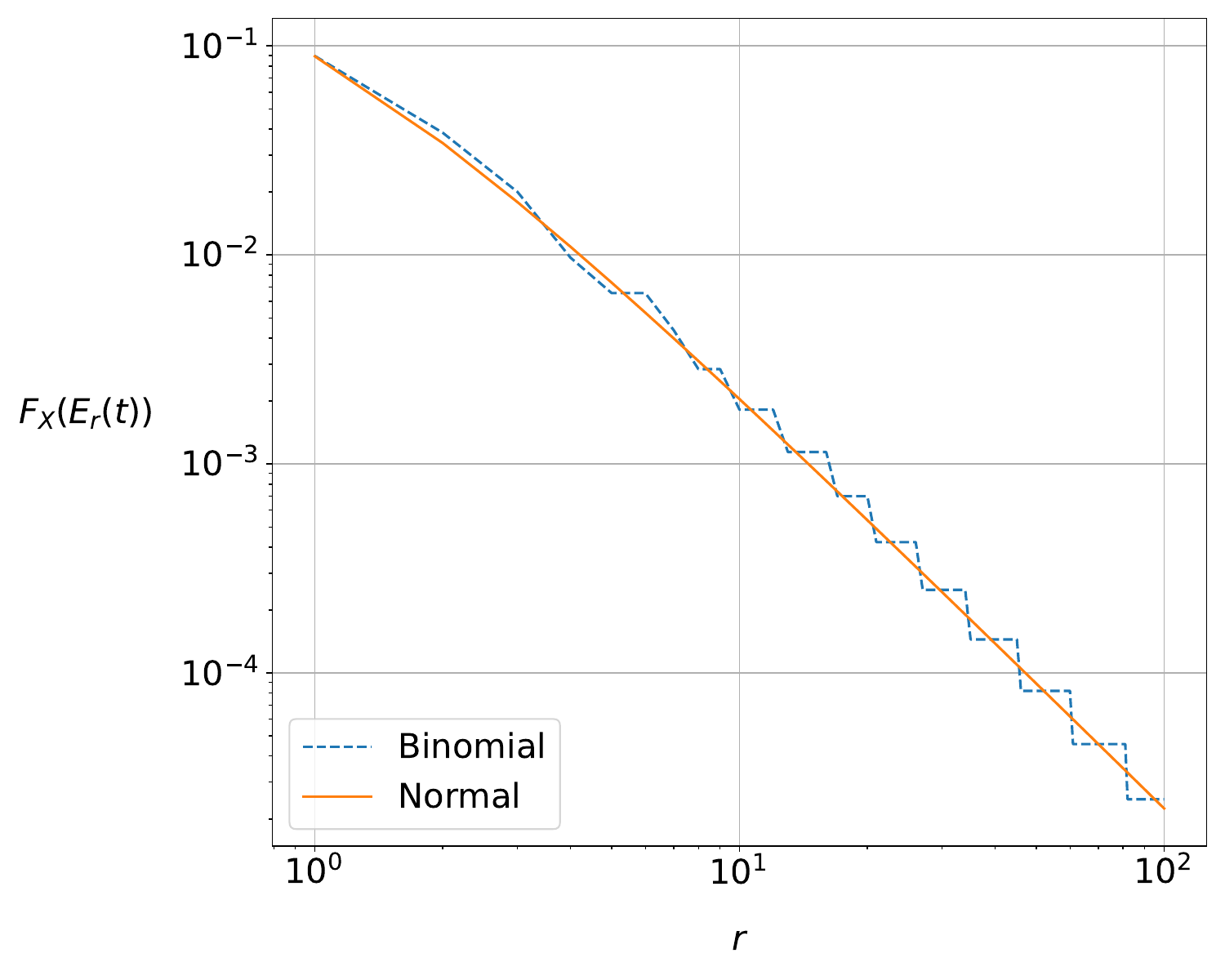}} 
    \caption{Simulation of Binomial($n, p$) with $n = 200$ and $p = 0.5$ for GM-Th-QAOA up to $100$ rounds, compared with the distribution Normal($u, s^2$). (a) $C_r(t)$ versus $r$ on the linear-log scale and (b) $F_X(E_r(t))$ versus $r$ on log-log scale.}
    \label{binomial1}
\end{figure*}

\begin{figure*}[ht]
    \centering
    \subfigure[]{\includegraphics[width=0.49\textwidth]{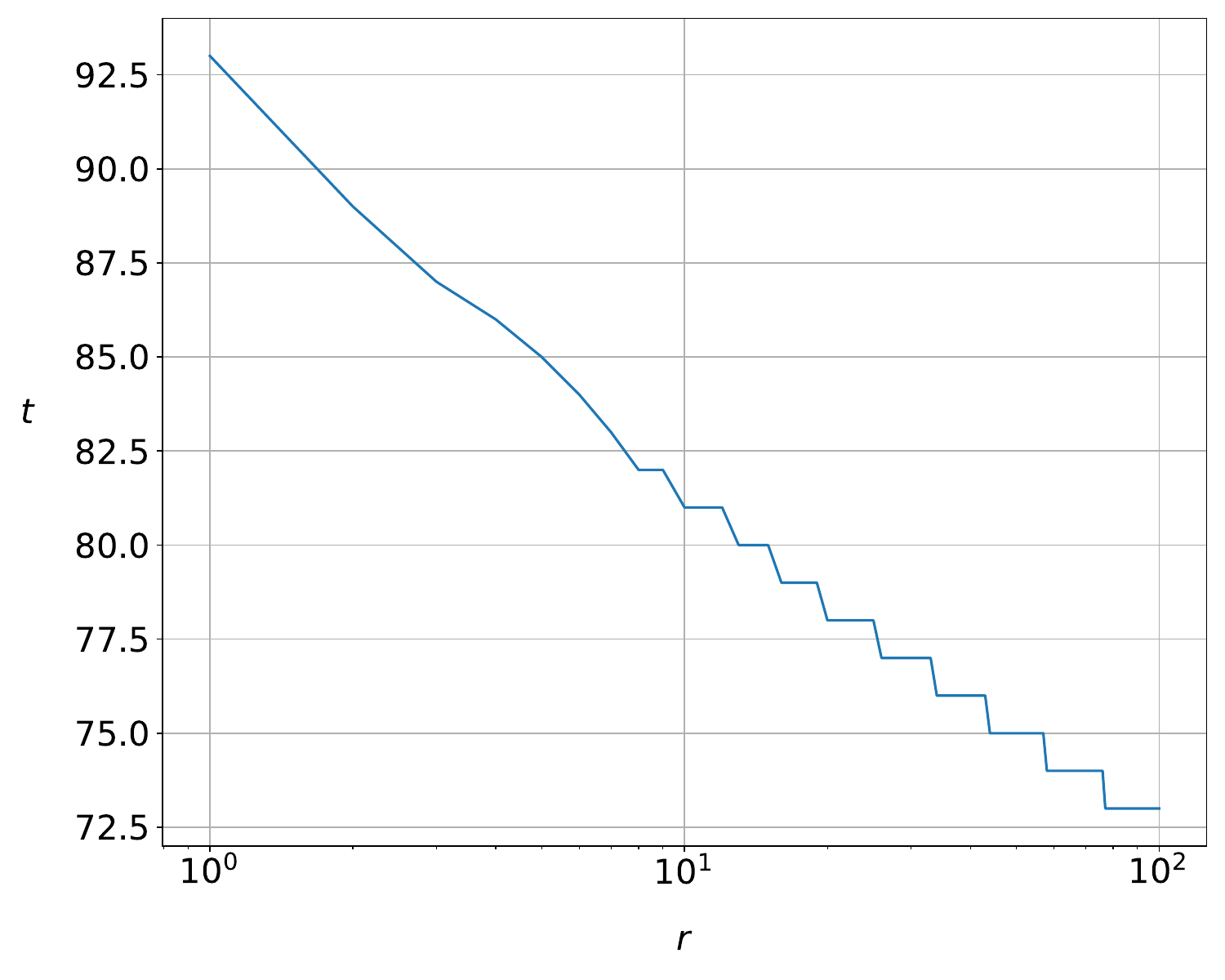}} 
    \subfigure[]{\includegraphics[width=0.49\textwidth]{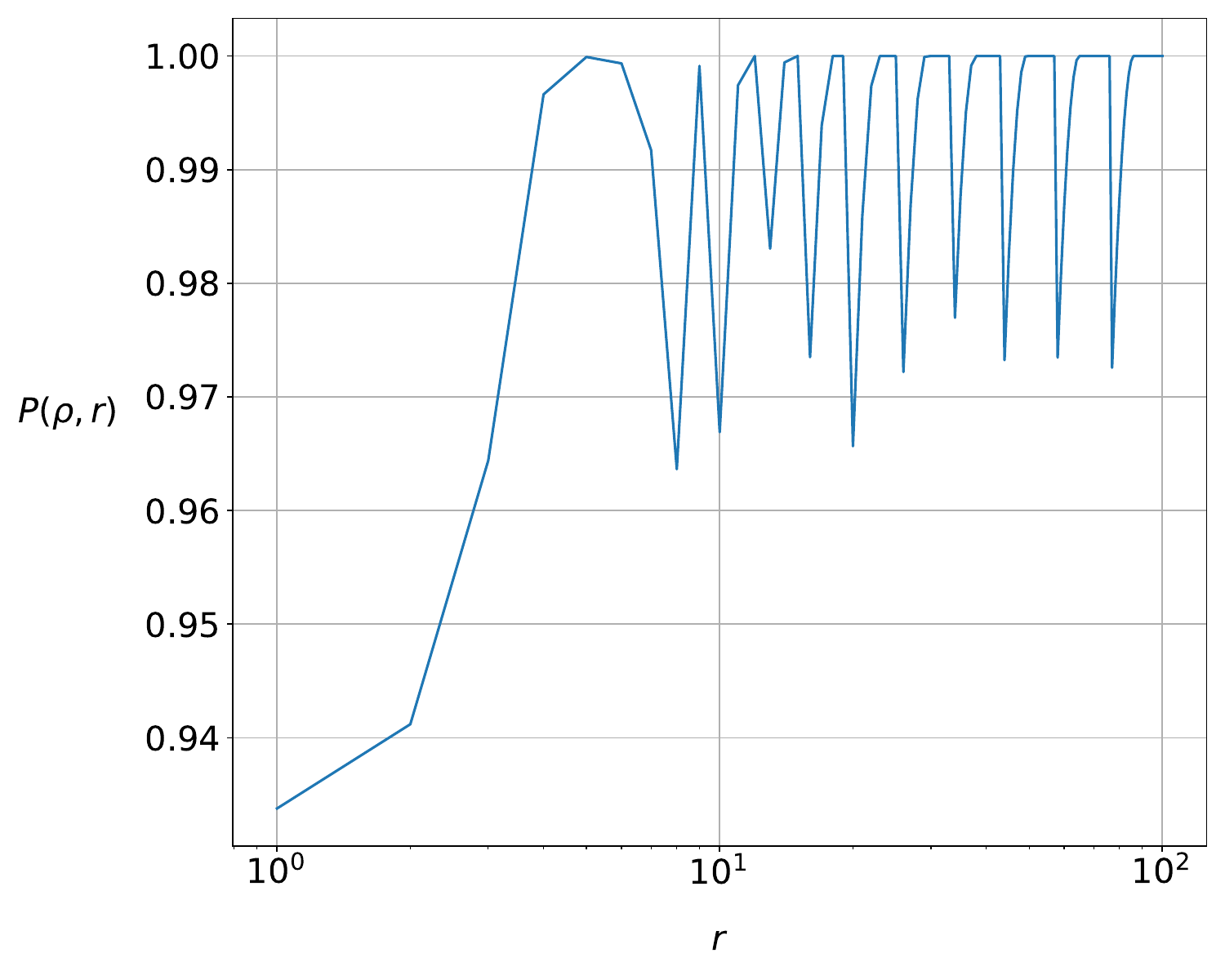}} 
    \caption{(a) The optimal threshold of GM-Th-QAOA and its probability (b) $P(\rho, r)$ versus $r$ in linear-log scale for the distribution Binomial($n, p$) with $n = 200$ and $p = 0.5$ up to $100$ rounds. By plot (a), the threshold value starts to stagnate for some rounds after a certain point. For a given value of optimal threshold $t$, evolving the number of rounds, the probability $P(\rho, r)$ increases until eventually arriving at the maximum value of $1$, as observed in the plot of (b). From there, the only way to improve the performance of GM-Th-QAOA is by changing the threshold to the next value, $t - 1$. However, we may need more than one round for the change to be advantageous, and thus, the algorithm stagnates in that interval. Upon reaching $t - 1$, probability returns to below $1$, and the process repeats, which explains the behavior of Fig.~\ref{binomial1}. Indeed, we can observe that the points with probability $1$ of the plot (b) match the stagnation points of Fig.~\ref{binomial1}.}
    \label{binomial2}
\end{figure*}

We provide numerical experiments computing the formula of Theorem~\ref{thm1} with different probability distributions to emphasize important aspects of our analytical results. The distributions considered are Pareto($\epsilon, x_m$), with probability distribution given by Eq.~\eqref{eqCd02}, and the distributions Normal($u, s^2$), Gamma($a, b$), and Binomial($n, p$), with probability distributions given by 
\begin{equation}
    \label{eqD01}
    \begin{split}
        & f_X(x) = \frac{1}{s \sqrt{2 \pi}} e^{-\frac{1}{2}\left(\frac{x - u}{s}\right)^2} , \ x \in (-\infty, \infty),
        \\ & f_X(x) = \frac{b^a}{\Gamma(a)} (-x)^{a - 1} e^{b x}, \ x \in (-\infty, 0),
        \\ & f_X(x) = \binom{n}{x} p^x (1- p)^{n-x}, \ x \in \{0, 1, \ldots, n\},
    \end{split}
\end{equation}
respectively, where $\Gamma(a)$ is the gamma function. Normal($u, s^2$) and Binomial($n, p$) are the usual normal and binomial distributions, respectively, and Gamma($a, b$) is a reflected version of the usual gamma distribution. Theorem~\ref{thm3} is applicable for the considered continuous distributions since the limit $L$ is $1/e$ for both Normal($u, s^2$) and Gamma($a, b$) (for all used instances on gamma distribution), and given by Eq.~\eqref{eqCd03} for Pareto($\epsilon, x_m$).

\begin{figure*}[ht]
    \centering
    \subfigure[]{\includegraphics[width=0.49\textwidth]{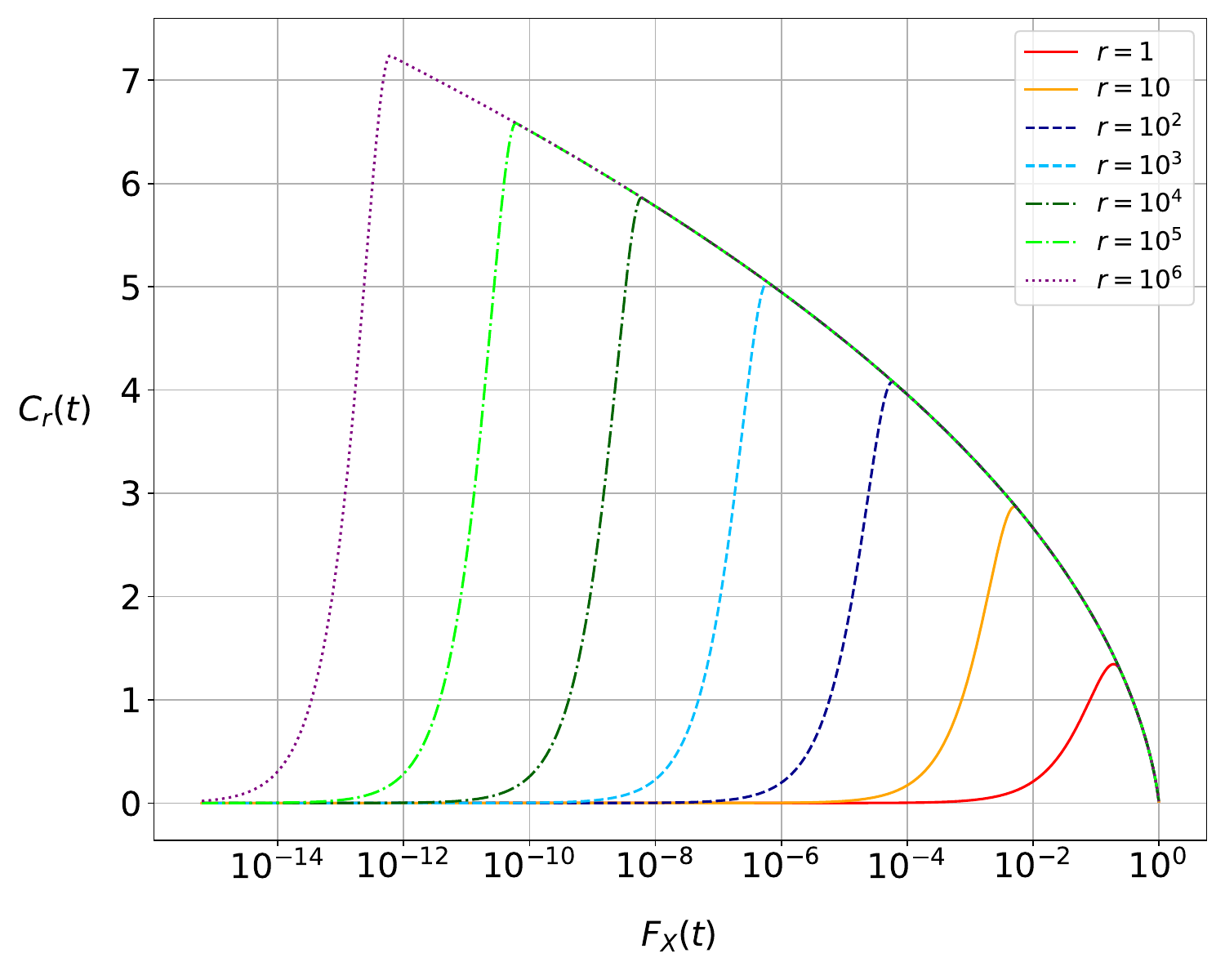}} 
    \subfigure[]{\includegraphics[width=0.49\textwidth]{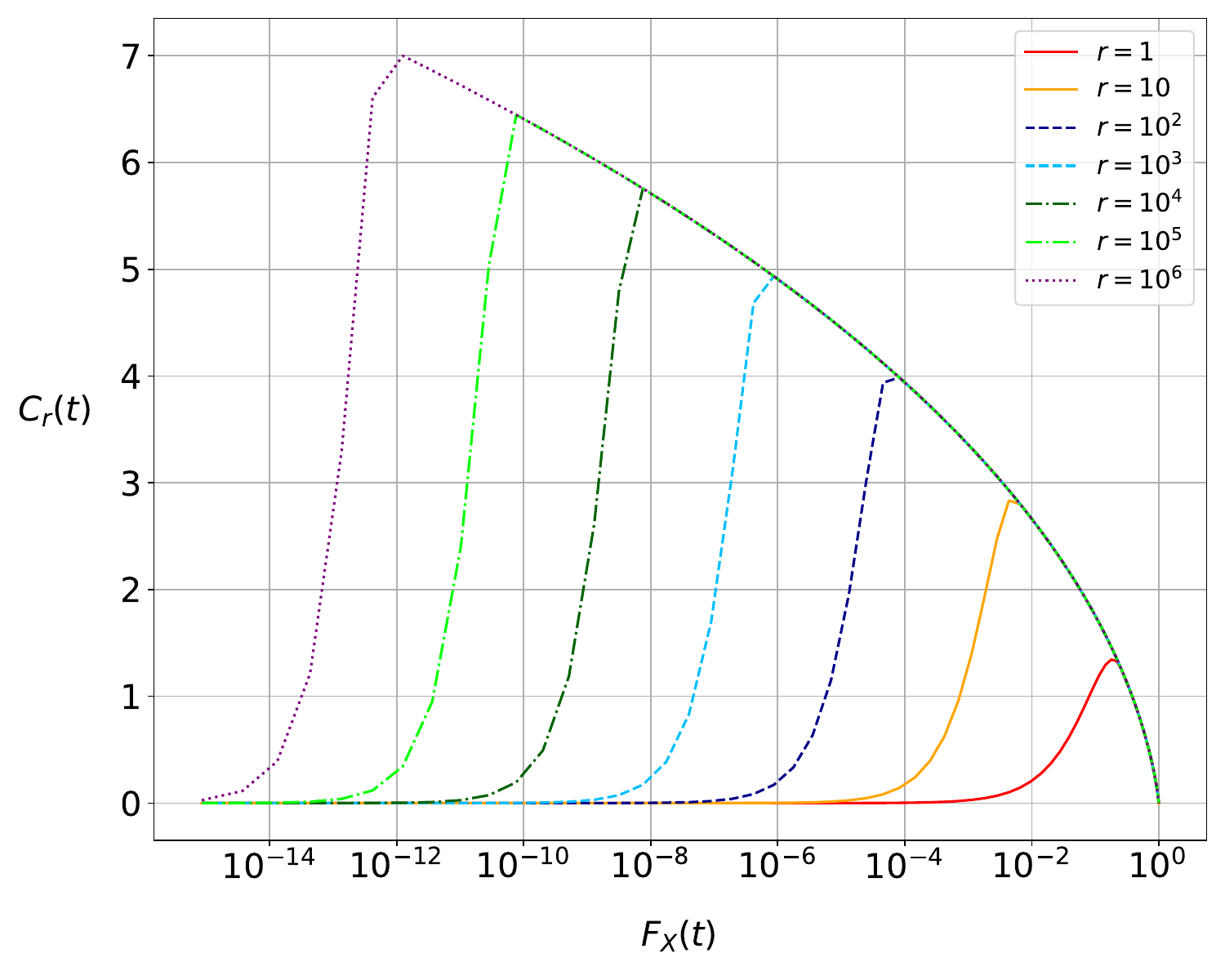}} 
    \caption{Threshold curve of distributions (a) Normal($u, s^2$) and (b) Binomial($n, p$) ($n = 200$ and $p = 0.5$) with $C_r(t)$ versus $F_X(t)$ on a linear-log scale. The resolution considered on the continuous distribution was of $2000$ values for the threshold. For viewing purposes, we show only the values of $r$ in terms of powers of $10$ from $1$ up to $10^{6}$. The envelope that unites the curves is the interval of $P(\rho, r) = 1$.}
    \label{curves}
\end{figure*}

We begin with the normal distribution. The study of that distribution is justified by the ubiquity generated by the central limit theorem. Recall the normally distributed instances of the Capacitated Vehicle Routing and Portfolio Optimization problems~\cite{vehicle, maoa}. It would not be surprising if other optimization problems were normally distributed. The choice of parameters $u$ and $s^2$ on normal distribution is irrelevant to the analysis since the random variable $Z$ is always the standard normal distribution. The results of QAOA are compared with a brute force algorithm known as classical random sampling (CRS)~\cite{vehicle}, which samples a given number of times of the distribution and takes the smallest one. As Bennett et al.~\cite{vehicle}, the equivalent computational effort of CRS to be used is $2r$. The minimum value of a set of samples is known as the first order statistic. The expected value of the first order statistic of the normal distribution is computed by using the Blom~\cite{orderStats} asymptotic approximation,
\begin{equation}
    \label{eqD02}
        u + \Phi^{-1} \left( \frac{1 - c}{2r - 2c + 1} \right) s,
\end{equation}
with $c = 0.375$ and $\Phi$ is the cdf of the standard normal distribution. We simulate GM-Th-QAOA and CRS up to a large number of layers to compare it asymptotically. GM-QAOA, on the other hand, is simulated up to the limit of its exponential complexity. Fig.~\ref{normal} shows how both the standard score and quantile achieved by the algorithms scales, illustrating explicitly the quadratic gain of GM-Th-QAOA over classical brute force, given by Theorem~\ref{thm3}, and the logarithm scale of the standard score, expected from the analysis of Subsec.\ref{sub3D}.

On distribution Gamma($a, b$), we emphasize the asymptotic aspect of Theorem~\ref{thm3}. For that, we simulate the distribution for values of $a$ and $b$ that progressively make it left-skewer. That way, the quantile of the expected value $F_X(\mu)$ decreases so that QAOA already begins with a low quantile and evolves slowly on the first rounds. However, as Fig.~\ref{gamma_pareto}(a) shows, given a sufficient number of rounds, the asymptotic scale of $F_X(E_r(t)) = \Theta(1/r^2)$ appears. For the distribution Pareto($\epsilon, x_m$), we illustrate its asymptotic behavior on $C_r(t)$, obtained on the Subsec.~\ref{sub3D}. To a desired $0 < j < 1$ such that $C_r(t) = \Theta(r^j)$, we can choose the parameter $\epsilon = 2(1-j)/j$. The parameter $x_m$ is a scale parameter and, therefore, irrelevant to the analysis. Fig.~\ref{gamma_pareto}(b) shows the simulation of GM-Th-QAOA for different values of $j$, discussing the asymptotic convergence to the theoretical results.

Finally, we consider the binomial distribution. The distribution Binomial($n, p$) is the sum of $n$ independent Bernoulli random variables with probability $p$ and therefore, by the central limit theorem, approaches normal distribution on $n \rightarrow \infty$. That allows a direct comparison between continuous and discrete distribution to emphasize their differences. Fig.~\ref{binomial1} plots $C_r(t)$ and $F_X(E_r(t))$ versus $r$ for both binomial and normal distributions. As expected from the central limit theorem, both scales similarly. However, note that from a certain $r$ on the binomial distribution, $C_r(t)$ and $F_X(E_r(t))$ do not grow for every increase in $r$, keeping stagnant for some rounds. The case of $F_X(E_r(t))$ can be partially explained by the definition of the cdf on points outside the support $R_X$, but the complete picture is explained in Fig.~\ref{binomial2}, which shows the optimal threshold and its associated probability $P(\rho, r)$, both in a function of $r$. 

Fig.~\ref{curves} shows the threshold curve of binomial and normal distributions for different values of $r$. Without loss of generality, we consider $C_r(t)$ versus $F_X(t)$ instead of the original $E_r(t)$ versus $T$ of the Subsec.~\ref{sub3A}. Both curves are similar and illustrate the result of Theorem~\ref{thm2}, monotonically increasing $C_r(t)$ to up the optimal point and then monotonically decreasing.

\section{General bounds on Grover-based QAOA} \label{sec5}

In Sec.~\ref{sec3}, we prove an asymptotic tight bound that implies GM-Th-QAOA has a quadratic speed-up over the classical brute force approach. That raises the question of whether that bound is general for any Grover-based QAOA, whether GM-QAOA or any potential new variation that can emerge. To answer that question, we develop a technique to get a general upper bound on Grover-based QAOA that consists of getting the maximum amplification of probability over any set of degenerate states. With that upper bound established, we explicitly construct the minimum expectation value within that constrained framework.

The statistical analysis introduced by Headley and Wilhelm~\cite{headley_paper} for GM-QAOA can be applied to Grover-based QAOA. For that, we introduce the subscript on random variables to differ between the distributions of the functions $c(k)$ and $q(k)$. Specifically, for the originals $X$, $Y$, and $Z$ we respectively denoted $X_c$, $Y_c$, and $Z_c$ for $c(k)$, and $X_q$, $Y_q$, and $Z_q$ for $q(k)$. The random variable $X_q$ can be expressed as a mapping from $X_c$ such that $X_q = q(X_c)$. For instance, in the GM-Th-QAOA, we have
\begin{equation}
    \label{eqE01}
    X_q = q(X_c) = 
    \begin{cases}
        -1, &  X_c \leq t\\
        0,              & \text{otherwise}.
    \end{cases}
\end{equation}

Applying the analogous analysis, from Eq.~\eqref{eqCc03}, $\varphi_X(\gamma)$ becomes $\varphi_{X_q}(\gamma)$, while the derivative of the characteristic function is changed from $\varphi'_X(\gamma)$ to $\Psi_X(\gamma)$, where
\begin{equation}
    \label{eqE02}
    \begin{split}
    \Psi_X(\gamma) & = i {\displaystyle \langle s | H_C {U}_{P}(\gamma) | s \rangle} = i \sum_{x \in R_{X_c}} x f_{X_c}(x) e^{i \gamma q(x)}.
    \end{split}
\end{equation}
The symbols $\mu$ and $\sigma$ continue denoting the mean and standard deviation associated with the cost function. In particular, follows the necessary conditions of $\mu = -i \Psi_X(0)$ and $\Psi_Y(\gamma) = \sigma \Psi_Z(\sigma \gamma)$. Additionally, we denote $C_r$, where $E_r = \mu - C_r \sigma$, and  $C(r)$ as the maximum $C_r$ achieved by Grover-based QAOA.

To bound the maximum amplification of the probability over any set of degenerate states, we define $S_T$ as a set of elements on the spectrum of the Hamiltonian $H_Q$ with some fixed cost $x_o$. Suppose that $H_Q$ is built from an arbitrary problem Hamiltonian $H_C$. For a given $r$, our goal is to maximize the ratio between the probability of measuring a state on $S_T$ before and after the application of the QAOA operators optimizing the choices of the ratio $|S_T|/M$ and the probability distribution $f_{X_q}(x)$. The only restriction on the choice of the distribution $f_{X_q}(x)$ is sign the probability $|S_T|/M$ on value $x_o$. To get it, consider taking the expectation value on the final state of Grover-based QAOA of a third Hamiltonian $H_{max}$ that encodes $x_o$ to $|S_T|$ elements and $0$ to the remainders, with ratio $\rho = |S_T|/M$. The probability of measuring an element of $S_T$ on the initial state is $\rho$, while after the application of QAOA operators is $ E_r^{max}(S_T, f_{X_q})/x_o$, where $E_r^{max}(S_T, f_{X_q})$ denotes the expectation value of that configuration. We want to maximize the ratio between then, named $\eta_r(S_T, f_{X_q})$, and use it bound to explicitly build the minimum expectation value on an arbitrary instance of some Grover-based QAOA by sequentially maximally amplifying the states in ascending order of costs until the sum of probabilities reaches $1$. As the amplitudes of degenerate states are equal, the amplification is in ascending order of the support of $X_c$.

For $r = 1$, we get the maximum amplification analytically. The mean and $\Psi_X(\gamma)$ are with respect to $H_{max}$, giving $\mu = x_o \rho$ and
$\Psi_X(\gamma) = i x_o \rho e^{i \gamma x_o}$.
Consequently,
\begin{equation}
    \label{eqE03}
    \begin{split}
\eta_1(S_T, f_{X_q}) & = 1 + | B(\beta)|^2 | \varphi_{X_q}(\gamma)|^2 \\ & + 2 \operatorname{Re} \{e^{i\gamma x_o} B^*(\beta) \varphi^*_{X_q}(\gamma) \}.
    \end{split}
\end{equation}
Since $|B(\beta)| \leq 2$ and $|\varphi_{X_q}(\gamma)| \leq 1$, then $ \eta_1(S_T, f_{X_q}) \leq 9$. That value is saturated if $\rho \rightarrow 0$ and the remainder probability of $f_{X_q}(x)$ is completed on value $0$, i.e., if $f_{X_q}(x)$ represents a binary function up to a scale change of ratio $\rho$, in which $\beta = \pi$ and $\gamma = \pi/x_o$ is optimal.

Note that the maximum amplification is $(2r + 1)^2$, the exact amplification of Grover's algorithm on the low-convergence regime. One can ask if the maximum amplification is on the low-convergence regime for any $r$. There is numerical evidence for that from Bennett and Wang~\cite{maoa} in the context of QWOA on the complete graph. Unfortunately, applying the individual bounds of $|B(\beta)|$ and $|\varphi_{X_q}(\gamma)|$ on general $r$ expression gives an exponential amplification of $9^r$. Moreover, due to the complexity of the expression, direct analytical treatment is unfeasible, necessitating indirect methods. Specifically, we demonstrate in Lemma~\ref{lm2}, proved on Appendix~\ref{ap3}, that the maximum amplification is $(2r + 1)^2$ by showing that the existence of a distribution that can achieve a larger amplification implies an explicit algorithm for the unstructured search problem with a larger average probability than the bound of Hamann, Dunjko, and Wölk~\cite{generalOptimalGrover}. With Lemma~\ref{lm2}, we can prove the lower bound on $E_r$ (i.e., a general upper bound on Grover-based QAOA performance), given by Theorem~\ref{thm5}.

\begin{lemma} \label{lm2}
For any number $r$ of layers in Grover-based QAOA with a set $S_T$ of ratio $\rho$, the amplification of the probability of measuring the elements of $S_T$ is bounded by
\begin{equation}
    \label{eqE04}
\eta_r(S_T, f_{X_q}) \leq (2r + 1)^2,
\end{equation}
where the tight bound is achieved considering the limit of $\rho \rightarrow 0$ and $q(k)$ equal to the binary function up to a scale change.
\end{lemma}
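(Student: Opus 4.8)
The plan is to argue by contradiction against the optimality of Grover's algorithm on the unstructured search problem, as established by Hamann, Dunjko, and W\"olk. First I would recast the amplification as a ratio of probabilities: since $E_r^{max}(S_T, f_{X_q}) = x_o\,P_{\text{after}}$, where $P_{\text{after}}$ denotes the probability of measuring an element of $S_T$ after the circuit, and the initial such probability is exactly $\rho$, the definition gives $\eta_r(S_T, f_{X_q}) = P_{\text{after}}/\rho$. The claim is therefore equivalent to the statement that no choice of phase function $q$ and angles can raise the probability of a degenerate level set of fractional size $\rho$ above $(2r+1)^2\rho$.

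Next I would record the elementary fact that Grover's own amplification never exceeds the target. Writing $\sin^2\theta = \rho$, the optimal probability is $P(\rho,r) = \sin^2((2r+1)\theta)$, and the inequality $|\sin(n\theta)| \le n\,|\sin\theta|$ with $n = 2r+1$ (provable by induction) yields $P(\rho,r)/\rho \le (2r+1)^2$, with equality attained only in the limit $\rho \to 0$. Consequently, if some distribution achieved $\eta_r > (2r+1)^2$, the resulting $P_{\text{after}} = \eta_r\,\rho$ would strictly exceed $P(\rho,r)$, the optimal average success probability for finding a marked element among a marked fraction $\rho$ using $r$ queries.

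The heart of the proof is then to convert such a hypothetical Grover-based QAOA into a genuine $r$-query algorithm for the unstructured search problem whose average success probability equals $P_{\text{after}}$, contradicting the bound of Hamann, Dunjko, and W\"olk. Given the oracle for the unknown marked set $\tilde S$ of fractional size $\rho$, the idea is to realize at each layer the uniform phase $e^{i\gamma_j x_o}$ that $U_P(\gamma_j)$ imprints on $S_T$ directly on $\tilde S$, while spreading over the complement $\overline{\tilde S}$ the same non-degenerate phases that $f_{X_q}$ prescribes. Here permutation invariance of the Grover mixer is decisive: because the expectation value depends only on the distribution $f_{X_q}$ and not on the location of its level sets, the success probability is exactly $P_{\text{after}}$ for every placement of $\tilde S$, so its average over placements is $P_{\text{after}}$ as well, which is precisely the quantity the optimality bound constrains.

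The main obstacle, which I expect to absorb most of the effort, is to carry out this simulation using exactly $r$ queries rather than $2r$. For a non-binary $q$ the phases $U_P$ assigns to the complement are not uniform, so the naive recipe---compute membership in $\tilde S$ into an ancilla, apply the correction, and uncompute---spends two oracle calls per layer and merely reproduces the crude $9^r$ bound that follows from $|B(\beta)| \le 2$ and $|\varphi_{X_q}(\gamma)| \le 1$. The key to a single query per layer is again the degeneracy of $S_T$: since every marked element carries the identical value $x_o$, the oracle is called upon to imprint only one uniform phase, which is achievable in a single query, and one must arrange the structure-independent part of each layer so that the non-uniform complement phases are applied without any further interrogation of $\tilde S$. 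Verifying that this one-query-per-layer implementation faithfully reproduces the hypothetical circuit---so that its output distribution, and hence its success probability $P_{\text{after}} = \eta_r\,\rho$, is genuinely that of the Grover-based QAOA---is the technical core. Granting it, the constructed algorithm makes $r$ queries and succeeds with probability $P_{\text{after}} > (2r+1)^2\rho \ge P(\rho,r)$, contradicting optimality and forcing $\eta_r(S_T, f_{X_q}) \le (2r+1)^2$; tightness then follows immediately by taking $q$ to be the binary function and letting $\rho \to 0$, which collapses the circuit to Grover's algorithm in the low-convergence regime where the amplification saturates $(2r+1)^2$.
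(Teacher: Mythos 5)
Your overall strategy---deriving a contradiction with the Hamann--Dunjko--W\"olk optimality bound by converting a hypothetical over-amplifying Grover-based QAOA into an $r$-query algorithm for unstructured search---is exactly the paper's strategy, and your preliminary reductions (writing $\eta_r = P_{\text{after}}/\rho$ and noting $P(\rho,r)/\rho \le (2r+1)^2$) are sound. However, there is a genuine gap at the step you lean on hardest: the claim that, by permutation invariance, the simulated circuit's success probability is \emph{exactly} $P_{\text{after}}$ for every placement of the unknown marked set $\tilde S$. This is false whenever $q(k)$ is not binary. Permutation invariance says the QAOA outcome depends only on the multiset of phase values, but your simulation applies a \emph{fixed} phase function to the complement and lets the oracle override it on $\tilde S$; as $\tilde S$ moves, the multiset of values actually imprinted changes (the positions of $\tilde S$ lose whatever values the fixed function assigned them, while the originally marked positions keep theirs). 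So the simulated circuit computes a distribution that differs from $f_{X_q}$ on up to $\sim 2m$ elements, and its success probability only approximates $P_{\text{after}}$. Exactness holds only in the binary case---which is precisely why the general lemma requires more than your argument provides.

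The paper's proof consists mostly of repairing this: it first shrinks the marked set to a subset $S_\delta \subseteq S_T$ of arbitrarily small ratio $0.5\delta$ (degeneracy guarantees $\eta_r(S_\delta, f^{\text{O}}) = \eta_r(S_T, f^{\text{O}})$), then writes the characteristic function of the distribution actually computed by the search algorithm as $\varphi^{\text{sp}}(\gamma) = \varphi^{\text{O}}(\gamma) + (e^{i\theta\gamma}-1)\varphi^{\text{R}}(\gamma) + \varphi^{\text{dif}}(\gamma)$ with $|\varphi^{\text{dif}}| \le \delta$, and propagates these perturbations through the $\mathcal{O}(4^r)$-term amplification expression to get $\eta_r(S_\delta, f^{\text{sp}}) \ge (2r+1)^2 + \epsilon - 64^r(\delta + \vartheta)$. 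Choosing $\delta$ and $\vartheta$ with $\epsilon > 64^r(\delta+\vartheta)$ then yields the contradiction uniformly over placements of the marked set, which is what the average-probability optimality bound actually requires. Relatedly, you identify the one-query-per-layer implementation as the technical core and then grant it; in the paper this is handled by having the oracle itself interrupt the phase separation on marked elements so that the combined diffusion--oracle action encodes the cost $x_o$ there, but even granting a perfect implementation, your argument still fails without the quantitative continuity step above---that step, not the query counting, is where the real work lies.
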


\begin{theorem} \label{thm5}
For any number $r$ of layers in Grover-based QAOA, the expectation value is bounded by
\begin{equation}
    \label{eqE05}
E_r \geq G_{X_c}(\tau_1)(2r +1)^2  + \tau_2 (1 - F_{X_c}(\tau_1)(2r + 1)^2),
\end{equation}
where $\tau_1$ is the maximum element of the support of $X_c$ in which $F_{X_c}(t) \leq 1/(2r + 1)^2$ and $\tau_2$ is the minimum element in which $F_{X_c}(t) > 1/(2r + 1)^2$. In particular, if $F_{X_c}(\tau_1) = 1/(2r + 1)^2$, then $E_r \geq \operatorname{E}[{X_c}| {X_c} \leq \tau_1]$.
\end{theorem}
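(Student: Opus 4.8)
The plan is to turn the amplification bound of Lemma~\ref{lm2} into a family of constraints on the outcome distribution of the measured state, and then lower-bound $E_r$ by minimizing the expectation value over every distribution obeying those constraints. The key realization is that, because the dynamics of a Grover-based QAOA depend only on $q$-values and $q$ is compiled from $c$, all computational basis states sharing a common cost are degenerate and therefore carry equal amplitude at the end of the circuit; hence the post-measurement statistics are fully captured by a probability mass function over the support $R_{X_c}$.

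First I would fix an arbitrary instance and let $p_r(x_o)$ denote the probability of measuring a state of cost $x_o$ after the $r$ rounds. For each $x_o \in R_{X_c}$ I apply Lemma~\ref{lm2} to the degenerate set $S_T = \{k \in S : c(k) = x_o\}$, whose ratio is $\rho = f_{X_c}(x_o)$. Since the amplification $\eta_r(S_T, f_{X_q}) = p_r(x_o)/f_{X_c}(x_o)$ is bounded by $(2r+1)^2$ regardless of the (now fixed) distribution $f_{X_q}$, this yields
\[ p_r(x_o) \leq (2r+1)^2\, f_{X_c}(x_o). \]
Each such inequality is merely a statement about the single realized $p_r$, so all of them hold simultaneously for the run. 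Together with $\sum_{x_o} p_r(x_o) = 1$ and $p_r \geq 0$, these constitute the feasible region over which I minimize.

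The minimization of $E_r = \sum_{x_o} x_o\, p_r(x_o)$ is then a fractional-knapsack linear program: one unit of probability mass must be distributed among the costs, each value $x_o$ having capacity $(2r+1)^2 f_{X_c}(x_o)$, minimizing the cost-weighted total. By the standard exchange argument (equivalently, by the stochastic-dominance fact that $E_r$ decreases as the cumulative mass $\sum_{y\leq x} p_r(y)$ increases), the greedy optimum fills capacities in ascending order of cost. With $\alpha = (2r+1)^2$, this saturates $p_r(x_o) = \alpha f_{X_c}(x_o)$ for every $x_o \leq \tau_1$, consuming total mass $\alpha F_{X_c}(\tau_1) \leq 1$, and deposits the residual $1 - \alpha F_{X_c}(\tau_1)$ on $\tau_2$; the defining property of $\tau_2$ guarantees this residual does not exceed its capacity, so the assignment is feasible. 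Evaluating the objective gives
\[ E_r \geq \alpha\, G_{X_c}(\tau_1) + \tau_2\bigl(1 - \alpha F_{X_c}(\tau_1)\bigr), \]
which is exactly Eq.~\eqref{eqE05}. The special case $F_{X_c}(\tau_1) = 1/(2r+1)^2$ makes the residual vanish, leaving $E_r \geq \alpha G_{X_c}(\tau_1) = G_{X_c}(\tau_1)/F_{X_c}(\tau_1) = \operatorname{E}[X_c \mid X_c \leq \tau_1]$.

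The hard part will be the first step: arguing rigorously that Lemma~\ref{lm2}, phrased as a worst-case amplification optimized over sets and $q$-distributions, legitimately delivers the simultaneous per-value constraints $p_r(x_o) \leq \alpha f_{X_c}(x_o)$ for the fixed instance at hand. Once that connection is secured, the remainder is routine: the greedy assignment is optimal by the exchange/monotonicity argument, and the final evaluation is bookkeeping with $F_{X_c}$ and $G_{X_c}$. I would also emphasize that the bound is valid even though the extremal greedy distribution need not be realizable by any physical circuit, since it serves only as the minimizer of the relaxed feasible region and hence a genuine lower bound on the achievable $E_r$.
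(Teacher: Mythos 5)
Your proposal is correct and follows essentially the same route as the paper: it applies the amplification bound of Lemma~\ref{lm2} per cost value to get the constraints $p_r(x_o) \leq (2r+1)^2 f_{X_c}(x_o)$, then minimizes the expectation over the resulting feasible region by greedily saturating the lowest costs up to $\tau_1$ and placing the residual mass on $\tau_2$. The only difference is presentational---you make explicit the linear-program/exchange argument and the feasibility check at $\tau_2$ that the paper's terse proof (and its surrounding discussion of ``sequentially maximally amplifying the states in ascending order of costs'') leaves implicit, and your worry about the first step is unfounded since Lemma~\ref{lm2} is a universal statement over sets of degenerate states and distributions, so it applies verbatim to the fixed instance.
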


\begin{proof}
To build our upper bound on expectation value, we assume the largest amplification of $(2r + 1)^2$, bounded by Lemma~\ref{lm2}, for the smallest solutions until $\tau_1$. The remainder probability is assigned to $\tau_2$, resulting in Eq.~\eqref{eqE05}. If $F_{X_c}(\tau_1) = 1/(2r + 1)^2$, the second term vanishes and $E_r \geq \operatorname{E}[{X_c}| {X_c} \leq \tau_1]$.
\end{proof}

The equality of Eq.~\eqref{eqE05} is referred to as the \textit{maximum amplification bound}. The bound is not tight since we can reach probability $1$ on the search problem only if $\rho$ is at least the larger ratio of $\rho_{Th}(r)$, a consequence of the fact that the amplification decreases as we move away from the low-convergence regime. Note that the MAOA operates close to the regime of the maximum amplification, although it does not use the expectation value as a metric. Despite this, the maximum amplification has the same asymptotic behavior as GM-Th-QAOA in all aspects considered---as a result, the same asymptotic behavior emphasized in the numerical experiments of Sec.~\ref{sec4} could be reached by computing the maximum amplification bound. Firstly, if $X$ is continuous, $F_{X_c}(\tau_1) = 1/(2r + 1)^2$ for any $r$ and the bound $E_r \geq \operatorname{E}[{X_c}| {X_c} \leq \tau_1]$ combined with $F_{X_c}(\tau_1) = \Theta(1/r^2)$ gives Corollary~\ref{crlr4}, a generalization of Theorem~\ref{thm3} which follows using analogous arguments. 

\begin{corollary} \label{crlr4}
For Grover-based QAOA, if $X_c$ is a continuous distribution and $f_{X_c}(R_{X_{c}}^{min}) = a$, where $0 < a < \infty$, then the quantile achieved by the expectation value is asymptotically bounded by
\begin{equation}
    \label{eqE06}
F_{X_c}(E_r) = \Omega \left(\frac{1}{r^2}\right). 
\end{equation}
\end{corollary}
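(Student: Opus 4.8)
The plan is to read the corollary directly off the maximum amplification bound of Theorem~\ref{thm5}, recycling the limit already computed inside the proof of Theorem~\ref{thm3}. Because $X_c$ is continuous, its cdf $F_{X_c}$ is continuous and takes every value in $(0,1)$, so for each $r$ there is an exact threshold $\tau_1$ with $F_{X_c}(\tau_1) = 1/(2r+1)^2$. This closes the discrete gap between $\tau_1$ and $\tau_2$ in Eq.~\eqref{eqE05}, and the ``in particular'' clause of Theorem~\ref{thm5} collapses to the one-sided bound $E_r \geq \operatorname{E}[X_c | X_c \leq \tau_1]$.

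First I would apply the monotonically non-decreasing map $F_{X_c}$ to both sides, which preserves the inequality and yields $F_{X_c}(E_r) \geq F_{X_c}(\operatorname{E}[X_c | X_c \leq \tau_1])$. It then remains only to control the right-hand side. Here I would invoke the computation of Eq.~\eqref{eqCb08}: under the hypothesis $f_{X_c}(R_{X_c}^{min}) = a$ with $0 < a < \infty$, two applications of L'Hôpital's rule give
\[
\lim_{t \to R_{X_c}^{min}} \frac{F_{X_c}\!\left(\operatorname{E}[X_c | X_c \leq t]\right)}{F_{X_c}(t)} = \frac{1}{2}.
\]
Since $F_{X_c}(\tau_1) = 1/(2r+1)^2 \to 0$ forces $\tau_1 \to R_{X_c}^{min}$ as $r \to \infty$, substituting $t = \tau_1$ gives $F_{X_c}(\operatorname{E}[X_c | X_c \leq \tau_1]) \sim \tfrac{1}{2} F_{X_c}(\tau_1) = \Theta(1/r^2)$. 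Chaining the two facts produces $F_{X_c}(E_r) = \Omega(1/r^2)$.

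No step here is genuinely hard, since the analytic work is already done: Theorem~\ref{thm5} supplies the lower bound on $E_r$ (resting on the amplification ceiling $(2r+1)^2$ of Lemma~\ref{lm2}), while Theorem~\ref{thm3} supplies the constant $L = 1/2$. The one point I would emphasize is why the conclusion must weaken from the $\Theta(1/r^2)$ of Theorem~\ref{thm3} to merely $\Omega(1/r^2)$ here. The maximum amplification bound is one-sided and not tight: as noted after Theorem~\ref{thm5}, amplifying every low-cost set by the full factor $(2r+1)^2$ simultaneously is impossible once $\rho$ leaves the low-convergence regime, so one can only lower-bound $E_r$ rather than sandwich it. Accordingly one can only lower-bound the quantile, which is exactly the $\Omega$ statement --- and this one-sidedness is precisely what formalizes that no Grover-based QAOA can beat the quadratic Grover-style speed-up.
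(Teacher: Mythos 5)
Your proposal is correct and follows essentially the same route as the paper: continuity of $X_c$ makes Theorem~\ref{thm5}'s ``in particular'' clause exact with $F_{X_c}(\tau_1) = 1/(2r+1)^2$, and then the monotonicity of the cdf together with the L'H\^opital limit $L = 1/2$ from the proof of Theorem~\ref{thm3} yields $F_{X_c}(E_r) \geq F_{X_c}(\operatorname{E}[X_c \mid X_c \leq \tau_1]) = \Theta(1/r^2)$. This is precisely the argument the paper sketches when it says the corollary ``follows using analogous arguments,'' and your closing remark on why the bound is one-sided ($\Omega$ rather than $\Theta$) is consistent with the paper's own discussion of the non-tightness of the maximum amplification bound.
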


Corollary~\ref{crlr4} implies that any Grover-based QAOA cannot be asymptotic better than the quadratic Grover-like speed-up, the most important conclusion of this work. Moreover, all the constructions of Subsec.~\ref{sub3D} are applicable to the maximum amplification bound. Now, combining Corollary~\ref{crlr1} and Theorem~\ref{thm5}, and assuming $X$ continuous gives a comparison of GM-Th-QAOA with maximum amplification bound on the large limit of $r$ of
\begin{equation}
    \label{eqE07}
     \frac{L}{4r^2} \leq F_{X_c}(E_r(t)) \leq \frac{L\pi^2}{16r^2},
\end{equation}
and thus GM-Th-QAOA is, in the worst case, $\pi^2/4$ times worse than the maximum amplification bound in terms of the cdf. With the maximum amplification bound, we can also bound $C(r)$---and the number of rounds to achieve a fixed approximation ratio---obtaining the analogous of Theorem~\ref{thm4} and Corollary~\ref{crlr3} for Grover-based QAOA, synthesized in Theorem~\ref{thm6}, that is proved in Appendix~\ref{ap4}. 

\begin{theorem}\label{thm6}
For any number $r$ of layers in Grover-based QAOA, $C(r) \leq 2 \sqrt{r(r + 1)}$ and, provided that $R_X^{min} \neq 0$ and $|R_X^{min}| < \infty$,
\begin{equation}
    \label{eqE08}
    r \geq \frac{\mu - \lambda R_{X_c}^{min}}{2\sigma \sqrt{1 + 1/r}}.
\end{equation}
\end{theorem}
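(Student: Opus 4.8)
The plan is to mirror the GM-Th-QAOA argument behind Theorem~\ref{thm4} and Corollary~\ref{crlr3}, replacing the exact Grover probability $P(\rho, r)$ by the maximum amplification furnished by Lemma~\ref{lm2}. First I would establish the analogue of Lemma~\ref{lm1}: that the maximum standard score $C(r)$ over all Grover-based QAOA instances is attained by a two-point (binary) distribution. Since $C_r = (\mu - E_r)/\sigma$ is invariant under shifts and rescalings of the cost function, and since the minimum expectation value within the constrained framework is built in Theorem~\ref{thm5} by maximally amplifying the lowest-cost levels, concentrating all the below-mean weight on a single low level while the remainder sits at a single high level cannot decrease the normalized ratio $(\mu - E_r)/\sigma$. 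Reducing to the binary function of ratio $\rho$ then lets me reuse Eq.~\eqref{eqCc06} in the form $C_r = (P - \rho)/\sqrt{\rho(1-\rho)}$, where now $P$ is the amplified probability constrained only by Lemma~\ref{lm2}.

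With this reduction the computation becomes explicit. Lemma~\ref{lm2} caps the amplification at $(2r+1)^2$, so the largest admissible probability is $P = \min(\rho(2r+1)^2, 1)$. For $\rho \leq 1/(2r+1)^2$ the cap is inactive and, using $(2r+1)^2 - 1 = 4r(r+1)$,
\begin{equation}
C_r = \frac{\rho((2r+1)^2 - 1)}{\sqrt{\rho(1-\rho)}} = 4r(r+1)\sqrt{\frac{\rho}{1-\rho}},
\end{equation}
which is strictly increasing in $\rho$; for $\rho > 1/(2r+1)^2$ one has $P = 1$ and $C_r = \sqrt{(1-\rho)/\rho}$, which is strictly decreasing in $\rho$. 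Hence the maximum occurs at $\rho = 1/(2r+1)^2$, and substituting this value gives $C(r) = 4r(r+1)/\sqrt{4r(r+1)} = 2\sqrt{r(r+1)}$, the claimed bound.

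Finally, the round bound follows exactly as in Corollary~\ref{crlr3}. Writing $E_r = \mu - C_r\sigma$ with $C_r \leq C(r)$, to reach an approximation ratio $\lambda$ one needs $(\mu - \lambda R_{X_c}^{min})/\sigma = C_r \leq C(r) = (C(r)/r)\,r$, so that $r \geq (\mu - \lambda R_{X_c}^{min})/((C(r)/r)\sigma)$; substituting $C(r)/r = 2\sqrt{(r+1)/r} = 2\sqrt{1 + 1/r}$ yields Eq.~\eqref{eqE08}. I expect the main obstacle to be the first step: rigorously justifying that the two-point distribution is the extremizer in the fully general Grover-based setting, where $\mu$ and $\sigma$ both vary with the chosen cost spectrum. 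Because Lemma~\ref{lm2} bounds only the per-level amplification and not the joint amplification profile, one must argue---along the lines of Lemma~\ref{lm1}---that redistributing mass toward a single extreme pair is optimal for the normalized quantity $(\mu - E_r)/\sigma$, rather than merely for the absolute expectation value $E_r$ as in Theorem~\ref{thm5}.
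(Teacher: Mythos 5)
Your skeleton matches the paper's proof exactly: reduce to a two-point distribution, maximize $C_r = (P-\rho)/\sqrt{\rho(1-\rho)}$ under the cap $P \leq \min\left(\rho(2r+1)^2, 1\right)$ from Lemma~\ref{lm2}, and convert $C(r) \leq 2\sqrt{r(r+1)}$ into Eq.~\eqref{eqE08} exactly as in Corollary~\ref{crlr3}. Your optimization over $\rho$ (increasing branch $4r(r+1)\sqrt{\rho/(1-\rho)}$ below the crossover, decreasing branch $\sqrt{(1-\rho)/\rho}$ above it, maximum $2\sqrt{r(r+1)}$ at $\rho = 1/(2r+1)^2$) is correct, and is in fact spelled out more explicitly than in the paper's appendix, which compresses this step into ``take binary function.''

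The genuine gap is the one you flag yourself: the reduction to the two-point distribution is asserted, not proved, and it is precisely the content of the paper's Appendix~\ref{ap4}. Your argument that ``concentrating all the below-mean weight on a single low level \ldots cannot decrease the normalized ratio'' is not a proof, because any such redistribution changes $\mu$ and $\sigma$ simultaneously, and nothing in your proposal controls that trade-off. The paper closes this hole as follows: the right-hand side of Theorem~\ref{thm5} is rewritten as a conditional expectation $\operatorname{E}[X_1 \mid X_1 \leq \tau_2]$, where $X_1$ is an auxiliary random variable agreeing with $X_c$ below $\tau_2$ whose mass at $\tau_2$ is padded so that $F_{X_1}(\tau_2) = 1/(2r+1)^2$ holds exactly; a second auxiliary variable $X_2$ absorbs the leftover mass at $\tau_2$ and the spectrum above it. This padding is unavoidable because for a discrete $X_c$ one generally has $F_{X_c}(\tau_1)$ strictly smaller than $1/(2r+1)^2$, so no level of $X_c$ itself sits at the crossover ratio. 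With the sub-threshold mass pinned at $\rho = 1/(2r+1)^2$, the Jensen-inequality argument of Lemma~\ref{lm1}, applied to the conditional variables $X_1$ given $X_1 \leq \tau_2$ and $X_2$ given $X_2 \geq \tau_2$, yields $(\mu - E_r)/\sigma \leq \sqrt{(1-\rho)/\rho} = 2\sqrt{r(r+1)}$, which is exactly the crossover value you computed. To make your proposal complete you must supply this construction (or an equivalent argument); as written, step one is the hard part of the theorem and it is missing.
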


The bound on the quantity $C(r)/r$ is decreasing in $r$, with a maximum of $2\sqrt{2}$ in $r = 1$ and a minimum of $2$ in $r \rightarrow \infty$. Combining Theorems~\ref{thm4} and~\ref{thm6}, $\kappa r \leq C^{GM}(r) \leq 2r$ on the limit of large $r$. Note that Theorem~\ref{thm6} improve the bound of Theorem 3 on Benchasattabuse et al.~\cite{lowerBounds} paper by a constant factor of $\sqrt{2}\pi$ on $r = 1$ and $2\pi$ on $r \rightarrow \infty$. Beyond the more general context of Grover-based QAOA, our lower bound has the advantage of allowing any cost function. We get also the analogous of the Eq.~\eqref{eqCc08} for the number of rounds to reach probability $1$ of measuring an optimal solution with 
\begin{equation}
    \label{eqE09}
    r \geq \frac{1}{2}\left( \frac{1}{\sqrt{f_{X_C}(R_{X_c}^{min})}} - 1 \right) = \Omega\left(1/\sqrt{f_{X_c}(R_{X_c}^{min})}\right)
\end{equation}
as $f_{X_c}(R_{X_c}^{min}) \rightarrow 0$

Furthermore, a direct comparison with Grover Adaptive Search follows directly from the bound on amplification of Lemma~\ref{lm2}. Since the probability is bounded by $f_{X_c}(R_{X_c}^{min}) (2r + 1)^2$, finding an optimal solution for an optimization problem with Grover-based QAOA with probability of at least, for instance, $1/2$, needs $\Omega(1/\sqrt{f_{X_c}(R_{X_c}^{min})})$ rounds, an analogous result to the Theorem 1 of Durr and Hoyer~\cite{gas1}.

To conclude the analytical discussion, we emphasize that although the optimization metric considered for the observable throughout this paper is the expectation value, the result of Lemma~\ref{lm2}, i.e., the bound on the amplification of the probability of sampled states, is an indication of quadratic gain that is independent of the particular metric of expectation value and may extend the key conclusion of our paper that the performance of Grover-based QAOA is limited to a quadratic speed-up over classical brute force for other metrics. In particular, an important weakness of expectation value as an optimization metric is that, except for the cases of instances that exhibit concentration of the sampled states to the average value on $M \rightarrow \infty$, such as Ref.~\cite{sherrington-kirkpatrick}, it is possible that we have a distribution of the sampled states in which the probability of measuring the best solutions is not adequately amplified, instead more amplifying the states with lower quality~\cite{CVaR, gibbs, samples, maoa}. Alternative optimization metrics were introduced in the literature to address this issue, including the Conditional Value-at-Risk (CVaR)~\cite{CVaR}, the Gibbs Objective Function~\cite{gibbs}, the probability of sampling states above certain quality~\cite{samples}, and even the probability of measuring optimal solutions~\cite{maoa}. However, the bound of Lemma~\ref{lm2} naturally remedies the discussed problem. In particular, as an example of what we did in the proof of Theorem~\ref{thm5} for the expectation value, we can assume the most favorable distribution of sampled states for combinatorial optimization, i.e., assuming the maximum quadratic amplifying the probability of the best states in ascending order of costs.

\subsection{Bounds on Max-Cut problem} \label{sub5A}

One application of our bounds is the Max-Cut, a problem widely considered in QAOA literature~\cite{qaoa_review}. The Max-Cut problem consists of finding a partition of the vertices of a given graph into two complementary subsets that maximize the number of edges between both subsets. The classical Goemans-Williamson algorithm~\cite{GW} provides the best-known approximation ratio for that problem with $\lambda \approx 0.8786$, which induces a point of comparison for the performance of quantum algorithms that aim to find approximate solutions, such as the QAOA. Under the assumption that the unique games conjecture holds, the approximation Goemans-Williamson gives the best approximation ratio guarantee for any polynomial-time algorithm on Max-Cut~\cite{uniqueGames}. Without that unproven assumption, approximate Max-Cut with an approximation ratio $\lambda = 16/17 \approx 0.9411$ is proved to be NP-hard~\cite{MaxCutNPH1, MaxCutNPH2}. In this sense, a widely known analytical result in QAOA literature is that for the particular class of $3$-regular graphs, QAOA with the original transverse field mixer and single layer guarantees an approximation ratio of $\lambda = 0.6924$~\cite{qaoa}. That result was extended to $r = 2$ and $r = 3$, with approximation ratio guarantees of $\lambda = 0.7559$ and $\lambda = 0.7924$, respectively~\cite{maxcut_ar1}. 

\begin{figure*}[ht]
    \centering
    \subfigure[]{\includegraphics[width=0.49\textwidth]{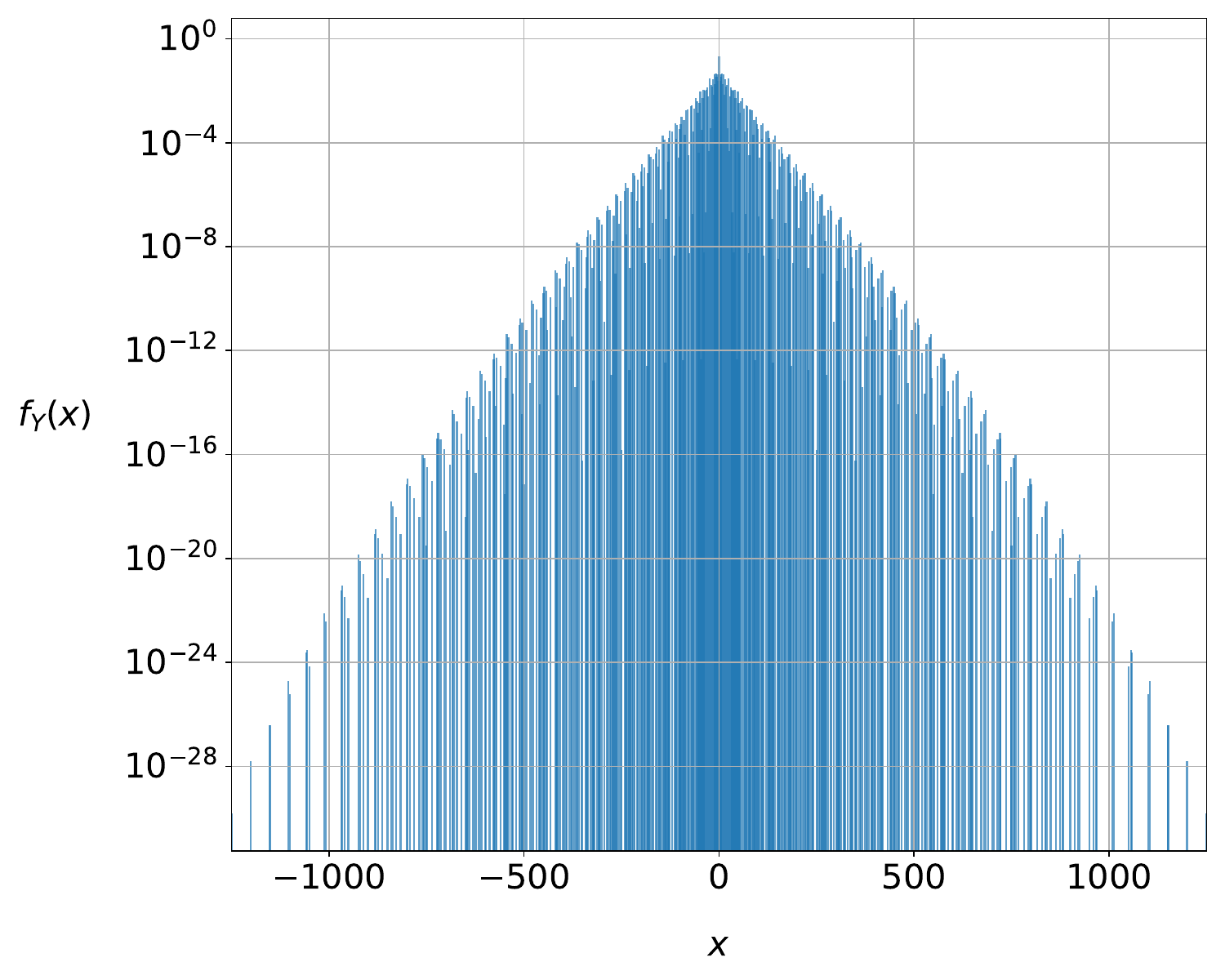}} 
    \subfigure[]{\includegraphics[width=0.49\textwidth]{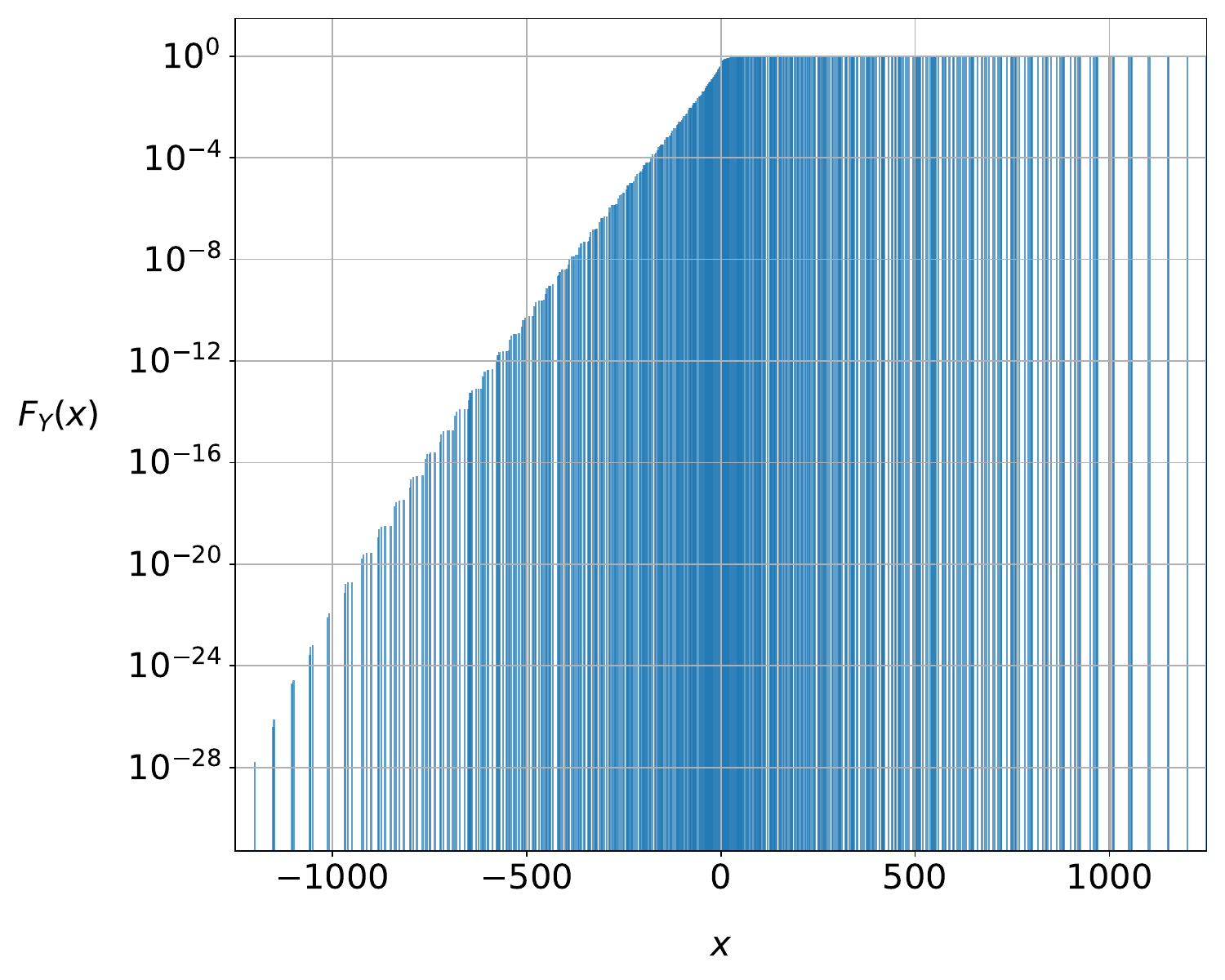}} 
    \caption{(a) Probability mass function and (b) cumulative distribution function concerning the random variable $Y$ for the Max-Cut problem on the graph $K_{50, 50}$. We compute efficiently the explicit distribution from our characterization of the solution space. Both graphs are on a log-linear scale. As the possible values for the cost are given by $\frac{1}{2} (n - 2j) (n - 2k)$ for all $0 \leq j, k \leq n$, the region near to $0$ is denser than the regions near to $R_{X_c}^{min}$ and $R_{X_c}^{max}$. Although the decay is not uniform, the general trend is clearly exponential.} 
    \label{maxcut_dist}
\end{figure*}

\begin{figure*}[ht]
    \centering
    \subfigure[]{\includegraphics[width=0.32\textwidth]{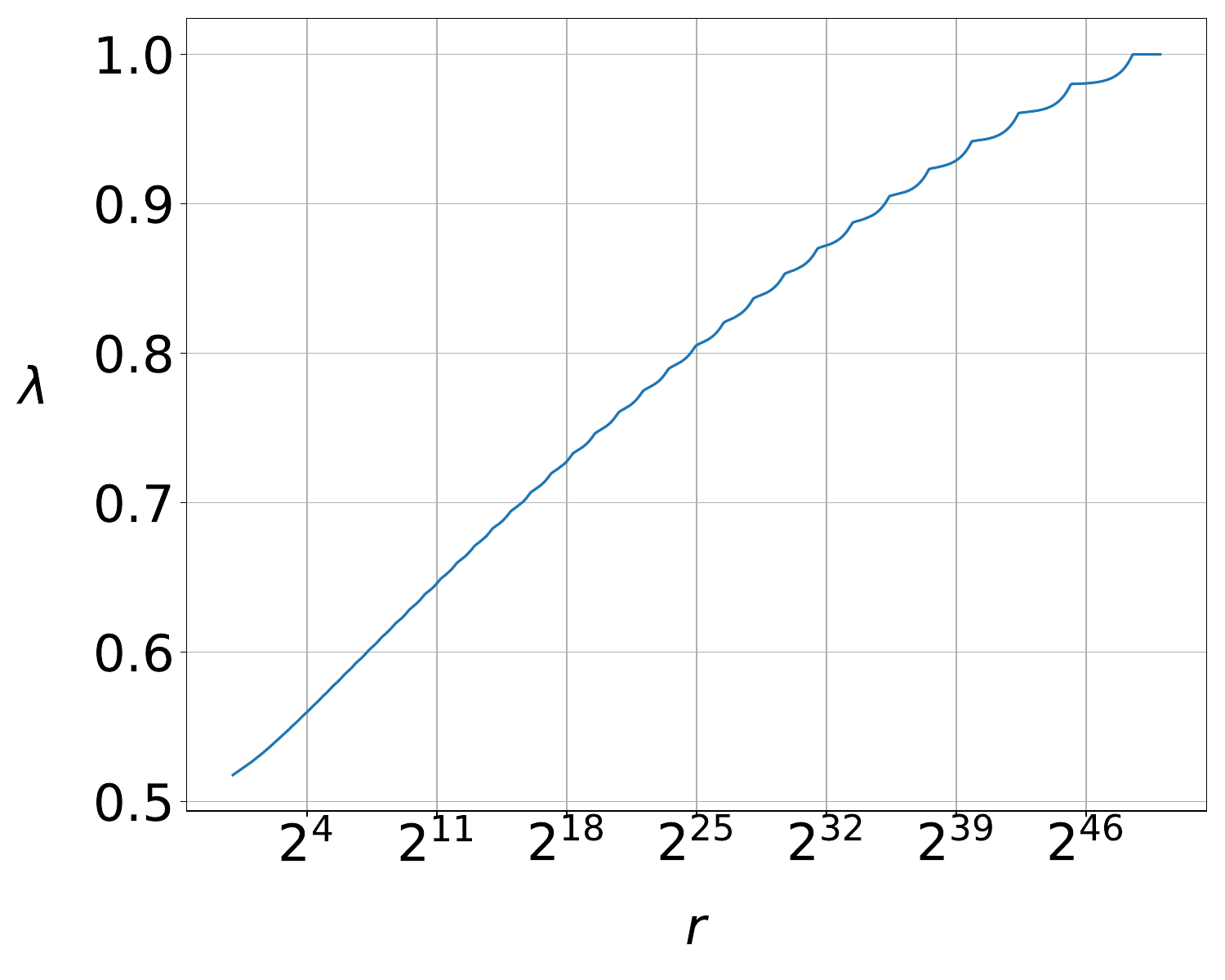}} 
    \subfigure[]{\includegraphics[width=0.32\textwidth]{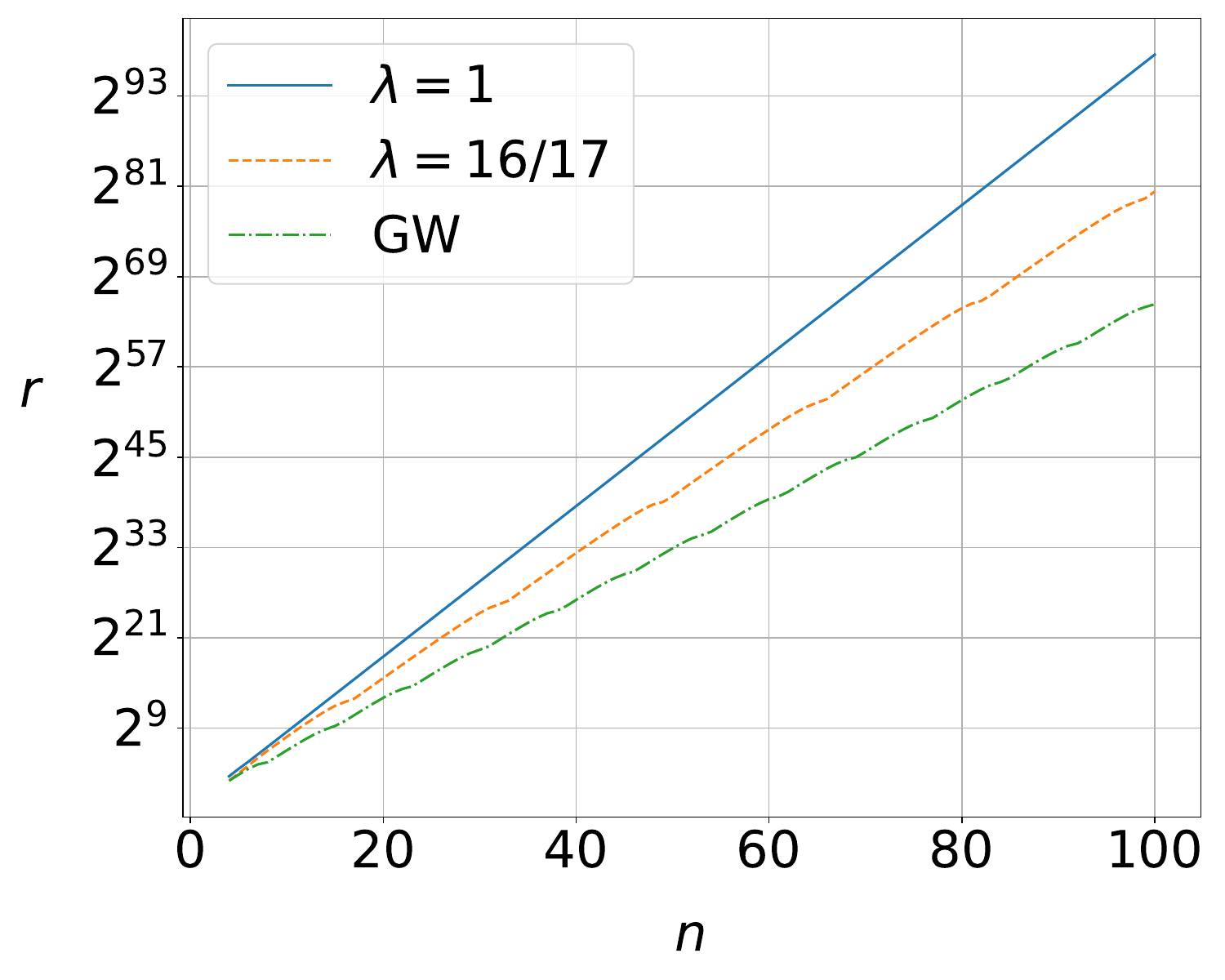}} 
    \subfigure[]{\includegraphics[width=0.32\textwidth]{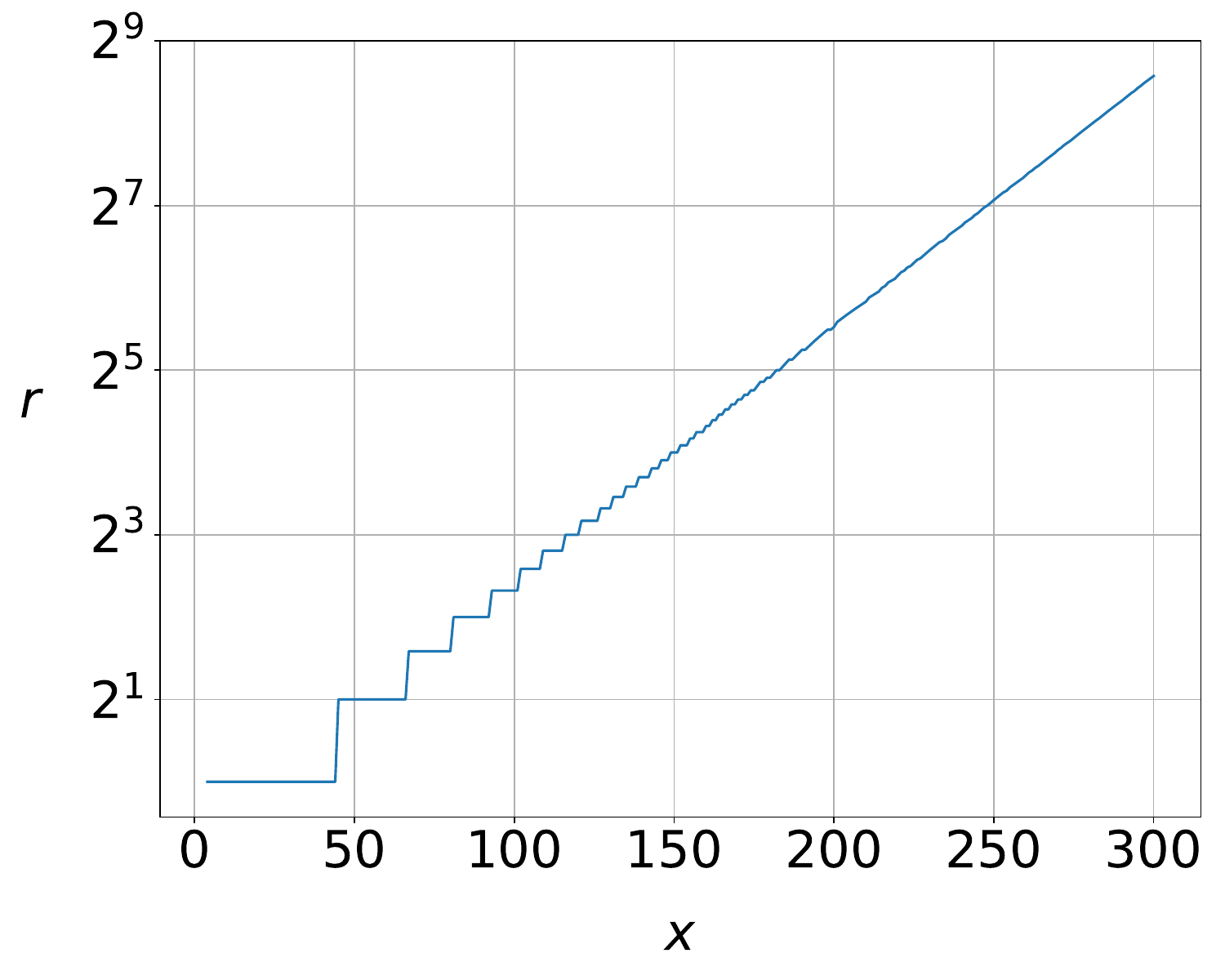}} 
    \caption{(a) Linear-log base $2$ plot of approximation ratio versus the number of layers considering the application of the maximum amplification bound for the graph $K_{50, 50}$ on Max-Cut. As expected, the growth rate is logarithmic. The resolution used on the number of layers (due to the extremely high number of layers to get $\lambda = 1$) is $\lceil 2^{x/100} \rceil$ for $x = 0, 1, \ldots, 5000$. (b) The minimum number of layers required to maximum amplification bound achieves three different values of approximation ratio on the Max-Cut with the graph $K_{n, n}$ for $n = 4, 5, \ldots, 100$. The considered values of $\lambda$ are $\lambda = 1$, in which we calculate analytically that $r = \lceil 2^{0.5(|\mathcal{V}| - 3)} \rceil$; $\lambda = 16/17$, the $\lambda$ value in which Max-Cut becomes NP-hard; and the approximation ratio guaranteed by the classical Goemans-Williamson algorithm, given by $\lambda \approx 0.8786$. The scale is log-linear with base $2$. The value of $r$ at which the approximation ratio is achieved was efficiently found with a binary search. As predicted, the number of layers scales exponentially in all of them. (c) The same as (b), but with $n = 4, 5, \ldots, 300$ and $\lambda = 0.52$. As the expectation value of a uniform superposition gives $\lambda = 0.5$, this approximation ratio is extremely low. However, even for such low performance, given a sufficient number of vertices, we observe the exponential dependence on the number of layers $r$.}
    \label{maxcut_sim}
\end{figure*}

In the specific context of GM-QAOA, Benchasattabuse et al.~\cite{lowerBounds}, by extending a lower bound on the time of quantum annealing to the context of QAOA~\cite{quantum_annealing}, prove that for the class of bipartite graphs, to achieve any constant approximation ratio guarantee, that variant requires a number of layers on the order of $\Omega(\sqrt{|\mathcal{E}|})$, where $|\mathcal{E}|$ is the number of edges of the graph. For the more general context of Grover-based QAOA, from the application of Theorem~\ref{thm6}, we improve by a constant factor that lower bound---given by Equation (36) of Benchasattabuse et al.~\cite{lowerBounds} paper---with
\begin{equation}
    \label{eqEa01}
    r \geq \frac{2\lambda - 1}{2\sqrt{1 + 1/r}} \sqrt{|\mathcal{E}|}.
\end{equation} 
In other words, for any choice of approximation ratio $\lambda$, the number of layers for the result of the algorithm reaches $\lambda$ must grow at least in order of $\sqrt{\mathcal{E}}$. Therefore, we cannot guarantee any fixed approximation ratio with a constant number of layers, a severe limitation for the NISQ devices, which require shallow depth. In general, from Theorem~\ref{thm6}, this problem will appear if the statistical quantity $(\mu - R_{X_c}^{min})/\sigma$ grows with the size of the instance to a given combinatorial optimization problem. However, at least in Max-Cut, the situation seems to be even worse, as we can see applying the bound of Eq.~\eqref{eqE09}. 

In that case, we consider additionally that the bipartite graph is connected. Since these graphs have a unique bipartition, the number of cuts of maximum size is $2$, and thus $f_{X_c}(R_{X_c}^{min}) = 1/2^{|\mathcal{V}| - 1}$, where $|\mathcal{V}|$ is the number of vertices of the graph. Therefore, $r$ scales exponentially on the numbers of vertices with $r \geq 2^{0.5(|\mathcal{V}| - 3)} = \Omega(\sqrt{2^{|\mathcal{V}|}})$ as $|\mathcal{V}| \rightarrow \infty$, and since $|\mathcal{E}| \geq |\mathcal{V}| - 1$ on connected graphs, also scales exponentially on $|\mathcal{E}|$. Of course, the bound is not applicable on approximate solutions with $ \lambda < 1$. Nevertheless, at least for the class of the complete bipartite graphs, we can argue that the growth is exponential by analyzing its probability distribution. 

Let us consider the complete bipartite graph $K_{n, n}$ with bipartition on the sets $V_1$ and $V_2$. Suppose a solution of Max-Cut as a partition on the sets $S_1$ and $S_2$ in which among the vertices of $S_1$, the number of vertices that belong to $V_1$ and $V_2$ are respectively $j$ and $k$. The size of the cut can be computed by discounting to the total number of edges $n^2$ of the whole graph, $jk + (n - j)(n - k)$, i.e., the number of edges induced by the union of both complete bipartite graphs $K_{j, k}$ and $K_{n - j, n - k}$ within the sets $S_1$ and $S_2$, respectively. Thus, with our definition of considering minimization problems and using the random variable $Y$ by subtraction of the mean $-n^2/2$, the solution space is composed by $\binom{n}{j} \binom{n}{k}$ solutions of cost $\frac{1}{2} (n - 2j) (n - 2k)$ for all $0 \leq j, k \leq n$. Although we do not have the explicit distribution, the current characterization of the solution space is sufficient to see that it presents an exponential decay toward the optimal solution. As the distribution is symmetric, we consider, by simplicity, $j, k \leq \lfloor n/2 \rfloor$. Fixing $j$ (or $k$), with a large enough $n$, the number of solutions grows exponentially with $k$ (or $j$) at the same time that the costs have a linear decrease. That combined behavior implies that a trend of a linear increase of the costs to lower values of $k$ and $j$ is accompanied by an exponential decay of their probabilities. Fig.~\ref{maxcut_dist} illustrate the decay of the distribution for the graph $K_{50, 50}$ by showing the graphics of $f_Y(x)$ and $F_Y(x)$. The exponential decay, with the arguments of Sec.~\ref{sub3D} on the asymptotic limit, infers in a logarithmic increase of $C_r$ with the number of layers, and therefore, as $(\mu - R_{X_c}^{min})/\sigma$ grows with the square root on the number of edges for bipartite graphs, the number of layers to achieve guaranteed in terms of $\lambda$ must increase exponentially with the number of vertices/edges. 

That result points out a limitation of Grover-based QAOA that is serious beyond the NISQ context since the exponential increase in the number of rounds implies that we cannot obtain a polynomial-time algorithm that guarantees any constant approximation ratio. In the context of comparative literature between QAOA mixers, our findings for QAOA with Grover mixer contrast with the aforementioned result that the QAOA with transverse field mixer guarantees a constant approximation ratio for $3$-regular graphs even with a single round.  

To illustrate our arguments, we simulate the maximum amplification bound on the complete bipartite graphs of different sizes. Fig.~\ref{maxcut_sim}(a) shows the logarithm growth of the approximation ratio on $r$ to the graph $K_{50, 50}$. Fig.~\ref{maxcut_sim}(b) and Fig.~\ref{maxcut_sim}(c) display the exponential dependence on the number of rounds to achieve different values of approximation ratio when we scale $n$. In the first between the last two figures mentioned, we choose approximation ratios with practical interest, while in the last one, we consider an extremely low approximation ratio to emphasize the stiffness of the limitation.

In a more general context, a given type of instance must suffer from the same limitation if there is simultaneously a distribution with exponential decay and the quantity $(\mu - R_{X_c}^{min})/\sigma$ grows above the logarithmic rate with the size of the problem. Thus, it is likely that there are other classes of graphs on Max-Cut and other types of instances beyond the Max-Cut problem that fit into these conditions. For example, we can mention the aforementioned normally distributed instances of the Capacitated Vehicle Routing and Portfolio Optimization problems~\cite{vehicle, maoa}, which meet the first criterion.

\section{Conclusions} \label{sec6}
In the present work, we apply the statistical approach of Headley and Wilhelm~\cite{headley_paper} on GM-Th-QAOA, obtaining an expression for the expectation value with complexity independent of the number of layers. With the expression, we first solve the conjecture of the threshold curve and then get bounds of different natures, including on the statistical quantities of quantile and the standard score, and on the minimum number of layers required for the algorithm to guarantee some fixed approximation ratio. The bound on the quantile is of particular interest since it reflects explicitly a quadratic Grover-style speed-up. Subsequently, we generalize the GM-Th-QAOA bounds to the general Grover-based QAOA framework. We derive these findings from the result---established through a proof by contradiction with the optimality of the unstructured search problem---that the probability of measuring states on Grover-based QAOA is bounded by a quadratic growth on the number of layers. The findings showed that Grover-based QAOA achieves, at most, the same asymptotic performance as GM-Th-QAOA. Consequently, we obtain the main contribution of this work: the formal establishment that the Grover mixer’s performance is bounded by the quadratic bound of the unstructured search problem, confirming previous conjectures in the literature~\cite{numericalEvidence1, numericalEvidence2}. That limiting can be severe for combinatorial optimization, as evidenced by the application of Max-Cut on the complete bipartite graph, which we conclude by an argument using the asymptotic decay rate of the probability distribution that it requires an exponential number of layers to maintain constant performance and, therefore, we cannot have a polynomial-time algorithm that guarantees a fixed approximation ratio. This highlights that in order to get significant results with QAOA, especially in the NISQ era, it is essential for the algorithm to explore the structure of the optimization problems. Indeed, recall the numerical evidence of Golden et al.~\cite{numericalEvidence2} that suggests the possibility of exponential gain of QAOA with structure-dependent mixer over Grover mixer variants. Thus, research should be directed toward understanding the mechanisms by which different types of mixers can benefit from the structure of particular problems, a path opened by Headley~\cite{headley_thesis} with the statistical approach on the transverse field mixer and the line mixer. 

For a more accurate picture of the limitation of QAOA with Grover mixer, future works would consider the application of the maximum amplification bound to more graph classes on Max-Cut and other combinatorial optimization problems, and so estimate how common the need for exponential growth on the layers to achieves a target approximation ratio. Fortunately, explicit knowledge of the distribution is not necessarily required to find the asymptotic behavior and establish how the performance scales---as was the case of complete bipartite graphs on Max-Cut---which could make the task much easier. A case of particular interest is the investigation of classes of instances where QAOA with another mixer besides Grover mixer has some known guarantee on the approximation ratio, such as Max-Cut on $3$-regular graphs~\cite{qaoa, maxcut_ar1} and Sherrington-Kirkpatrick model at infinity size~\cite{sherrington-kirkpatrick}, both for traverse field mixer. If the maximum amplification bound requires exponential resources to maintain constant performance for a case of this type, the direct comparison would underscore the limitation of the Grover mixer as a mixer for QAOA.

Yet in the Grover mixer context, there are at least two open questions. Firstly, it would be interesting to decide whether GM-Th-QAOA is the best Grover-based QAOA for all possible instances, or at least whether GM-Th-QAOA outperforms GM-QAOA always, confirming the numerical evidence. Intuitively, it is reasonable to think that the most efficient agnostic-structure method possible is to compile the cost function on a binary function and perform Grover's algorithm. The results and insights of the present work indicate that this can be the case. However, formal proof is still needed. Secondly, one can ask whether GM-QAOA even reflected the quadratic Grover-style speed-up in the sense of Theorem~\ref{thm3}. However, insights would be needed to answer that question analytically since direct analytical treatment to bound Eq.~\eqref{eqCc03} is infeasible.

\begin{acknowledgments}
G.A.B. thanks the financial support of CAPES/Brazil. F.L.M. thanks the financial support of CNPq/Brazil grant 407296/2021-2, and Latvian Quantum Initiative project 2.3.1.1.i.0/1/22/I/CFLA/001. The authors thank R.~Portugal and F.~Santos for helpful discussions.
\end{acknowledgments}

\bibliographystyle{apsrev4-2}
\bibliography{apssamp}

\providecommand{\noopsort}[1]{}\providecommand{\singleletter}[1]{#1}%
\begin{thebibliography}{55}%
\makeatletter
\providecommand \@ifxundefined [1]{%
 \@ifx{#1\undefined}
}%
\providecommand \@ifnum [1]{%
 \ifnum #1\expandafter \@firstoftwo
 \else \expandafter \@secondoftwo
 \fi
}%
\providecommand \@ifx [1]{%
 \ifx #1\expandafter \@firstoftwo
 \else \expandafter \@secondoftwo
 \fi
}%
\providecommand \natexlab [1]{#1}%
\providecommand \enquote  [1]{``#1''}%
\providecommand \bibnamefont  [1]{#1}%
\providecommand \bibfnamefont [1]{#1}%
\providecommand \citenamefont [1]{#1}%
\providecommand \href@noop [0]{\@secondoftwo}%
\providecommand \href [0]{\begingroup \@sanitize@url \@href}%
\providecommand \@href[1]{\@@startlink{#1}\@@href}%
\providecommand \@@href[1]{\endgroup#1\@@endlink}%
\providecommand \@sanitize@url [0]{\catcode `\\12\catcode `\$12\catcode `\&12\catcode `\#12\catcode `\^12\catcode `\_12\catcode `\%12\relax}%
\providecommand \@@startlink[1]{}%
\providecommand \@@endlink[0]{}%
\providecommand \url  [0]{\begingroup\@sanitize@url \@url }%
\providecommand \@url [1]{\endgroup\@href {#1}{\urlprefix }}%
\providecommand \urlprefix  [0]{URL }%
\providecommand \Eprint [0]{\href }%
\providecommand \doibase [0]{https://doi.org/}%
\providecommand \selectlanguage [0]{\@gobble}%
\providecommand \bibinfo  [0]{\@secondoftwo}%
\providecommand \bibfield  [0]{\@secondoftwo}%
\providecommand \translation [1]{[#1]}%
\providecommand \BibitemOpen [0]{}%
\providecommand \bibitemStop [0]{}%
\providecommand \bibitemNoStop [0]{.\EOS\space}%
\providecommand \EOS [0]{\spacefactor3000\relax}%
\providecommand \BibitemShut  [1]{\csname bibitem#1\endcsname}%
\let\auto@bib@innerbib\@empty
\bibitem [{\citenamefont {Preskill}(2018)}]{nisq}%
  \BibitemOpen
  \bibfield  {author} {\bibinfo {author} {\bibfnamefont {J.}~\bibnamefont {Preskill}},\ }\href@noop {} {\bibfield  {journal} {\bibinfo  {journal} {Quantum}\ }\textbf {\bibinfo {volume} {2}},\ \bibinfo {pages} {79} (\bibinfo {year} {2018})}\BibitemShut {NoStop}%
\bibitem [{\citenamefont {Abbas}\ \emph {et~al.}(2023)\citenamefont {Abbas}, \citenamefont {Ambainis}, \citenamefont {Augustino}, \citenamefont {B{\"a}rtschi}, \citenamefont {Buhrman}, \citenamefont {Coffrin}, \citenamefont {Cortiana}, \citenamefont {Dunjko}, \citenamefont {Egger}, \citenamefont {Elmegreen} \emph {et~al.}}]{quantum_opt}%
  \BibitemOpen
  \bibfield  {author} {\bibinfo {author} {\bibfnamefont {A.}~\bibnamefont {Abbas}}, \bibinfo {author} {\bibfnamefont {A.}~\bibnamefont {Ambainis}}, \bibinfo {author} {\bibfnamefont {B.}~\bibnamefont {Augustino}}, \bibinfo {author} {\bibfnamefont {A.}~\bibnamefont {B{\"a}rtschi}}, \bibinfo {author} {\bibfnamefont {H.}~\bibnamefont {Buhrman}}, \bibinfo {author} {\bibfnamefont {C.}~\bibnamefont {Coffrin}}, \bibinfo {author} {\bibfnamefont {G.}~\bibnamefont {Cortiana}}, \bibinfo {author} {\bibfnamefont {V.}~\bibnamefont {Dunjko}}, \bibinfo {author} {\bibfnamefont {D.~J.}\ \bibnamefont {Egger}}, \bibinfo {author} {\bibfnamefont {B.~G.}\ \bibnamefont {Elmegreen}}, \emph {et~al.},\ }\href@noop {} {\bibfield  {journal} {\bibinfo  {journal} {arXiv preprint arXiv:2312.02279}\ } (\bibinfo {year} {2023})}\BibitemShut {NoStop}%
\bibitem [{\citenamefont {Cerezo}\ \emph {et~al.}(2021)\citenamefont {Cerezo}, \citenamefont {Arrasmith}, \citenamefont {Babbush}, \citenamefont {Benjamin}, \citenamefont {Endo}, \citenamefont {Fujii}, \citenamefont {McClean}, \citenamefont {Mitarai}, \citenamefont {Yuan}, \citenamefont {Cincio} \emph {et~al.}}]{vqa}%
  \BibitemOpen
  \bibfield  {author} {\bibinfo {author} {\bibfnamefont {M.}~\bibnamefont {Cerezo}}, \bibinfo {author} {\bibfnamefont {A.}~\bibnamefont {Arrasmith}}, \bibinfo {author} {\bibfnamefont {R.}~\bibnamefont {Babbush}}, \bibinfo {author} {\bibfnamefont {S.~C.}\ \bibnamefont {Benjamin}}, \bibinfo {author} {\bibfnamefont {S.}~\bibnamefont {Endo}}, \bibinfo {author} {\bibfnamefont {K.}~\bibnamefont {Fujii}}, \bibinfo {author} {\bibfnamefont {J.~R.}\ \bibnamefont {McClean}}, \bibinfo {author} {\bibfnamefont {K.}~\bibnamefont {Mitarai}}, \bibinfo {author} {\bibfnamefont {X.}~\bibnamefont {Yuan}}, \bibinfo {author} {\bibfnamefont {L.}~\bibnamefont {Cincio}}, \emph {et~al.},\ }\href@noop {} {\bibfield  {journal} {\bibinfo  {journal} {Nature Reviews Physics}\ }\textbf {\bibinfo {volume} {3}},\ \bibinfo {pages} {625} (\bibinfo {year} {2021})}\BibitemShut {NoStop}%
\bibitem [{\citenamefont {Farhi}\ \emph {et~al.}(2014)\citenamefont {Farhi}, \citenamefont {Goldstone},\ and\ \citenamefont {Gutmann}}]{qaoa}%
  \BibitemOpen
  \bibfield  {author} {\bibinfo {author} {\bibfnamefont {E.}~\bibnamefont {Farhi}}, \bibinfo {author} {\bibfnamefont {J.}~\bibnamefont {Goldstone}},\ and\ \bibinfo {author} {\bibfnamefont {S.}~\bibnamefont {Gutmann}},\ }\href@noop {} {\bibfield  {journal} {\bibinfo  {journal} {arXiv preprint arXiv:1411.4028}\ } (\bibinfo {year} {2014})}\BibitemShut {NoStop}%
\bibitem [{\citenamefont {Hadfield}\ \emph {et~al.}(2019)\citenamefont {Hadfield}, \citenamefont {Wang}, \citenamefont {O’gorman}, \citenamefont {Rieffel}, \citenamefont {Venturelli},\ and\ \citenamefont {Biswas}}]{qaoaAnsatz}%
  \BibitemOpen
  \bibfield  {author} {\bibinfo {author} {\bibfnamefont {S.}~\bibnamefont {Hadfield}}, \bibinfo {author} {\bibfnamefont {Z.}~\bibnamefont {Wang}}, \bibinfo {author} {\bibfnamefont {B.}~\bibnamefont {O’gorman}}, \bibinfo {author} {\bibfnamefont {E.~G.}\ \bibnamefont {Rieffel}}, \bibinfo {author} {\bibfnamefont {D.}~\bibnamefont {Venturelli}},\ and\ \bibinfo {author} {\bibfnamefont {R.}~\bibnamefont {Biswas}},\ }\href@noop {} {\bibfield  {journal} {\bibinfo  {journal} {Algorithms}\ }\textbf {\bibinfo {volume} {12}},\ \bibinfo {pages} {34} (\bibinfo {year} {2019})}\BibitemShut {NoStop}%
\bibitem [{\citenamefont {Farhi}\ \emph {et~al.}(2000)\citenamefont {Farhi}, \citenamefont {Goldstone}, \citenamefont {Gutmann},\ and\ \citenamefont {Sipser}}]{qaa1}%
  \BibitemOpen
  \bibfield  {author} {\bibinfo {author} {\bibfnamefont {E.}~\bibnamefont {Farhi}}, \bibinfo {author} {\bibfnamefont {J.}~\bibnamefont {Goldstone}}, \bibinfo {author} {\bibfnamefont {S.}~\bibnamefont {Gutmann}},\ and\ \bibinfo {author} {\bibfnamefont {M.}~\bibnamefont {Sipser}},\ }\href@noop {} {\bibfield  {journal} {\bibinfo  {journal} {arXiv preprint quant-ph/0001106}\ } (\bibinfo {year} {2000})}\BibitemShut {NoStop}%
\bibitem [{\citenamefont {Farhi}\ \emph {et~al.}(2001)\citenamefont {Farhi}, \citenamefont {Goldstone}, \citenamefont {Gutmann}, \citenamefont {Lapan}, \citenamefont {Lundgren},\ and\ \citenamefont {Preda}}]{qaa2}%
  \BibitemOpen
  \bibfield  {author} {\bibinfo {author} {\bibfnamefont {E.}~\bibnamefont {Farhi}}, \bibinfo {author} {\bibfnamefont {J.}~\bibnamefont {Goldstone}}, \bibinfo {author} {\bibfnamefont {S.}~\bibnamefont {Gutmann}}, \bibinfo {author} {\bibfnamefont {J.}~\bibnamefont {Lapan}}, \bibinfo {author} {\bibfnamefont {A.}~\bibnamefont {Lundgren}},\ and\ \bibinfo {author} {\bibfnamefont {D.}~\bibnamefont {Preda}},\ }\href@noop {} {\bibfield  {journal} {\bibinfo  {journal} {Science}\ }\textbf {\bibinfo {volume} {292}},\ \bibinfo {pages} {472} (\bibinfo {year} {2001})}\BibitemShut {NoStop}%
\bibitem [{\citenamefont {B{\"a}rtschi}\ and\ \citenamefont {Eidenbenz}(2020)}]{GM-QAOA}%
  \BibitemOpen
  \bibfield  {author} {\bibinfo {author} {\bibfnamefont {A.}~\bibnamefont {B{\"a}rtschi}}\ and\ \bibinfo {author} {\bibfnamefont {S.}~\bibnamefont {Eidenbenz}},\ }in\ \href@noop {} {\emph {\bibinfo {booktitle} {2020 IEEE International Conference on Quantum Computing and Engineering (QCE)}}}\ (\bibinfo {organization} {IEEE},\ \bibinfo {year} {2020})\ pp.\ \bibinfo {pages} {72--82}\BibitemShut {NoStop}%
\bibitem [{\citenamefont {Marsh}\ and\ \citenamefont {Wang}(2019)}]{qwoaRestriction}%
  \BibitemOpen
  \bibfield  {author} {\bibinfo {author} {\bibfnamefont {S.}~\bibnamefont {Marsh}}\ and\ \bibinfo {author} {\bibfnamefont {J.~B.}\ \bibnamefont {Wang}},\ }\href@noop {} {\bibfield  {journal} {\bibinfo  {journal} {Quantum Information Processing}\ }\textbf {\bibinfo {volume} {18}},\ \bibinfo {pages} {1} (\bibinfo {year} {2019})}\BibitemShut {NoStop}%
\bibitem [{\citenamefont {Marsh}\ and\ \citenamefont {Wang}(2020)}]{qwoaCombinatorial}%
  \BibitemOpen
  \bibfield  {author} {\bibinfo {author} {\bibfnamefont {S.}~\bibnamefont {Marsh}}\ and\ \bibinfo {author} {\bibfnamefont {J.~B.}\ \bibnamefont {Wang}},\ }\href@noop {} {\bibfield  {journal} {\bibinfo  {journal} {Physical Review Research}\ }\textbf {\bibinfo {volume} {2}},\ \bibinfo {pages} {023302} (\bibinfo {year} {2020})}\BibitemShut {NoStop}%
\bibitem [{\citenamefont {Fuchs}\ \emph {et~al.}(2022)\citenamefont {Fuchs}, \citenamefont {Lye}, \citenamefont {M{\o}ll~Nilsen}, \citenamefont {Stasik},\ and\ \citenamefont {Sartor}}]{constraintMixer}%
  \BibitemOpen
  \bibfield  {author} {\bibinfo {author} {\bibfnamefont {F.~G.}\ \bibnamefont {Fuchs}}, \bibinfo {author} {\bibfnamefont {K.~O.}\ \bibnamefont {Lye}}, \bibinfo {author} {\bibfnamefont {H.}~\bibnamefont {M{\o}ll~Nilsen}}, \bibinfo {author} {\bibfnamefont {A.~J.}\ \bibnamefont {Stasik}},\ and\ \bibinfo {author} {\bibfnamefont {G.}~\bibnamefont {Sartor}},\ }\href@noop {} {\bibfield  {journal} {\bibinfo  {journal} {Algorithms}\ }\textbf {\bibinfo {volume} {15}},\ \bibinfo {pages} {202} (\bibinfo {year} {2022})}\BibitemShut {NoStop}%
\bibitem [{\citenamefont {Akshay}\ \emph {et~al.}(2020)\citenamefont {Akshay}, \citenamefont {Philathong}, \citenamefont {Morales},\ and\ \citenamefont {Biamonte}}]{reachability}%
  \BibitemOpen
  \bibfield  {author} {\bibinfo {author} {\bibfnamefont {V.}~\bibnamefont {Akshay}}, \bibinfo {author} {\bibfnamefont {H.}~\bibnamefont {Philathong}}, \bibinfo {author} {\bibfnamefont {M.~E.~S.}\ \bibnamefont {Morales}},\ and\ \bibinfo {author} {\bibfnamefont {J.~D.}\ \bibnamefont {Biamonte}},\ }\href@noop {} {\bibfield  {journal} {\bibinfo  {journal} {Physical review letters}\ }\textbf {\bibinfo {volume} {124}},\ \bibinfo {pages} {090504} (\bibinfo {year} {2020})}\BibitemShut {NoStop}%
\bibitem [{\citenamefont {Golden}\ \emph {et~al.}(2023{\natexlab{a}})\citenamefont {Golden}, \citenamefont {B{\"a}rtschi}, \citenamefont {O'Malley},\ and\ \citenamefont {Eidenbenz}}]{numericalEvidence1}%
  \BibitemOpen
  \bibfield  {author} {\bibinfo {author} {\bibfnamefont {J.}~\bibnamefont {Golden}}, \bibinfo {author} {\bibfnamefont {A.}~\bibnamefont {B{\"a}rtschi}}, \bibinfo {author} {\bibfnamefont {D.}~\bibnamefont {O'Malley}},\ and\ \bibinfo {author} {\bibfnamefont {S.}~\bibnamefont {Eidenbenz}},\ }in\ \href@noop {} {\emph {\bibinfo {booktitle} {2023 IEEE International Conference on Quantum Computing and Engineering (QCE)}}},\ Vol.~\bibinfo {volume} {1}\ (\bibinfo {organization} {IEEE},\ \bibinfo {year} {2023})\ pp.\ \bibinfo {pages} {307--312}\BibitemShut {NoStop}%
\bibitem [{\citenamefont {Golden}\ \emph {et~al.}(2023{\natexlab{b}})\citenamefont {Golden}, \citenamefont {B{\"a}rtschi}, \citenamefont {O'Malley},\ and\ \citenamefont {Eidenbenz}}]{numericalEvidence2}%
  \BibitemOpen
  \bibfield  {author} {\bibinfo {author} {\bibfnamefont {J.}~\bibnamefont {Golden}}, \bibinfo {author} {\bibfnamefont {A.}~\bibnamefont {B{\"a}rtschi}}, \bibinfo {author} {\bibfnamefont {D.}~\bibnamefont {O'Malley}},\ and\ \bibinfo {author} {\bibfnamefont {S.}~\bibnamefont {Eidenbenz}},\ }in\ \href@noop {} {\emph {\bibinfo {booktitle} {2023 IEEE International Conference on Quantum Computing and Engineering (QCE)}}},\ Vol.~\bibinfo {volume} {1}\ (\bibinfo {organization} {IEEE},\ \bibinfo {year} {2023})\ pp.\ \bibinfo {pages} {496--505}\BibitemShut {NoStop}%
\bibitem [{\citenamefont {Sundar}\ \emph {et~al.}(2019)\citenamefont {Sundar}, \citenamefont {Paredes}, \citenamefont {Damanik}, \citenamefont {Duenas-Osorio},\ and\ \citenamefont {Hazzard}}]{GM_unconstrained}%
  \BibitemOpen
  \bibfield  {author} {\bibinfo {author} {\bibfnamefont {B.}~\bibnamefont {Sundar}}, \bibinfo {author} {\bibfnamefont {R.}~\bibnamefont {Paredes}}, \bibinfo {author} {\bibfnamefont {D.~T.}\ \bibnamefont {Damanik}}, \bibinfo {author} {\bibfnamefont {L.}~\bibnamefont {Duenas-Osorio}},\ and\ \bibinfo {author} {\bibfnamefont {K.~R.}\ \bibnamefont {Hazzard}},\ }\href@noop {} {\bibfield  {journal} {\bibinfo  {journal} {arXiv preprint arXiv:1908.01745}\ } (\bibinfo {year} {2019})}\BibitemShut {NoStop}%
\bibitem [{\citenamefont {Morales}\ \emph {et~al.}(2018)\citenamefont {Morales}, \citenamefont {Tlyachev},\ and\ \citenamefont {Biamonte}}]{slightlyGrover}%
  \BibitemOpen
  \bibfield  {author} {\bibinfo {author} {\bibfnamefont {M.~E.~S.}\ \bibnamefont {Morales}}, \bibinfo {author} {\bibfnamefont {T.}~\bibnamefont {Tlyachev}},\ and\ \bibinfo {author} {\bibfnamefont {J.}~\bibnamefont {Biamonte}},\ }\href@noop {} {\bibfield  {journal} {\bibinfo  {journal} {Physical Review A}\ }\textbf {\bibinfo {volume} {98}},\ \bibinfo {pages} {062333} (\bibinfo {year} {2018})}\BibitemShut {NoStop}%
\bibitem [{\citenamefont {Grover}(1996)}]{grover1}%
  \BibitemOpen
  \bibfield  {author} {\bibinfo {author} {\bibfnamefont {L.~K.}\ \bibnamefont {Grover}},\ }in\ \href@noop {} {\emph {\bibinfo {booktitle} {Proceedings of the twenty-eighth annual ACM symposium on Theory of computing}}}\ (\bibinfo {year} {1996})\ pp.\ \bibinfo {pages} {212--219}\BibitemShut {NoStop}%
\bibitem [{\citenamefont {Grover}(1997)}]{grover2}%
  \BibitemOpen
  \bibfield  {author} {\bibinfo {author} {\bibfnamefont {L.~K.}\ \bibnamefont {Grover}},\ }\href@noop {} {\bibfield  {journal} {\bibinfo  {journal} {Physical review letters}\ }\textbf {\bibinfo {volume} {79}},\ \bibinfo {pages} {325} (\bibinfo {year} {1997})}\BibitemShut {NoStop}%
\bibitem [{\citenamefont {Farhi}\ and\ \citenamefont {Gutmann}(1998)}]{CTQW1}%
  \BibitemOpen
  \bibfield  {author} {\bibinfo {author} {\bibfnamefont {E.}~\bibnamefont {Farhi}}\ and\ \bibinfo {author} {\bibfnamefont {S.}~\bibnamefont {Gutmann}},\ }\href@noop {} {\bibfield  {journal} {\bibinfo  {journal} {Physical Review A}\ }\textbf {\bibinfo {volume} {58}},\ \bibinfo {pages} {915} (\bibinfo {year} {1998})}\BibitemShut {NoStop}%
\bibitem [{\citenamefont {Kempe}(2003)}]{CTQW2}%
  \BibitemOpen
  \bibfield  {author} {\bibinfo {author} {\bibfnamefont {J.}~\bibnamefont {Kempe}},\ }\href@noop {} {\bibfield  {journal} {\bibinfo  {journal} {Contemporary Physics}\ }\textbf {\bibinfo {volume} {44}},\ \bibinfo {pages} {307} (\bibinfo {year} {2003})}\BibitemShut {NoStop}%
\bibitem [{\citenamefont {Portugal}(2018)}]{QW_renato}%
  \BibitemOpen
  \bibfield  {author} {\bibinfo {author} {\bibfnamefont {R.}~\bibnamefont {Portugal}},\ }\href@noop {} {\emph {\bibinfo {title} {Quantum walks and search algorithms}}},\ \bibinfo {edition} {2nd}\ ed.\ (\bibinfo  {publisher} {Springer},\ \bibinfo {year} {2018})\BibitemShut {NoStop}%
\bibitem [{\citenamefont {Bennett}\ \emph {et~al.}(2021)\citenamefont {Bennett}, \citenamefont {Matwiejew}, \citenamefont {Marsh},\ and\ \citenamefont {Wang}}]{vehicle}%
  \BibitemOpen
  \bibfield  {author} {\bibinfo {author} {\bibfnamefont {T.}~\bibnamefont {Bennett}}, \bibinfo {author} {\bibfnamefont {E.}~\bibnamefont {Matwiejew}}, \bibinfo {author} {\bibfnamefont {S.}~\bibnamefont {Marsh}},\ and\ \bibinfo {author} {\bibfnamefont {J.~B.}\ \bibnamefont {Wang}},\ }\href@noop {} {\bibfield  {journal} {\bibinfo  {journal} {Frontiers in Physics}\ }\textbf {\bibinfo {volume} {9}},\ \bibinfo {pages} {730856} (\bibinfo {year} {2021})}\BibitemShut {NoStop}%
\bibitem [{\citenamefont {Slate}\ \emph {et~al.}(2021)\citenamefont {Slate}, \citenamefont {Matwiejew}, \citenamefont {Marsh},\ and\ \citenamefont {Wang}}]{portfolio}%
  \BibitemOpen
  \bibfield  {author} {\bibinfo {author} {\bibfnamefont {N.}~\bibnamefont {Slate}}, \bibinfo {author} {\bibfnamefont {E.}~\bibnamefont {Matwiejew}}, \bibinfo {author} {\bibfnamefont {S.}~\bibnamefont {Marsh}},\ and\ \bibinfo {author} {\bibfnamefont {J.~B.}\ \bibnamefont {Wang}},\ }\href@noop {} {\bibfield  {journal} {\bibinfo  {journal} {Quantum}\ }\textbf {\bibinfo {volume} {5}},\ \bibinfo {pages} {513} (\bibinfo {year} {2021})}\BibitemShut {NoStop}%
\bibitem [{\citenamefont {Golden}\ \emph {et~al.}(2021)\citenamefont {Golden}, \citenamefont {B{\"a}rtschi}, \citenamefont {O’Malley},\ and\ \citenamefont {Eidenbenz}}]{GM-Th-QAOA}%
  \BibitemOpen
  \bibfield  {author} {\bibinfo {author} {\bibfnamefont {J.}~\bibnamefont {Golden}}, \bibinfo {author} {\bibfnamefont {A.}~\bibnamefont {B{\"a}rtschi}}, \bibinfo {author} {\bibfnamefont {D.}~\bibnamefont {O’Malley}},\ and\ \bibinfo {author} {\bibfnamefont {S.}~\bibnamefont {Eidenbenz}},\ }in\ \href@noop {} {\emph {\bibinfo {booktitle} {2021 IEEE International Conference on Quantum Computing and Engineering (QCE)}}}\ (\bibinfo {organization} {IEEE},\ \bibinfo {year} {2021})\ pp.\ \bibinfo {pages} {137--147}\BibitemShut {NoStop}%
\bibitem [{\citenamefont {Jiang}\ \emph {et~al.}(2017)\citenamefont {Jiang}, \citenamefont {Rieffel},\ and\ \citenamefont {Wang}}]{grover_tranverseField}%
  \BibitemOpen
  \bibfield  {author} {\bibinfo {author} {\bibfnamefont {Z.}~\bibnamefont {Jiang}}, \bibinfo {author} {\bibfnamefont {E.~G.}\ \bibnamefont {Rieffel}},\ and\ \bibinfo {author} {\bibfnamefont {Z.}~\bibnamefont {Wang}},\ }\href@noop {} {\bibfield  {journal} {\bibinfo  {journal} {Physical Review A}\ }\textbf {\bibinfo {volume} {95}},\ \bibinfo {pages} {062317} (\bibinfo {year} {2017})}\BibitemShut {NoStop}%
\bibitem [{\citenamefont {Wang}\ \emph {et~al.}(2020)\citenamefont {Wang}, \citenamefont {Rubin}, \citenamefont {Dominy},\ and\ \citenamefont {Rieffel}}]{XY1}%
  \BibitemOpen
  \bibfield  {author} {\bibinfo {author} {\bibfnamefont {Z.}~\bibnamefont {Wang}}, \bibinfo {author} {\bibfnamefont {N.~C.}\ \bibnamefont {Rubin}}, \bibinfo {author} {\bibfnamefont {J.~M.}\ \bibnamefont {Dominy}},\ and\ \bibinfo {author} {\bibfnamefont {E.~G.}\ \bibnamefont {Rieffel}},\ }\href@noop {} {\bibfield  {journal} {\bibinfo  {journal} {Physical Review A}\ }\textbf {\bibinfo {volume} {101}},\ \bibinfo {pages} {012320} (\bibinfo {year} {2020})}\BibitemShut {NoStop}%
\bibitem [{\citenamefont {Cook}\ \emph {et~al.}(2020)\citenamefont {Cook}, \citenamefont {Eidenbenz},\ and\ \citenamefont {B{\"a}rtschi}}]{XY2}%
  \BibitemOpen
  \bibfield  {author} {\bibinfo {author} {\bibfnamefont {J.}~\bibnamefont {Cook}}, \bibinfo {author} {\bibfnamefont {S.}~\bibnamefont {Eidenbenz}},\ and\ \bibinfo {author} {\bibfnamefont {A.}~\bibnamefont {B{\"a}rtschi}},\ }in\ \href@noop {} {\emph {\bibinfo {booktitle} {2020 IEEE International Conference on Quantum Computing and Engineering (QCE)}}}\ (\bibinfo {organization} {IEEE},\ \bibinfo {year} {2020})\ pp.\ \bibinfo {pages} {83--92}\BibitemShut {NoStop}%
\bibitem [{\citenamefont {Headley}\ and\ \citenamefont {Wilhelm}(2023)}]{headley_paper}%
  \BibitemOpen
  \bibfield  {author} {\bibinfo {author} {\bibfnamefont {D.}~\bibnamefont {Headley}}\ and\ \bibinfo {author} {\bibfnamefont {F.~K.}\ \bibnamefont {Wilhelm}},\ }\href@noop {} {\bibfield  {journal} {\bibinfo  {journal} {Physical Review A}\ }\textbf {\bibinfo {volume} {107}},\ \bibinfo {pages} {012412} (\bibinfo {year} {2023})}\BibitemShut {NoStop}%
\bibitem [{\citenamefont {Bennett}\ \emph {et~al.}(1997)\citenamefont {Bennett}, \citenamefont {Bernstein}, \citenamefont {Brassard},\ and\ \citenamefont {Vazirani}}]{asymptBoundGrover}%
  \BibitemOpen
  \bibfield  {author} {\bibinfo {author} {\bibfnamefont {C.~H.}\ \bibnamefont {Bennett}}, \bibinfo {author} {\bibfnamefont {E.}~\bibnamefont {Bernstein}}, \bibinfo {author} {\bibfnamefont {G.}~\bibnamefont {Brassard}},\ and\ \bibinfo {author} {\bibfnamefont {U.}~\bibnamefont {Vazirani}},\ }\href@noop {} {\bibfield  {journal} {\bibinfo  {journal} {SIAM journal on Computing}\ }\textbf {\bibinfo {volume} {26}},\ \bibinfo {pages} {1510} (\bibinfo {year} {1997})}\BibitemShut {NoStop}%
\bibitem [{\citenamefont {Wurtz}\ and\ \citenamefont {Love}(2021)}]{maxcut_ar1}%
  \BibitemOpen
  \bibfield  {author} {\bibinfo {author} {\bibfnamefont {J.}~\bibnamefont {Wurtz}}\ and\ \bibinfo {author} {\bibfnamefont {P.}~\bibnamefont {Love}},\ }\href@noop {} {\bibfield  {journal} {\bibinfo  {journal} {Physical Review A}\ }\textbf {\bibinfo {volume} {103}},\ \bibinfo {pages} {042612} (\bibinfo {year} {2021})}\BibitemShut {NoStop}%
\bibitem [{\citenamefont {Caha}\ \emph {et~al.}(2022)\citenamefont {Caha}, \citenamefont {Kliesch},\ and\ \citenamefont {Koenig}}]{maxcut_ar2}%
  \BibitemOpen
  \bibfield  {author} {\bibinfo {author} {\bibfnamefont {L.}~\bibnamefont {Caha}}, \bibinfo {author} {\bibfnamefont {A.}~\bibnamefont {Kliesch}},\ and\ \bibinfo {author} {\bibfnamefont {R.}~\bibnamefont {Koenig}},\ }\href@noop {} {\bibfield  {journal} {\bibinfo  {journal} {Quantum Science and Technology}\ }\textbf {\bibinfo {volume} {7}},\ \bibinfo {pages} {045013} (\bibinfo {year} {2022})}\BibitemShut {NoStop}%
\bibitem [{\citenamefont {Farhi}\ \emph {et~al.}(2022)\citenamefont {Farhi}, \citenamefont {Goldstone}, \citenamefont {Gutmann},\ and\ \citenamefont {Zhou}}]{sherrington-kirkpatrick}%
  \BibitemOpen
  \bibfield  {author} {\bibinfo {author} {\bibfnamefont {E.}~\bibnamefont {Farhi}}, \bibinfo {author} {\bibfnamefont {J.}~\bibnamefont {Goldstone}}, \bibinfo {author} {\bibfnamefont {S.}~\bibnamefont {Gutmann}},\ and\ \bibinfo {author} {\bibfnamefont {L.}~\bibnamefont {Zhou}},\ }\href@noop {} {\bibfield  {journal} {\bibinfo  {journal} {Quantum}\ }\textbf {\bibinfo {volume} {6}},\ \bibinfo {pages} {759} (\bibinfo {year} {2022})}\BibitemShut {NoStop}%
\bibitem [{\citenamefont {Vijendran}\ \emph {et~al.}(2024)\citenamefont {Vijendran}, \citenamefont {Das}, \citenamefont {Koh}, \citenamefont {Assad},\ and\ \citenamefont {Lam}}]{Vijendran}%
  \BibitemOpen
  \bibfield  {author} {\bibinfo {author} {\bibfnamefont {V.}~\bibnamefont {Vijendran}}, \bibinfo {author} {\bibfnamefont {A.}~\bibnamefont {Das}}, \bibinfo {author} {\bibfnamefont {D.~E.}\ \bibnamefont {Koh}}, \bibinfo {author} {\bibfnamefont {S.~M.}\ \bibnamefont {Assad}},\ and\ \bibinfo {author} {\bibfnamefont {P.~K.}\ \bibnamefont {Lam}},\ }\href@noop {} {\bibfield  {journal} {\bibinfo  {journal} {Quantum Science and Technology}\ }\textbf {\bibinfo {volume} {9}},\ \bibinfo {pages} {025010} (\bibinfo {year} {2024})}\BibitemShut {NoStop}%
\bibitem [{\citenamefont {Bennett}\ and\ \citenamefont {Wang}(2021)}]{maoa}%
  \BibitemOpen
  \bibfield  {author} {\bibinfo {author} {\bibfnamefont {T.}~\bibnamefont {Bennett}}\ and\ \bibinfo {author} {\bibfnamefont {J.~B.}\ \bibnamefont {Wang}},\ }\href@noop {} {\bibfield  {journal} {\bibinfo  {journal} {arXiv preprint arXiv:2111.00796}\ } (\bibinfo {year} {2021})}\BibitemShut {NoStop}%
\bibitem [{\citenamefont {Headley}(2023)}]{headley_thesis}%
  \BibitemOpen
  \bibfield  {author} {\bibinfo {author} {\bibfnamefont {D.~K.}\ \bibnamefont {Headley}},\ }\emph {\bibinfo {title} {Angles and devices for quantum approximate optimization}},\ \href@noop {} {Ph.D. thesis},\ \bibinfo  {school} {Universit{\"a}t des Saarlandes} (\bibinfo {year} {2023}),\ \bibinfo {note} {{DOI}:10.22028/D291-41113}\BibitemShut {NoStop}%
\bibitem [{\citenamefont {Zalka}(1999)}]{singleOptimalGrover}%
  \BibitemOpen
  \bibfield  {author} {\bibinfo {author} {\bibfnamefont {C.}~\bibnamefont {Zalka}},\ }\href@noop {} {\bibfield  {journal} {\bibinfo  {journal} {Physical Review A}\ }\textbf {\bibinfo {volume} {60}},\ \bibinfo {pages} {2746} (\bibinfo {year} {1999})}\BibitemShut {NoStop}%
\bibitem [{\citenamefont {Hamann}\ \emph {et~al.}(2021)\citenamefont {Hamann}, \citenamefont {Dunjko},\ and\ \citenamefont {W{\"o}lk}}]{generalOptimalGrover}%
  \BibitemOpen
  \bibfield  {author} {\bibinfo {author} {\bibfnamefont {A.}~\bibnamefont {Hamann}}, \bibinfo {author} {\bibfnamefont {V.}~\bibnamefont {Dunjko}},\ and\ \bibinfo {author} {\bibfnamefont {S.}~\bibnamefont {W{\"o}lk}},\ }\href@noop {} {\bibfield  {journal} {\bibinfo  {journal} {Quantum Machine Intelligence}\ }\textbf {\bibinfo {volume} {3}},\ \bibinfo {pages} {1} (\bibinfo {year} {2021})}\BibitemShut {NoStop}%
\bibitem [{\citenamefont {Durr}\ and\ \citenamefont {Hoyer}(1996)}]{gas1}%
  \BibitemOpen
  \bibfield  {author} {\bibinfo {author} {\bibfnamefont {C.}~\bibnamefont {Durr}}\ and\ \bibinfo {author} {\bibfnamefont {P.}~\bibnamefont {Hoyer}},\ }\href@noop {} {\bibfield  {journal} {\bibinfo  {journal} {arXiv preprint quant-ph/9607014}\ } (\bibinfo {year} {1996})}\BibitemShut {NoStop}%
\bibitem [{\citenamefont {Bulger}\ \emph {et~al.}(2003)\citenamefont {Bulger}, \citenamefont {Baritompa},\ and\ \citenamefont {Wood}}]{gas2}%
  \BibitemOpen
  \bibfield  {author} {\bibinfo {author} {\bibfnamefont {D.}~\bibnamefont {Bulger}}, \bibinfo {author} {\bibfnamefont {W.~P.}\ \bibnamefont {Baritompa}},\ and\ \bibinfo {author} {\bibfnamefont {G.~R.}\ \bibnamefont {Wood}},\ }\href@noop {} {\bibfield  {journal} {\bibinfo  {journal} {Journal of optimization theory and applications}\ }\textbf {\bibinfo {volume} {116}},\ \bibinfo {pages} {517} (\bibinfo {year} {2003})}\BibitemShut {NoStop}%
\bibitem [{\citenamefont {Baritompa}\ \emph {et~al.}(2005)\citenamefont {Baritompa}, \citenamefont {Bulger},\ and\ \citenamefont {Wood}}]{gas3}%
  \BibitemOpen
  \bibfield  {author} {\bibinfo {author} {\bibfnamefont {W.~P.}\ \bibnamefont {Baritompa}}, \bibinfo {author} {\bibfnamefont {D.~W.}\ \bibnamefont {Bulger}},\ and\ \bibinfo {author} {\bibfnamefont {G.~R.}\ \bibnamefont {Wood}},\ }\href@noop {} {\bibfield  {journal} {\bibinfo  {journal} {SIAM Journal on Optimization}\ }\textbf {\bibinfo {volume} {15}},\ \bibinfo {pages} {1170} (\bibinfo {year} {2005})}\BibitemShut {NoStop}%
\bibitem [{\citenamefont {Gilliam}\ \emph {et~al.}(2021)\citenamefont {Gilliam}, \citenamefont {Woerner},\ and\ \citenamefont {Gonciulea}}]{gas4}%
  \BibitemOpen
  \bibfield  {author} {\bibinfo {author} {\bibfnamefont {A.}~\bibnamefont {Gilliam}}, \bibinfo {author} {\bibfnamefont {S.}~\bibnamefont {Woerner}},\ and\ \bibinfo {author} {\bibfnamefont {C.}~\bibnamefont {Gonciulea}},\ }\href@noop {} {\bibfield  {journal} {\bibinfo  {journal} {Quantum}\ }\textbf {\bibinfo {volume} {5}},\ \bibinfo {pages} {428} (\bibinfo {year} {2021})}\BibitemShut {NoStop}%
\bibitem [{Note1()}]{Note1}%
  \BibitemOpen
  \bibinfo {note} {Despite the dependency of the expectation value concerning $\protect \bm {\beta }$ and $\protect \bm {\gamma }$, since we get the optimal angles at the beginning of the Sec.~\ref {sec3}, we simplified the notation from the outset.}\BibitemShut {Stop}%
\bibitem [{\citenamefont {Pishro-Nik}(2014)}]{deltaFunction}%
  \BibitemOpen
  \bibfield  {author} {\bibinfo {author} {\bibfnamefont {H.}~\bibnamefont {Pishro-Nik}},\ }\href@noop {} {\emph {\bibinfo {title} {Introduction to probability, statistics, and random processes}}}\ (\bibinfo  {publisher} {Kappa Research, LLC},\ \bibinfo {year} {2014})\BibitemShut {NoStop}%
\bibitem [{\citenamefont {Blom}(1958)}]{orderStats}%
  \BibitemOpen
  \bibfield  {author} {\bibinfo {author} {\bibfnamefont {G.}~\bibnamefont {Blom}},\ }\href@noop {} {\emph {\bibinfo {title} {Statistical estimates and transformed beta--variables}}}\ (\bibinfo  {publisher} {John Wiley \& Sons},\ \bibinfo {address} {New York},\ \bibinfo {year} {1958})\BibitemShut {NoStop}%
\bibitem [{\citenamefont {Benchasattabuse}\ \emph {et~al.}(2023)\citenamefont {Benchasattabuse}, \citenamefont {B{\"a}rtschi}, \citenamefont {Garc{\'\i}a-Pintos}, \citenamefont {Golden}, \citenamefont {Lemons},\ and\ \citenamefont {Eidenbenz}}]{lowerBounds}%
  \BibitemOpen
  \bibfield  {author} {\bibinfo {author} {\bibfnamefont {N.}~\bibnamefont {Benchasattabuse}}, \bibinfo {author} {\bibfnamefont {A.}~\bibnamefont {B{\"a}rtschi}}, \bibinfo {author} {\bibfnamefont {L.~P.}\ \bibnamefont {Garc{\'\i}a-Pintos}}, \bibinfo {author} {\bibfnamefont {J.}~\bibnamefont {Golden}}, \bibinfo {author} {\bibfnamefont {N.}~\bibnamefont {Lemons}},\ and\ \bibinfo {author} {\bibfnamefont {S.}~\bibnamefont {Eidenbenz}},\ }\href@noop {} {\bibfield  {journal} {\bibinfo  {journal} {arXiv preprint arXiv:2308.15442}\ } (\bibinfo {year} {2023})}\BibitemShut {NoStop}%
\bibitem [{\citenamefont {Barkoutsos}\ \emph {et~al.}(2020)\citenamefont {Barkoutsos}, \citenamefont {Nannicini}, \citenamefont {Robert}, \citenamefont {Tavernelli},\ and\ \citenamefont {Woerner}}]{CVaR}%
  \BibitemOpen
  \bibfield  {author} {\bibinfo {author} {\bibfnamefont {P.~K.}\ \bibnamefont {Barkoutsos}}, \bibinfo {author} {\bibfnamefont {G.}~\bibnamefont {Nannicini}}, \bibinfo {author} {\bibfnamefont {A.}~\bibnamefont {Robert}}, \bibinfo {author} {\bibfnamefont {I.}~\bibnamefont {Tavernelli}},\ and\ \bibinfo {author} {\bibfnamefont {S.}~\bibnamefont {Woerner}},\ }\href@noop {} {\bibfield  {journal} {\bibinfo  {journal} {Quantum}\ }\textbf {\bibinfo {volume} {4}},\ \bibinfo {pages} {256} (\bibinfo {year} {2020})}\BibitemShut {NoStop}%
\bibitem [{\citenamefont {Li}\ \emph {et~al.}(2020)\citenamefont {Li}, \citenamefont {Fan}, \citenamefont {Coram}, \citenamefont {Riley},\ and\ \citenamefont {Leichenauer}}]{gibbs}%
  \BibitemOpen
  \bibfield  {author} {\bibinfo {author} {\bibfnamefont {L.}~\bibnamefont {Li}}, \bibinfo {author} {\bibfnamefont {M.}~\bibnamefont {Fan}}, \bibinfo {author} {\bibfnamefont {M.}~\bibnamefont {Coram}}, \bibinfo {author} {\bibfnamefont {P.}~\bibnamefont {Riley}},\ and\ \bibinfo {author} {\bibfnamefont {S.}~\bibnamefont {Leichenauer}},\ }\href@noop {} {\bibfield  {journal} {\bibinfo  {journal} {Physical Review Research}\ }\textbf {\bibinfo {volume} {2}},\ \bibinfo {pages} {023074} (\bibinfo {year} {2020})}\BibitemShut {NoStop}%
\bibitem [{\citenamefont {Larkin}\ \emph {et~al.}(2022)\citenamefont {Larkin}, \citenamefont {Jonsson}, \citenamefont {Justice},\ and\ \citenamefont {Guerreschi}}]{samples}%
  \BibitemOpen
  \bibfield  {author} {\bibinfo {author} {\bibfnamefont {J.}~\bibnamefont {Larkin}}, \bibinfo {author} {\bibfnamefont {M.}~\bibnamefont {Jonsson}}, \bibinfo {author} {\bibfnamefont {D.}~\bibnamefont {Justice}},\ and\ \bibinfo {author} {\bibfnamefont {G.~G.}\ \bibnamefont {Guerreschi}},\ }\href@noop {} {\bibfield  {journal} {\bibinfo  {journal} {Quantum Science and Technology}\ }\textbf {\bibinfo {volume} {7}},\ \bibinfo {pages} {045014} (\bibinfo {year} {2022})}\BibitemShut {NoStop}%
\bibitem [{\citenamefont {Blekos}\ \emph {et~al.}(2024)\citenamefont {Blekos}, \citenamefont {Brand}, \citenamefont {Ceschini}, \citenamefont {Chou}, \citenamefont {Li}, \citenamefont {Pandya},\ and\ \citenamefont {Summer}}]{qaoa_review}%
  \BibitemOpen
  \bibfield  {author} {\bibinfo {author} {\bibfnamefont {K.}~\bibnamefont {Blekos}}, \bibinfo {author} {\bibfnamefont {D.}~\bibnamefont {Brand}}, \bibinfo {author} {\bibfnamefont {A.}~\bibnamefont {Ceschini}}, \bibinfo {author} {\bibfnamefont {C.-H.}\ \bibnamefont {Chou}}, \bibinfo {author} {\bibfnamefont {R.-H.}\ \bibnamefont {Li}}, \bibinfo {author} {\bibfnamefont {K.}~\bibnamefont {Pandya}},\ and\ \bibinfo {author} {\bibfnamefont {A.}~\bibnamefont {Summer}},\ }\href@noop {} {\bibfield  {journal} {\bibinfo  {journal} {Physics Reports}\ }\textbf {\bibinfo {volume} {1068}},\ \bibinfo {pages} {1} (\bibinfo {year} {2024})}\BibitemShut {NoStop}%
\bibitem [{\citenamefont {Goemans}\ and\ \citenamefont {Williamson}(1995)}]{GW}%
  \BibitemOpen
  \bibfield  {author} {\bibinfo {author} {\bibfnamefont {M.~X.}\ \bibnamefont {Goemans}}\ and\ \bibinfo {author} {\bibfnamefont {D.~P.}\ \bibnamefont {Williamson}},\ }\href@noop {} {\bibfield  {journal} {\bibinfo  {journal} {Journal of the ACM (JACM)}\ }\textbf {\bibinfo {volume} {42}},\ \bibinfo {pages} {1115} (\bibinfo {year} {1995})}\BibitemShut {NoStop}%
\bibitem [{\citenamefont {Khot}\ \emph {et~al.}(2007)\citenamefont {Khot}, \citenamefont {Kindler}, \citenamefont {Mossel},\ and\ \citenamefont {O’Donnell}}]{uniqueGames}%
  \BibitemOpen
  \bibfield  {author} {\bibinfo {author} {\bibfnamefont {S.}~\bibnamefont {Khot}}, \bibinfo {author} {\bibfnamefont {G.}~\bibnamefont {Kindler}}, \bibinfo {author} {\bibfnamefont {E.}~\bibnamefont {Mossel}},\ and\ \bibinfo {author} {\bibfnamefont {R.}~\bibnamefont {O’Donnell}},\ }\href@noop {} {\bibfield  {journal} {\bibinfo  {journal} {SIAM Journal on Computing}\ }\textbf {\bibinfo {volume} {37}},\ \bibinfo {pages} {319} (\bibinfo {year} {2007})}\BibitemShut {NoStop}%
\bibitem [{\citenamefont {Arora}\ \emph {et~al.}(1998)\citenamefont {Arora}, \citenamefont {Lund}, \citenamefont {Motwani}, \citenamefont {Sudan},\ and\ \citenamefont {Szegedy}}]{MaxCutNPH1}%
  \BibitemOpen
  \bibfield  {author} {\bibinfo {author} {\bibfnamefont {S.}~\bibnamefont {Arora}}, \bibinfo {author} {\bibfnamefont {C.}~\bibnamefont {Lund}}, \bibinfo {author} {\bibfnamefont {R.}~\bibnamefont {Motwani}}, \bibinfo {author} {\bibfnamefont {M.}~\bibnamefont {Sudan}},\ and\ \bibinfo {author} {\bibfnamefont {M.}~\bibnamefont {Szegedy}},\ }\href@noop {} {\bibfield  {journal} {\bibinfo  {journal} {Journal of the ACM (JACM)}\ }\textbf {\bibinfo {volume} {45}},\ \bibinfo {pages} {501} (\bibinfo {year} {1998})}\BibitemShut {NoStop}%
\bibitem [{\citenamefont {H{\aa}stad}(2001)}]{MaxCutNPH2}%
  \BibitemOpen
  \bibfield  {author} {\bibinfo {author} {\bibfnamefont {J.}~\bibnamefont {H{\aa}stad}},\ }\href@noop {} {\bibfield  {journal} {\bibinfo  {journal} {Journal of the ACM (JACM)}\ }\textbf {\bibinfo {volume} {48}},\ \bibinfo {pages} {798} (\bibinfo {year} {2001})}\BibitemShut {NoStop}%
\bibitem [{\citenamefont {Garc{\'\i}a-Pintos}\ \emph {et~al.}(2023)\citenamefont {Garc{\'\i}a-Pintos}, \citenamefont {Brady}, \citenamefont {Bringewatt},\ and\ \citenamefont {Liu}}]{quantum_annealing}%
  \BibitemOpen
  \bibfield  {author} {\bibinfo {author} {\bibfnamefont {L.~P.}\ \bibnamefont {Garc{\'\i}a-Pintos}}, \bibinfo {author} {\bibfnamefont {L.~T.}\ \bibnamefont {Brady}}, \bibinfo {author} {\bibfnamefont {J.}~\bibnamefont {Bringewatt}},\ and\ \bibinfo {author} {\bibfnamefont {Y.-K.}\ \bibnamefont {Liu}},\ }\href@noop {} {\bibfield  {journal} {\bibinfo  {journal} {Physical Review Letters}\ }\textbf {\bibinfo {volume} {130}},\ \bibinfo {pages} {140601} (\bibinfo {year} {2023})}\BibitemShut {NoStop}%
\bibitem [{\citenamefont {Xie}\ \emph {et~al.}(2024)\citenamefont {Xie}, \citenamefont {Xu}, \citenamefont {Chen}, \citenamefont {Saito}, \citenamefont {Asai},\ and\ \citenamefont {Cai}}]{japanese}%
  \BibitemOpen
  \bibfield  {author} {\bibinfo {author} {\bibfnamefont {N.}~\bibnamefont {Xie}}, \bibinfo {author} {\bibfnamefont {J.}~\bibnamefont {Xu}}, \bibinfo {author} {\bibfnamefont {T.}~\bibnamefont {Chen}}, \bibinfo {author} {\bibfnamefont {Y.}~\bibnamefont {Saito}}, \bibinfo {author} {\bibfnamefont {N.}~\bibnamefont {Asai}},\ and\ \bibinfo {author} {\bibfnamefont {D.}~\bibnamefont {Cai}},\ }\href@noop {} {\bibfield  {journal} {\bibinfo  {journal} {arXiv preprint arXiv:2405.03173}\ } (\bibinfo {year} {2024})}\BibitemShut {NoStop}%
\end{thebibliography}%

\appendix

\section{Proof of Theorem~\ref{thm2}} \label{ap1}

\begin{widetext}
Firstly, note that if $F_Y(T) > \rho_{Th}(r)$, by Eq.~\eqref{eqC14}, $E_r(t)$ is monotonically non-decreasing and therefore since $P(\rho, r)$ is continuous, it is enough to prove that the monotonicity change at most one time on $F_Y(T) \leq \rho_{Th}(r)$ interval. To get it, using $\rho = F_Y(T)$ to simplify notation, the derivative can be computed as
\begin{equation}
    \label{eqApA01}
    \frac{d E_r(t)}{dT} = f_X^G(T) \left(- \frac{T (1 - \rho) + G_Y(T)}{(1 - \rho)^2} + T\frac{\eta}{1 - \rho} + G_Y(T) \frac{\eta_o(1 - \rho) + \eta}{(1-\rho)^2} \right),
\end{equation}
where $\frac{d\eta}{dT} = f_X^G(T) \eta_o$. By definition, $f_X^G(T)$ is non-negative, and then we can ignore it. Moreover, we can also ignore the points in which $\rho = 0$ since the expectation value is $\mu$. We define $R = 2r + 1$ and $D_1 = \sqrt{\rho(1 - \rho)}/(R \arcsin{(\sqrt{\rho})})$. Multiplying the expression by the convenient positive factor $(1-\rho) D_1/\eta$, we get
\begin{equation}
    \label{eqApA02}
    T \left(1 - \frac{1}{\eta} \right) D_1 - G_Y(T) \frac{1}{\eta} D_2 + G_Y(T) D,
\end{equation}
where
\begin{equation}
    \label{eqApA03}
    D = \left( \frac{\eta_o}{\eta} + \frac{1}{1 - \rho} \right) D_1 , \ D_2 = \frac{D_1}{1 - \rho}.
\end{equation}
At $T \rightarrow -\infty$, the derivative begins negative, which follows from $\eta > 1$, $G_Y(T) = 0$, and $D_1 > 0$. Furthermore, we demonstrate further that $D$ is negative. That way, if $T \geq 0$, the expression does not change the sign since all terms are non-negative. So, we assume $T < 0$. As $G_Y(T)$ is non-increasing and $\eta > 1$, if we prove that (i) $\eta$ is strictly decreasing (ii) $D_1$ is strictly decreasing, (iii) $D_2$ is strictly increasing, and (iv) $D$ is strictly decreasing and negative, all terms of Eq.~\eqref{eqApA02} are non-decreasing and the monotonicity of $E_r(t)$ change one time on this interval, proving the theorem. The minimum of the original discrete function is hit either on $\rho \leq \rho_{Th}(r)$ or in the smallest defined $F_Y(T)$ in which $\rho > \rho_{Th}(r)$.

Consider the substitution $u = R \arcsin{(\sqrt{\rho})}$. By the chain rule, since $\frac{du}{d\rho}$ is positive for all $\rho \leq \rho_{Th}(r)$, we can analyze directly the derivative with respect to $u$. The equivalent interval of $u$ is $0 < u \leq \pi/2$. Thus,
\begin{equation}
    \label{eqApA04} 
    \eta = \left(\frac{\sin{(u)}}{\sin{(u/R)}}\right)^2 , \ D_1 = \frac{\sin{(2u/R)}}{2u}, \ D_2 = \frac{\tan{(u/R)}}{u}, \ D = \frac{R \cot{(u)} - \cot{(u/R)}}{u} + \frac{\tan{(u/R)}}{u}.
\end{equation}

The proofs of the first three cases are omitted since they can be established directly by using derivatives. The last one, on the other hand, is more complicated. The limit of $D$ on $u \rightarrow 0$ can be computed with the Taylor series of $\sin{(x)}$, $\cos{(x)}$ and $1/(1 - x)$, resulting in $(4 - R^2)/(3 R)$, which is negative for any $r$. Therefore, demonstrating that $D$ is strictly decreasing over the entire interval implies that it is also negative. Then, taking the derivative and ignoring the positive factor $Ru^2$, we must satisfy
\begin{equation}
    \label{eqApA05}
  R \cot{(u/R)} - R^2 \cot{(u)} + u \csc^2{(u/R)} - R^2 u \csc^2{(u)} + u \sec^2{(u/R)} - R \tan{(u/R)} < 0.
\end{equation}
We can conclude in an analogous way as $D$ that the limit of the left side of Eq.~\eqref{eqApA05} is $0$. Considering the stronger condition that its derivative is also negative, with the trigonometric identity $\cos{(2x)} = \cos^2{(x)} - \sin^2{(x)}$ we can get
\begin{equation}
    \label{eqApA06}
       2u R^2 \left( \frac{\cos{(u)}}{\sin^3{(u)}} - \frac{8 \cos{(2u/R)}}{R^3 \sin^3{(2 u/R)}} \right) < 0   \Rightarrow \frac{\sin^3{(u)}}{\cos{(u)}} - \frac{R^3 \sin^3{(2 u/R)}}{8 \cos{(2u/R)}} > 0.
\end{equation}
Repeating the same argument, we finally have the condition
\begin{equation}
    \label{eqApA07}
    \tan^2{(u)} + 2 \sin^2{(u)} > \frac{R^2}{4}( \tan^2{(2u/R)} + 2 \sin^2{(2u/R)}).
\end{equation}

Now, consider the Taylor series expansion of $\tan^2{(x)}$ and $2 \sin^2{(x)}$. Using the trigonometric relations $\tan^2{(x)} + 1 = \frac{d\tan{(x)}}{dx}$ and $2\sin^2{(x)} = 1 - \cos{(2x)}$, we can conclude that
\begin{equation}
    \label{eqApA08}
     \tan^2{(x)} = \sum_{n = 1}^\infty 
    \frac{B_{2n + 2}(-4)^{n+1} (1 - 4^{n+1})}{(2n + 2)(2n)!} x^{2n}, \ 2 \sin^2{(x)} = \sum_{n = 1}^\infty \frac{(-1)^{n+1} 2^{2n}}{(2n)!} x^{2n},
\end{equation}
where $B_{2n}$ is the Bernoulli number. Note that both expansions are non-zero for terms with the same order, which allows a direct comparison. We are interested in establishing that all terms of $\tan^2{(x)} + 2 \sin^2{(x)}$ expansion are non-negative. The sine squared expansion alternates the sign, having a negative sign for $n$ even, while tangent squared expansion has all positive terms. Then, it is enough to show that all even $n$ terms of $\tan^2{(x)}$ expansion are equal or greater than the absolute value of the respective terms of $2 \sin^2{(x)}$. Comparing both expansions on Eq.~\eqref{eqApA08}, for $n = 2$, they are the same and followed by an induction argument that our claim holds for $n > 2$. 

To finish, we demonstrate that each expansion term on the left side of Eq.~\eqref{eqApA07} is equal to or larger than its counterpart on the expansion of the right side. To get that, consider an arbitrary term $n$ of the expansions. The argument of trigonometric functions on the left side is $x = u$ while the left side is $x = 2u/R$. Combining it with the constant multiplication factor $R^2/4$ on the right size, the term of the right diverges from the left by the factor $(2/R)^{2n-2}$. They are equal for $n = 1$, and as $R \geq 3$, the left side is larger for $n > 1$. Hence, since we prove that all expansion terms are non-negative, the left side is larger than the right, and the inequality of Eq.~\eqref{eqApA07} holds, establishing the theorem.

\section{Proof of Lemma~\ref{lm1}} \label{ap2}

By Eq.~\eqref{eqC05}, for a fixed $F_Y(T)$, the maximum possible $C_r(t)$ is given when we maximize the ratio $|G_Y(T)|/\sigma$. Eliminating the trivial cases of $F_Y(T) = 0$ and $F_Y(T) = 1$, the key idea of the proof is that the distribution that maximizes that ratio is a two-point distribution for all the range $0 < F_Y(T) < 1$.

To get it, we split the distribution into two parts by the threshold value defining $Y_{\leq T}$ as the random variable $Y$ given $Y \leq T$ and $Y_{> T}$ as the random variable $Y$ given $Y > T$. We also split the summation that computes $\sigma^2$ into two contributions
\begin{equation}
    \label{eqApB01}
\sigma^2_{\leq T} = \sum_{x \in R_Y: x \leq T} x^2 f_Y(x) , \ \sigma^2_{> T} = \sum_{x \in R_Y: x > T} x^2 f_Y(x),
\end{equation}
where $\sigma = \sqrt{\sigma^2_{\leq T} + \sigma^2_{> T}}$. Thus, using the definitions of the conditional random variables, 
\begin{equation}
    \label{eqApB02}
 \operatorname{E}[Y_{\leq T}] = \frac{G_Y(T)}{F_Y(T)}, \ \operatorname{E}[Y_{\leq T}^2] = \frac{\sigma^2_{\leq T}}{F_Y(T)}, \ \operatorname{E}[Y_{> T}] = -\frac{G_Y(T)}{1 - F_Y(T)}, \ \operatorname{E}[Y_{> T}^2] = \frac{\sigma^2_{> T}}{1 - F_Y(T)},
\end{equation}
to compute the bounds $\operatorname{E}[Y_{\leq T}^2] \geq \operatorname{E}[Y_{\leq T}]^2$ and $\operatorname{E}[Y_{< T}^2] \geq \operatorname{E}[Y_{< T}]^2$---that follow from Jensen's inequality---we have
\begin{equation}
    \label{eqApB03}
\frac{G_Y(T)^2}{F_Y(T)} \leq \sigma^2_{\leq T}, \ \frac{G_Y(T)^2}{1 - F_Y(T)} \leq \sigma^2_{> T}.
\end{equation}
Combining both bounds gives
\begin{equation}
    \label{eqApB04}
     \frac{|G_Y(T)|}{\sigma} \leq \sqrt{F_Y(T)(1 - F_Y(T))},
\end{equation}
and the equality is hit on the limit of Jensen's inequality if and only if both $Y_{\leq T}$ and $Y_{> T}$ are degenerate distributions.

\section{Proof of Lemma~\ref{lm2}} \label{ap3}

The expression $\eta_r(S_T, f_{X_q})$ for arbitrary depth is given by
\begin{equation}
    \label{eqApC01}
       \eta_r(S_T, f_{X_q}) = \sum_{k_\text{bra}, k_\text{ket} = 0}^{2^r - 1}  \exp{\left(i x_o \sum_{j \in P_\text{central}} \gamma_j\right)}  \prod_{P \in P_{bra}} \varphi_{X_q} \left( \sum_{j \in P} \gamma_j \right) \prod_{P \in P_\text{ket}} \varphi_{X_q} \left( \sum_{j \in P} \gamma_j \right) \prod_{-j: k_\text{bra}^j = 1}  B(\beta_j)  \prod_{j: k_\text{ket}^j = 1} B(\beta_j).
\end{equation}
To simplify the notation in this proof, we hide the subscript with the random variable on the probability distribution and the characteristic function. In contrast, we distinguish between various probability distributions, their characteristic functions, and specific components of their summations through superscripts, without defining explicitly random variables. We also say the phase separation operator \textit{computes} the probability distribution $f$ if the distribution associated with $q(k)$ is $f$. Furthermore, to compact the notation of Eq.~\eqref{eqApC01}, we group under the notation $\Phi(\varphi, N_\varphi, k)$ both products involving characteristic functions, and we group under the notation $\mathcal{B}(N_B, k)$ the exponential factor as well as both products involving $B(\beta)$. Thus,
\begin{equation}
    \label{eqApC02}
      \eta_r(S_T, f) = 
     \sum_{k_\text{bra}, k_\text{ket} = 0}^{2^r - 1} \mathcal{B}(N_B, k) \Phi(\varphi, N_\varphi, k),
\end{equation}
where $k$ is the ordered pair $(k_\text{bra}, k_\text{ket})$, $N_\varphi$ is the number of characteristic functions factors, and $N_B$ is the number of $B(\beta)$ factors.

Suppose by contradiction that for some $\epsilon > 0$, there is a choice of distribution $f^\text{O}(x)$ (original distribution) in which $\eta_r(S_T, f^\text{O}) = (2r + 1)^2 + \epsilon$. We fix the optimal variational parameters. Moreover, we can set $x_o = 1$ by a shifting location without loss of generality. Note that we can express the characteristic function of $f^\text{O}(x)$ as 
\begin{equation}
    \label{eqApC03}
     \varphi^\text{O}(\gamma) = \rho e^{i\gamma} + \varphi^{\text{rem}}(\gamma),
\end{equation}
where the first term represents the portion of the summation of the characteristic function for $S_T$ with ratio $\rho$, and $\varphi^{rem}(\gamma)$ is the remainder portion. Let $\delta$ be a rational such that $0 < \delta \leq 2 \rho$. We can rewrite $\varphi^{\text{O}}$ as 
\begin{equation}
    \label{eqApC04}
     \varphi^\text{O}(\gamma) = 0.5 \delta e^{i\gamma} + \varphi^\text{R}(\gamma),
\end{equation}
where $\varphi^\text{R}(\gamma) = (\rho - 0.5 \delta) e^{i\gamma} + \varphi^{\text{rem}}(\gamma)$ and $0.5 \delta e^{i\gamma}$ represents the portion of the summation for a subset $S_\delta$ of $S_T$. Since Grover-based QAOA preserves the equality of amplitudes in degenerate states during the unitary evolution, $\eta_r(S_\delta, f^\text{O}) = \eta_r(S_T, f^\text{O})$. 

Consider the following algorithm for the unstructured search problem with $m$ marked elements over $M$ solutions with $0.5 \delta = m/M$ and $r$ rounds. The $k$th diffusion operator is the sequential application of a phase separation operator that computes a target distribution $f^\text{T}(x)$ and the Grover mixer operator. Both with the fixed parameters of the $k$th layer of Grover-based QAOA. The distribution $f^\text{T}(x)$ has the characteristic function
\begin{equation}
    \label{eqApC05}
     \varphi^\text{T}(\gamma) = e^{i\theta \gamma} (\varphi^\text{rea}(\gamma) + \varphi^\text{R}(\gamma)).
\end{equation}
The term $\varphi^\text{rea}(\gamma)$ represents an arbitrary reassignment of the costs of the marked elements of the original distribution replacing $0.5\delta e^{i \gamma}$ and the factor $e^{i\theta \gamma}$ is a location shift of the distribution of size $\theta > 0$. The quantity $m$ can be chosen as the minimum number of marked elements required for computing the distribution $f^\text{T}(x)$.

On the other hand, for marked elements, the oracle reverses the action of the defined phase separation and then applies a phase shift of $e^{i \gamma}$, i.e., a mapping on the cost $1$. In practice, the oracle interrupts the computation of target distribution $f^\text{T}(x)$ of the phase separation on specific marked values in such a way that the combined action of diffusion and oracle operators encodes the value $1$. Consequently, the algorithm's performance depends on the positions of the marked elements, and thereby the average probability is unknown. For using the optimality of Hamann, Dunjko, and Wölk~\cite{generalOptimalGrover} on the search problem, we bound the minimum probability value in terms of the known performance of original distribution $f^\text{O}(x)$ by choosing values of $\delta$ and $\theta$ sufficiently small. 

Let $f^\text{sp}(x)$ (search problem) be a distribution computed by the combined action of the phase separation and the oracle for an arbitrary instance of the search problem algorithm. The characteristic function of $f^\text{sp}(x)$ can be expressed without loss of generality by
\begin{equation}
    \label{eqApC06}
     \varphi^\text{sp}(\gamma) = 0.5 \delta e^{i\gamma} + e^{i\theta \gamma} \varphi^\text{R}(\gamma) + \varphi^{\text{1}}(\gamma) - \varphi^{\text{2}}(\gamma),
\end{equation}
where $0.5 \delta e^{i\gamma}$ represents the oracle finding the marked elements; $e^{i\theta \gamma} \varphi^\text{R}(\gamma)$ is the portion of $f^\text{T}(x)$ computed up to the phase shifting $e^{i\theta \gamma}$ on the original characteristic function $\varphi^\text{O}(x)$; $\varphi^{\text{1}}(\gamma)$ represents the non-computed part of $f^\text{T}(x)$ on the original distribution $f^\text{O}(x)$ that is computed on distribution $f^\text{sp}(x)$; and $\varphi^{\text{2}}(\gamma)$, in contrast, represents the non-computed part of $f^\text{T}(x)$ on the distribution $f^\text{sp}(x)$ that is computed on $f^\text{O}(x)$. We denote 
\begin{equation}
    \label{eqApC07}
     \varphi^\text{eq}(\gamma) = 0.5 \delta e^{i\gamma} + e^{i\theta \gamma} \varphi^\text{R}(\gamma) , \ \varphi^\text{dif}(\gamma) = \varphi^{\text{1}}(\gamma) - \varphi^{\text{2}}(\gamma).
\end{equation}
The first definition, $\varphi^\text{eq}(\gamma)$ (equal), represents the original distribution $f^\text{O}(x)$ up to the location shifting on $f^\text{R}(x)$. The second definition, $\varphi^\text{dif}(\gamma)$ (different), represents the divergence between positions of $f^\text{O}(x)$ and $f^\text{sp}(x)$. If the marked elements on both distributions are at the same positions, $\varphi^\text{dif}(\gamma) = 0$, and since at worst case it diverges on all their $2m$ marked elements, $|\varphi^\text{dif}(\gamma)| \leq \delta$. By Eqs.~\eqref{eqApC04} and~\eqref{eqApC07}, $\varphi^\text{eq}(\gamma)$ can be written as $\varphi^\text{eq}(\gamma) = \varphi^\text{O}(\gamma) + (e^{i\theta \gamma} - 1) \varphi^\text{R}(\gamma)$ and then we can write $\varphi^\text{sp}(\gamma)$ from Eq.~\eqref{eqApC06} in terms of $\varphi^\text{O}(\gamma)$ by
\begin{equation}
    \label{eqApC08}
     \varphi^\text{sp}(\gamma) = \varphi^\text{O}(\gamma) + (e^{i\theta \gamma} - 1) \varphi^\text{R}(\gamma) + \varphi^\text{dif}(\gamma).
\end{equation}
Using $\gamma^{(j)}$ as a generic notation for the argument of the $j$th characteristic function on the product (for any arbitrary order) and setting $\vartheta = \theta \gamma_{max}$ with $\gamma_{max}$ being the maximum absolute value of an argument, we bound $\Phi(\varphi^\text{sp}, N_\varphi, k)$ by
\begin{equation}
    \label{eqApC09}
    \begin{split}
  \Phi(\varphi^\text{sp}, N_\varphi, k) & = \prod_{j = 1}^{N_\varphi} \varphi^\text{O}(\gamma^{(j)}) + ((e^{i\theta \gamma^{(j)}} - 1) \varphi^\text{R}(\gamma^{(j)}) + \varphi^\text{dif}(\gamma^{(j)})) \\ & =  \sum_{\bm{x}} \prod_{j = 1}^{N_\varphi} (\varphi^\text{O}(\gamma^{(j)}))^{1- x_j} ((e^{i\theta \gamma^{(j)}} - 1) \varphi^\text{R}(\gamma^{(j)}) + \varphi^\text{dif}(\gamma^{(j)}))^{x_j} \\ & \geq \Phi(\varphi^\text{O}, N_\varphi, k) - 2^{N_\varphi} (\delta + \vartheta),
    \end{split}
\end{equation}
where $\bm{x} = (x_1, \ldots , x_{N_\varphi})$ is a $N_\varphi$-bit string. The inequality follows from the individual bounds $|\varphi^\text{O}(\gamma^{(j)})| \leq 1$, $|\varphi^\text{R}(\gamma^{(j)})| \leq 1$,  $|\varphi^\text{dif}(\gamma^{(j)})| \leq \delta$, $|e^{i\theta \gamma^{(j)}} - 1| \leq \vartheta$. The last individual bound follows from $\cos{(x)} \geq 1 - x^2/2$. The maximum value of both $N_\varphi$ and $N_B$ are $2r$ and $|\mathcal{B}(N_B, k)|$ is bounded by $2^{N_B}$. Combining those results with Eq.~\eqref{eqApC02} gives
\begin{equation}
    \label{eqApC10}
      \eta_r(S_\delta, f^\text{sp}) \geq \eta_r(S_\delta, f^\text{O}) - 64^r (\delta + \vartheta) = (2r + 1)^2 + \epsilon - 64^r (\delta + \vartheta).
\end{equation}
For any $r$, there is a choice of $\delta$ and $\vartheta$ in which $\epsilon > 64^r (\delta + \vartheta)$. Combining it with the fact that the maximum amplification on Grover's algorithm is $(2r + 1)^2$, the optimality of Hamann, Dunjko, and Wölk~\cite{generalOptimalGrover} on the maximum average probability on the unstructured search problem with multiple marked elements is contradicted and establishes the lemma.

Xie et al.~ \cite{japanese}, after the first preprint version of our paper, introduced an alternative proof for Lemma~\ref{lm2} using mathematical optimization techniques to identify critical points through partial derivatives.

\section{Proof of Theorem~\ref{thm6}} \label{ap4}

In a similar way to the GM-Th-QAOA, the proof consists of concluding that the binary function achieves the maximum $C_r$, but now by slightly modifying the argument of Lemma~\ref{lm1}. In that case, we consider $\tau_1$ and $\tau_2$ from Theorem~\ref{thm5} and define the random variables $X_{1}$ and $X_{2}$, where the probability distribution $f_{X_1}(x)$ is given by $f_{X_1}(x) = f_{X_c}(x)$ for all $x \neq \tau_2 \in R_{X_C}$, $f_{X_1}(\tau_2) = 1/(2r+1)^2 - F_{X_c}(\tau_1)$, and the remainder probability to reach the summation of probabilities equal to $1$ can be arbitrarily assigned on values above $\tau_2$; and the probability distribution $f_{X_2}(x)$ is $f_{X_2}(x) = f_{X_c}(x)$ for all $x \neq \tau_2 \in R_{X_C}$, $f_{X_2}(\tau_2) = f_{X_c}(\tau_2) - f_{X_1}(\tau_2)$, with the remainder probability again arbitrarily assigned but for values below $\tau_2$. The random variable $X_1$ allows us to write the bound of Theorem~\ref{thm5} as
\begin{equation}
    \label{eqApD01}
E_r \geq \operatorname{E}[X_1| X_1 \leq \tau_2] = \frac{G_{X_1}(\tau_2)}{F_{X_1}(\tau_2)},
\end{equation}
with $F_{X_1}(\tau_2) = 1/(2r + 1)^2$. Now, we can apply the same proceeding of Lemma~\ref{lm1} on the random variables $X_{\leq \tau_2}$, the random variable $X_1$ given $X_1 \leq \tau_2$; and $X_{\geq \tau_2}$, the random variable $X_2$ given $X_2 \geq \tau_2$, to get the same conclusion. Then, take binary function on Eq.~\eqref{eqApD01} gives $C(r) \leq 2 \sqrt{r(r + 1)}$.

\end{widetext}

\nocite{*}

\end{document}